\documentclass[11pt]{article}
\usepackage{authblk}
\usepackage{caption}
\captionsetup[figure]{font=small}
\usepackage[caption=false]{subfig}

\usepackage{amsmath,amsfonts,amsthm,amssymb,graphics,graphicx,mathtools,epstopdf,etoolbox,indentfirst,enumitem,mleftright,xcolor,booktabs,footnote,relsize,microtype}

\usepackage[scr=euler]{mathalpha} 
\usepackage{accents,yhmath} 

\makeatletter
\renewcommand{\paragraph}{%
  \@startsection{paragraph}{4}%
  {\z@}{1.25ex \@plus 1ex \@minus .2ex}{-1em}%
  {\normalfont\normalsize\bfseries}%
}
\makeatother

\setlength{\topmargin}{-0.125in}
\setlength{\headheight}{0in}
\setlength{\headsep}{0in}
\setlength{\textheight}{9in}
\setlength{\oddsidemargin}{0in}
\setlength{\textwidth}{6.5in}

\setlength{\parskip}{1.5mm}

\usepackage[
backend=bibtex,
style=alphabetic,
giveninits=true, 
maxbibnames=99,
minalphanames=3,
date=year
]{biblatex}
\addbibresource{Info_Acc.bib}

\usepackage{hyperref} 
\usepackage{doi} 

\renewbibmacro{in:}{}

\AtEveryBibitem{\clearfield{issue}} 

\newbibmacro{string+doiurl}[1]{%
	\iffieldundef{doi}{%
		\iffieldundef{url}{
			#1%
		}{%
			\href{\thefield{url}}{#1}%
		}%
	}{%
		\href{http://doi.org/\thefield{doi}}{#1}%
	}%
}
\DeclareFieldFormat{title}{\usebibmacro{string+doiurl}{\mkbibemph{#1}}}
\DeclareFieldFormat[article,thesis,incollection,inproceedings,manual]{title}%
{\usebibmacro{string+doiurl}
	{#1} 
}
\ExecuteBibliographyOptions{url=false,doi=false}


\usepackage{algorithm} 
\usepackage[noend]{algpseudocode}
\floatname{algorithm}{Protocol} 
\algrenewcommand\alglinenumber[1]{\normalsize #1.} 

\newcounter{algsubstate}
\makeatletter

\makeatother

\makeatletter

\makeatother

\definecolor{darkmagenta}{rgb}{0.85, 0, 0.45}
\hypersetup{
colorlinks = true,
citecolor= blue,
urlcolor= blue,
linkcolor = blue
}

\newcommand{\ket}[1]{\left| #1 \right>}
\newcommand{\bra}[1]{\left< #1 \right|}
\newcommand{\ketbra}[2]{\ket{#1} \!\! \bra{#2}}
\newcommand{\pure}[1]{\ketbra{#1}{#1}}
\newcommand{\tr}[2][]{\operatorname{Tr}_{#1}\!\left[#2\right]} 
\newcommand{\Tr}{\operatorname{Tr}} 

\newcommand{\defvar}{\coloneqq} 
\newcommand{\dop}[1]{\operatorname{S}_{#1}} 
\newcommand{\eps}{\epsilon}
\newcommand{\freq}{\operatorname{freq}}
\newcommand{\id}{\mathbb{I}} 
\newcommand{\idmap}{\operatorname{id}} 

\newcommand{\norm}[1]{\left\lVert#1\right\rVert} 
\newcommand{\pd}{P} 

\newcommand{\Pos}{\operatorname{Pos}} 
\newcommand{\suchthat}{\text{ s.t.}} 
\newcommand{\supp}{\operatorname{supp}} 
\newcommand{\term}[1]{\textup{\textbf{#1}}}

\newcommand{\Renyi}{R\'{e}nyi}
\newcommand{\mbf}[1]{\mathbf{#1}} 
\newcommand{\bsym}[1]{\boldsymbol{#1}} 
\newcommand{\Dmax}{D_\infty}
\newcommand{\Hminup}[1][]{H_\infty^{\uparrow #1}}
\newcommand{\Imaxnone}[1][]{I_\infty^{#1}}
\newcommand{\ImaxDA}[1][]{I_\infty^{\downarrow #1}}
\newcommand{\ImaxDDA}[1][]{I_\infty^{\downarrow\downarrow #1}}
\newcommand{\IDA}{I^\downarrow}
\newcommand{\IDDA}{I^{\downarrow\downarrow}}
\newcommand{\Idiff}{I^{\downarrow,\mathrm{diff}}}

\newcommand{\EATchann}{\mathcal{M}}
\newcommand{\LEAKchann}{\mathcal{L}}
\newcommand{\CS}{\overline{C}} 
\newcommand{\copyCS}{\overline{\overline{C}}}
\newcommand{\cS}{\bar{c}} 
\newcommand{\alphCS}{\overline{\mathcal{C}}}
\newcommand{\CP}{\widehat{C}} 

\newcommand{\cP}{\hat{c}} 
\newcommand{\alphCP}{\widehat{\mathcal{C}}}


\newcommand{\stabR}{\widetilde{Z}}


\newtheorem{remark}{Remark}[section]
\newtheorem{theorem}{Theorem}[section]
\newtheorem{lemma}{Lemma}[section]
\newtheorem{corollary}{Corollary}[section]

\newtheorem{fact}{Fact}[section]
\theoremstyle{definition} 
\newtheorem{definition}{Definition}[section]

\newcommand\numberthis{\addtocounter{equation}{1}\tag{\theequation}}

\begin{document}

\title{\textbf{Mutual information chain rules for security proofs robust against device imperfections}}
\renewcommand\Affilfont{\itshape\small} 

\author[1]{Amir Arqand}
\author[2]{Tony Metger}
\author[1]{Ernest Y.-Z.\ Tan}
\affil[1]{Institute for Quantum Computing and Department
of Physics and Astronomy, University of Waterloo, Waterloo, Ontario N2L 3G1, Canada.}
\affil[2]{Institute for Theoretical Physics, ETH Z{\"u}rich, 8093 Z{\"u}rich, Switzerland}

\date{}

\maketitle

\begin{abstract}
In this work we derive a number of chain rules for mutual information quantities, suitable for analyzing quantum cryptography with imperfect devices that leak additional information to an adversary. First, we derive a chain rule between smooth min-entropy and smooth max-information, which improves over previous chain rules for characterizing one-shot information leakage caused by an additional conditioning register. Second, we derive an ``information bounding theorem'' that bounds the Rényi mutual information of a state produced by a sequence of channels, in terms of the Rényi mutual information of the individual channel outputs, similar to entropy accumulation theorems. In particular, this yields simple bounds on the smooth max-information in the preceding chain rule. Third, we derive chain rules between Rényi entropies and Rényi mutual information, which can be used to modify the entropy accumulation theorem to accommodate leakage registers sent to the adversary in each round of a protocol. We show that these results can be used to handle some device imperfections in a variety of device-dependent and device-independent protocols, such as randomness generation and quantum key distribution.
\end{abstract}

\section{Introduction}
Standard security proofs for quantum cryptography protocols (such as quantum key distribution (QKD) or randomness generation) typically assume that after a measurement is performed, the resulting raw key data remains secret and independent of various processes in subsequent rounds (apart from standard techniques to handle publicly announced classical information)~\cite{rennerthesis}. However, when moving towards practical device implementations, such an assumption may no longer be a proper description of the actual device,  
due to issues such as photon leakage from photodetectors~\cite{KZMW01,PCS+18}, or source imperfections and correlations~\cite{LPK23,MCA23,MD24,CNLT24} (in which the states generated by a source in a prepare-and-measure protocol are not exactly the desired ones, or have correlations to previous setting choices).
It would hence be more realistic to take into account possible imperfections of the device in security proofs. 
Thus, having a method to handle security proofs with imperfect devices is crucial. Such a method should also be robust against a small amount of imperfection. While such a property might sound intuitive, we highlight that phenomena such as information locking~\cite{KRBM07,DHL+04,Win17} demonstrate that there are situations where even a small amount of leaked information can have drastic effects. 

To handle this issue, in this work we propose a variety of chain rules involving mutual information quantities, which can quantify the amount of information leakage from imperfect devices. While this is most intuitively applied to photon leakage or similar mechanisms, we show that it can also be applied to other imperfections such as source correlations as well. From an information-theoretic perspective, our bounds are ``natural'' in the sense that they involve mutual-information quantities, quantifying the amount of correlation between the leakage and the secret data.
They should also be simple to apply in practice, as our final results simply consist of directly subtracting off some easily computed corrections from the keyrates obtained from existing entropic proof techniques. 

We start by proving a chain rule that connects smooth min-entropy and smooth max-information, to characterize the information leakage caused by an extra conditioning register in a one-shot setting. (The smooth min-entropy is the quantity that characterizes extractable secure key length under a composable security definition, hence avoiding issues such as information locking.) Previously, one approach to remove a conditioning leakage register in smooth min-entropy was to use a fairly crude chain rule that just subtracts off the log-dimension of the leakage register. In~\cite{VDT13}, a tighter bound was proved, in which the leakage information was measured by smooth max-entropy. However, even though that bound is tighter than just the log-dimension, it is still not the tightest bound. Intuitively, when leakage happens, what should matter is the amount of its correlation to the secret data. 
This observation is supported by the chain rules for von Neumann entropies, where the effect of removing a conditioning register is characterized by conditional quantum mutual information (CQMI)~\cite{CA98}. 

Unfortunately, there has not yet been a unique well-behaved definition of smooth max-CQMI, and in this paper we do not aim to introduce a new definition for this quantity. However, it is well known that an upper bound on CQMI in the von Neumann regime is the mutual information between appropriately chosen registers, which is also a measure of correlation~\cite{KW20,DH11,DHW13}; furthermore, unlike smooth max-CQMI, there are well-behaved definitions of smooth max-information~\cite{CBR14}. Motivated by these results, we show that one can at least bound the effect of leakage via a smooth max-information between the leakage register and all the other registers; specifically, we prove a chain rule of the form
\begin{align}\label{eq:1shotinformal}
\text{(informal summary)} \quad
H_\mathrm{min}^{\epsilon}(S|LE)_\rho\geq H_\mathrm{min}^{\epsilon'}(S|E)_\rho-I_\mathrm{max}^{\epsilon''}(SE;L)_\rho- \text{[small corrections]},
\end{align}
where the registers can be interpreted as $S$ being the secret data, $L$ being the leakage registers, and $E$ being some other side-information. We furthermore argue that in fact the leakage term can be sharpened to something roughly like a smooth max-CQMI, though the resulting quantity may be difficult to analyze and we do not attempt to do so here. 

The smooth max-information in the above context would usually involve a large number of registers leaked over the course of a protocol. To obtain a simple bound on it, we next derive an ``information bounding theorem''. Informally, suppose we have a bipartite state that is produced by applying a sequence of channels to some initial state, and we are interested in bounding the \Renyi\ mutual information between the registers in the final state. Then, our information bounding theorem provides us with a simple bound based on \Renyi\ mutual information of the individual rounds. Specifically, we derive a bound of the following form:
\begin{align}\label{eq:introRenyiIAT}
	\IDA_\alpha(Z;L_1^n)_\rho\leq\sum_{i=1}^n\sup_{\omega_{\stabR R_{i-1}}}\IDA_\alpha(\stabR ;L_i)_{\mathcal{M}_i(\omega)},
\end{align}
where $\{\mathcal{M}_i:R_{i-1}\rightarrow R_iL_i\}_i$ are a sequence of quantum channels such that $\rho=\mathcal{M}_n\circ\cdots\circ\mathcal{M}_1(\rho^0)$ for some initial state $\rho^0$, and $Z$ is some overall ``reference'' register not acted on by the channels, while $\stabR$ is a purifying register for any individual channel.
With this in hand, and existing results that connect smooth max-information to the \Renyi\ mutual information, we can bound the smooth max-information via a bound of the form
\begin{align}\label{eq:introIAT}
\text{(informal summary)} \quad  I_\mathrm{max}^{\epsilon}(Z;L_1^n)_\rho \leq\sum_{i=1}^n\sup_{\omega_{\stabR R_{i-1}}}I(\stabR ;L_i)_{\mathcal{M}_i(\omega)} + O(\sqrt{n}),
\end{align}
connecting the smooth max-information to the von Neumann mutual information. This can then be applied in the chain rule~\eqref{eq:1shotinformal} for some protocols, under some assumptions about their structure.

This result is similar to a form of the entropy accumulation theorem (EAT)~\cite{DFR20} or its generalized version (GEAT)~\cite{MFSR22}, except that here the quantity of interest is mutual information, instead of conditional entropy or divergence.\footnote{We note that the information bounding theorem does not seem to be a straightforward consequence of the {\Renyi} divergence chain rules in the GEAT~\cite{MFSR22} (although our proof techniques are instead somewhat similar to those for the original EAT~\cite{DFR20}). This is because the GEAT chain rules are formulated in terms of channel divergences, i.e.~optimizations over all possible input states to a channel, and it does not seem straightforward to construct a channel that outputs states with the correct ``product structure'' to yield a {\Renyi} mutual information in the sense of $D_\alpha(\rho_{AB}||\rho_A\otimes\rho_B)$ or $\inf_{\sigma_B} D_\alpha(\rho_{AB}||\rho_A\otimes\sigma_B)$ or $\inf_{\omega_A,\sigma_B} D_\alpha(\rho_{AB}||\omega_A\otimes\sigma_B)$ (see Def.~\ref{def:sandwiched entropy and info}).} (Also, since here we are \emph{upper}-bounding the final quantity of interest, we refer to it as a ``bounding'' theorem rather than ``accumulation''; see also Remark~\ref{remark:MMI}.) It should be noted that this information bounding theorem is a stand-alone statement that can bound the \Renyi\ mutual information in an $n$-round process, and thus might have other applications in information theory, in scenarios where one might be interested in breaking down the $n$-round \Renyi\ mutual information into its individual rounds.

The preceding results have some limitations on the way one can handle leakage processes with them, in that they essentially analyze the leakage registers as a ``separate'' process from the main protocol. This is restrictive, since for some device imperfections a more realistic description of the protocol would be to allow the leakage to happen during the protocol itself, and it does not seem straightforward to apply the above tools in various such cases.\footnote{However, for some tasks such as randomness generation or expansion, where one sometimes assumes that an adversary does not interact with the devices apart from holding some initial side-information register, the preceding results would indeed already suffice.} For this purpose, we derive yet another chain rule that relates \Renyi\ conditional entropies to \Renyi\ mutual information, in the sense that for states satisfying a suitable channel structure, we can write
\begin{align}
\text{(informal summary)} \quad 
H_\alpha^\uparrow(A|BC)_\rho \geq H_\alpha^\uparrow(A|B)_\rho-\xi_\alpha,
\end{align}
where $\xi_\alpha$ is essentially a {\Renyi} mutual information involving $C$, after taking the ``worst-case'' value over possible channel outputs.
We apply this chain rule iteratively over sequences of channels to develop variations of entropy accumulation, where leakage is allowed to happen during the protocol and an adversary can interact with it adaptively, and we compensate for it by simply subtracting the sum of $\xi_\alpha$ values across rounds. 

\begin{figure}
	\centering
	\includegraphics[width =\columnwidth]{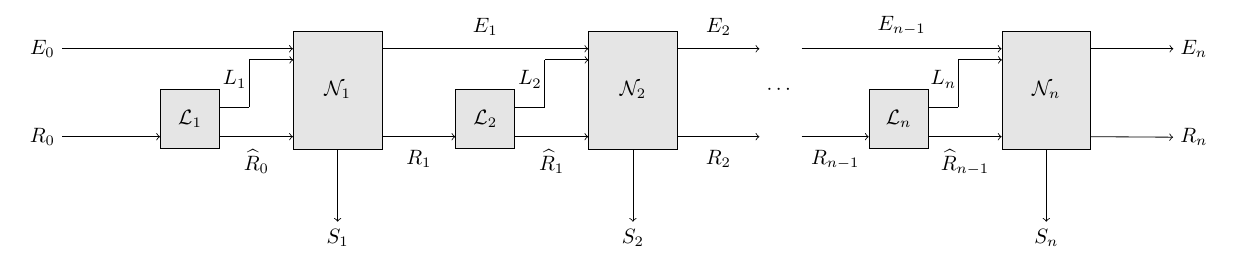}
	\caption{Schematic setup of a QKD protocol with leakage. The registers $E_i$ are the adversary's side information, $L_i$ represent some ``leaked'' information registers in each round, and $R_i$ are memory registers stored between rounds without being available to the adversary except via the leakage. The channels  $\mathcal{L}_i$ describe the leakage processes, while the channels $\mathcal{N}_i$ represent all other actions in each round that satisfy a non-signalling condition from the memory registers to the side-information registers. (Note that this setup is most compatible with the framework of entropy accumulation with leakage which we describe in Sec.~\ref{subsec:chainwithleakage} and Appendix~\ref{app:great}.)}
	\label{fig.leakageEAT}
\end{figure}

This result allows us to analyze protocols where a leakage process happens in each round, and the adversary is allowed to adapt their attacks arbitrarily using the leaked information (Fig.~\ref{fig.leakageEAT}). Furthermore, our results should be very simple to apply in practice, since they consist entirely of just subtracting some easily-computed value from the existing keyrate calculations. We describe some simple examples of device imperfections we can handle with our approach, though we leave a detailed analysis for future work. As our analysis is compatible with the entropy accumulation framework, it applies to both device-dependent and device-independent (DI) protocols, with devices that have imperfections outside of the ``standard assumptions'' in those scenarios.

\subsection{Prior work}

A {\Renyi} chain rule with a somewhat similar form to our one-shot chain rule in Eq.~\eqref{eq:1shotinformal} was derived in~\cite{MT20}. Specifically, from their results, we can write the following (or some variants depending on which {\Renyi} conditional entropy or mutual information definitions are used) for suitable {\Renyi} parameters $\alpha,\alpha',\alpha''$:
\begin{align}
H_{\alpha''}(S|LE)_\rho \geq H_{\alpha'}(S)_\rho - \IDA_\alpha(LE;S)_\rho \geq H_{\alpha'}(S|E)_\rho - \IDA_\alpha(LE;S)_\rho ,
\end{align} 
where the second inequality holds by data-processing. Alternatively, by applying another instance of their chain rule for some other $\alpha''',\alpha''''$, we could instead get
\begin{align}
H_{\alpha''}(S|LE)_\rho \geq H_{\alpha'}(S)_\rho - \IDA_\alpha(LE;S)_\rho \geq H_{\alpha'''}(S|E)_\rho + \IDA_{\alpha''''}(E;S)_\rho - \IDA_\alpha(LE;S)_\rho .
\end{align} 
However, inspecting the grouping of registers in the above ``leakage'' terms shows that they differ nontrivially from our chain rule~\eqref{eq:1shotinformal}, in a way that is harder to use in our intended applications. 

Furthermore, 
the question of quantum cryptography with leaky devices has been considered in various works:~\cite{KZMW01,PCS+18,TZWP22,TPWP21,PPWT22} considered various forms of leakage behaviour in IID scenarios,~\cite{JK22} considered a leakage model accounting for a bounded number of qubits, and~\cite{arx_Tan23} considered various forms of ``probabilistic'' constrained leakage. Our results should cover a more general class of models than those works.
Also, various source imperfection models were studied in~\cite{LPK23,MCA23,CNLT24}, while approximate versions of entropy accumulation (which could handle various device imperfections) were developed in~\cite{MD23}, and a related approach was used in~\cite{MD24} to analyze some forms of source imperfection. The class of possible imperfections we consider is not straightforwardly comparable to theirs: for instance,~\cite{LPK23,MCA23,CNLT24} consider the closeness of the generated states to some target states, ~\cite{MD23} uses a diamond-norm constraint together with a dimension bound, while~\cite{MD24} uses a form of ``testing'' on the source. In contrast, we choose to quantify the leakage by bounding the {\Renyi} mutual information. The constraints in our approach could perhaps be obtained from theirs or vice versa in some scenarios, but the conversions would be complicated and likely to be highly suboptimal (with the exception of the imperfections considered in~\cite{LPK23,MCA23,CNLT24}, which we believe could also be incorporated into our approach in principle; see Sec.~\ref{sec:examples} later). However, we believe that our results should at least yield simpler bounds in the contexts where they apply.

\subsection{Organization}

The rest of the paper is organized as follows. In Sec.~\ref{sec:notation} we lay out the notations and definitions. In Sec.~\ref{sec:Imax_chain} we present the chain rule between smooth min-entropy and smooth max-information, then follow up in  Sec.~\ref{sec:Info_Acc} with the information bounding theorem that allows us to bound the smooth max-information. In Sec.~\ref{sec:GEATleakage} we present the version of entropy accumulation that accommodates leakage of information within each round. Finally, in Sec.~\ref{sec:examples} we describe some examples of device imperfections that can be handled with our techniques.

\section{Preliminaries}
\label{sec:notation}

\begin{table}
\caption{List of notation}\label{tab:notation}
\def\arraystretch{1.5} 
\setlength\tabcolsep{.28cm}
\begin{tabular}{c l}
\toprule
\textit{Symbol} & \textit{Definition} \\
\toprule
$\log$ & Base-$2$ logarithm \\
\hline
$H$ & Base-$2$ von Neumann entropy \\
\hline
$\norm{\cdot}_p$ & Schatten $p$-norm \\
\hline
$\left|\cdot\right|$ & Absolute value of operator; $\left|M\right| \defvar \sqrt{M^\dagger M}$ \\
\hline
$A\perp B$ & $A$ and $B$ are orthogonal; $AB=BA=0$ \\
\hline
$X\geq Y$ (resp.~$X>Y$) & $X-Y$ is positive semidefinite (resp.~positive definite)\\
\hline
$\Pos(A)$
& Set of positive semidefinite operators on register $A$\\
\hline
$\dop{=}(A)$ (resp.~$\dop{\leq}(A)$) & Set of normalized (resp.~subnormalized) states on register $A$ \\
\hline
$A_j^k$ & Registers $A_j \dots A_k$ \\
\toprule
\end{tabular}
\def\arraystretch{1}
\end{table}

We list some basic notation in Table~\ref{tab:notation}.
Apart from the notation in 
that table,
we will also need to use some other concepts, which we shall define below, and briefly elaborate on in some cases. 
In this work, we will assume that all systems are finite-dimensional, but we will not impose any bounds on the system dimensions unless otherwise specified. 
All entropies are defined in base~$2$.
For a channel $\mathcal{E}$ from register $Q$ to register $Q'$, we will often write the abbreviated notation
\begin{align}
\mathcal{E}: Q \to Q',
\end{align}
rather than writing out the formal statement of it being a CPTP linear map $\mathcal{E}: \operatorname{End}(\mathcal{H}_Q) \to \operatorname{End}(\mathcal{H}_{Q'})$ (where $\operatorname{End}(\mathcal{H}_Q)$ is the set of linear operators on the Hilbert space $\mathcal{H}_{Q}$).
Also, throughout this work we will often leave tensor products with identity channels implicit; e.g.~given a channel $\mathcal{E}:Q\to Q'$, we often use the compact notation
\begin{align}
\mathcal{E}(\rho_{QR}) \defvar (\mathcal{E} \otimes \idmap_R)(\rho_{QR}).
\end{align}

\begin{definition}
A state $\rho \in \dop{\leq}(CQ)$ is said to be \term{classical on $C$} (with respect to a specified basis on $C$) 
if it is in the form 
\begin{align}
\rho_{CQ} = \sum_c \lambda_c \pure{c} \otimes \sigma_c,
\label{eq:cq}
\end{align}
for some normalized states $\sigma_c \in \dop{=}(Q) $ and weights $\lambda_c \geq 0$, with $\ket{c}$ being the 
specified basis states on $C$. In most circumstances, we will not explicitly specify this ``classical basis'' of $C$, 
leaving it to be implicitly defined by context.
It may be convenient to absorb the weights $\lambda_c$ into the states $\sigma_c$, writing them as subnormalized 
states $\omega_c = \lambda_c\sigma_c \in \dop{\leq}(Q)$ instead. 
\end{definition}

\begin{definition}\label{def:cond}
(Conditioning on classical events) For a state $\rho \in \dop{\leq}(CQ)$ classical on $C$, written in the form
$\rho_{CQ} = \sum_c \pure{c} \otimes \omega_c$ 
for some $\omega_c \in \dop{\leq}(Q)$,
and an event $\Omega$ defined on the register $C$, we will define a corresponding \term{partial state} and \term{conditional state} as, respectively,
\begin{align}
\rho_{\land\Omega} \defvar \sum_{c\in\Omega} \pure{c} \otimes \omega_c, \qquad\qquad \rho_{|\Omega} \defvar \frac{\tr{\rho}}{\tr{\rho_{\land\Omega}}} \rho_{\land\Omega} = \frac{
\sum_{c} \tr{\omega_c}
}{\sum_{c\in\Omega} \tr{\omega_c}} \rho_{\land\Omega} .
\end{align}
The process of taking partial states is commutative and ``associative'', in the sense that for any events $\Omega,\Omega'$ we have $(\rho_{\land\Omega})_{\land\Omega'} = (\rho_{\land\Omega'})_{\land\Omega} = \rho_{\land(\Omega\land\Omega')}$; hence for brevity we will denote all of these expressions as
\begin{align}
\rho_{\land\Omega\land\Omega'} \defvar (\rho_{\land\Omega})_{\land\Omega'} = (\rho_{\land\Omega'})_{\land\Omega} = \rho_{\land(\Omega\land\Omega')}.
\end{align}
On the other hand, some disambiguating parentheses are needed when combined with taking conditional states (due to the normalization factors).
\end{definition}

In light of the preceding two definitions, for a normalized state $\rho \in \dop{=}(CQ)$ that is classical on $C$, it is reasonable to write it in the form
\begin{align}
\rho_{CQ} = \sum_c \rho(c) \pure{c} \otimes \rho_{Q|c},
\label{eq:cqstateprobs}
\end{align}
where $\rho(c)$ denotes the probability of $C=c$ according to $\rho$, and $\rho_{Q|c}$ can indeed be interpreted as the conditional state on $Q$ corresponding to $C=c$, i.e.~$\rho_{Q|c} = \tr[C]{\rho_{|\Omega}}$ where $\Omega$ is the event $C=c$. 

\begin{definition}
	\label{def:cond_state}
	For any $\rho\in\dop{=}(AB)$ we write $\rho_{A|B}$ to denote \term{the state of $A$ conditioned on $B$}, which is defined as:
	\begin{align}
		\rho_{A|B}\defvar\rho_B^{-\frac{1}{2}}\rho_{AB}\rho_B^{-\frac{1}{2}}.
	\end{align} 
\end{definition}

We highlight that there is a minor terminology issue regarding ``conditional states'' here, in that Def.~\ref{def:cond} involves states conditioned on an \emph{event} (or a specific value of some classical register), while Def.~\ref{def:cond_state} involves states conditioned on a \emph{register}. Qualitatively, to draw an analogue to classical probability distributions $P_{AB}$ on random variables $AB$, the former is for instance about the notion of the conditional distributions $P_{A|b}$ conditioned on a specific event $B=b$, while the latter is about the notion of the conditional distribution $P_{A|B}$, which is a random variable that is a function of the random variable $B$. In most parts of this work we will only be needing the latter (Def.~\ref{def:cond_state}), and in places such as Fact~\ref{fact:classmix} and Appendix~\ref{app:great} where we need the former (Def.~\ref{def:cond}), the intended interpretation should be clear from context.

The following definitions of {\Renyi} divergences and entropies are reproduced from~\cite{Tom16}, and coincide with those in~\cite{DFR20,DF19,arx_MFSR22} for normalized states.

\begin{definition}\label{def:sandwiched divergence}
({\Renyi} divergence)
For any $\rho,\sigma\in\Pos(A)$ with $\tr{\rho}\neq0$, and $\alpha\in(0,1)\cup (1,\infty)$, the (sandwiched) \Renyi\ divergence between $\rho$, $\sigma$ is defined as:
\begin{align}
    \label{eq:sand_renyi_div}
    D_\alpha(\rho||\sigma)\defvar\begin{cases}
    \frac{1}{\alpha-1}\log\frac{\tr{ \left(\sigma^{\frac{1-\alpha}{2\alpha}}\rho\sigma^{\frac{1-\alpha}{2\alpha}}\right)^\alpha}}{\tr{\rho}} &\left(\alpha < 1\ \wedge\ \rho\not\perp\sigma\right)\vee \left(\supp(\rho)\subseteq\supp(\sigma)\right) \\ 
    +\infty & \text{otherwise},
    \end{cases}  
\end{align}
where for $\alpha>1$ the $\sigma^{\frac{1-\alpha}{2\alpha}}$ terms are defined via the Moore-Penrose pseudoinverse if $\sigma$ is not full-support~\cite{Tom16}.
The above definition is extended to $\alpha \in \{0,1,\infty\}$ by taking the respective limits, and 
the $\alpha=\infty$ case is usually referred to as the \term{max-divergence}.
For the $\alpha=1$ case, it reduces to the Umegaki divergence:
\begin{align}
    \label{eq:umegaki_div}
    D(\rho||\sigma)\defvar\begin{cases}
        \frac{\tr{\rho\log\rho-\rho\log\sigma}}{\tr{\rho}} & \supp(\rho)\subseteq\supp(\sigma)\\
    +\infty & \text{otherwise}. 
    \end{cases}
\end{align}
For any two classical probability distributions $\mbf{p},\mbf{q}$ on a common alphabet, the {\Renyi} divergence $D_\alpha(\mbf{p}||\mbf{q})$ is defined analogously, e.g.~by viewing the distributions as diagonal density matrices in the above formulas; in the $\alpha=1$ case this gives the Kullback–Leibler (KL) divergence.
\end{definition}
\begin{definition}\label{def:stablized channel divergence} 
(Stabilized {\Renyi} channel divergence)
For $\mathcal{M}:A\rightarrow B$, $\mathcal{N}:A\rightarrow B$, and $\alpha\in [\frac{1}{2},1)\cup(1,\infty)$, the stabilized {\Renyi} channel divergence is defined as 
\begin{align}\label{eq:stablized channel divergence}
	D_\alpha(\mathcal{M}||\mathcal{N})\defvar\sup_{\omega\in\dop{=}(AR)}D_\alpha\left(\mathcal{M}(\omega_{AR})\middle\vert\middle\vert\mathcal{N}(\omega_{AR})\right),
\end{align}
where $R$ is isomorphic to $A$.
\end{definition}

While we state the following definitions of {\Renyi} conditional entropies and mutual information for subnormalized states, we will in fact not need the subnormalized cases for any $\alpha$ value other than $\alpha=\infty$, which we use when defining the smoothed versions.
\begin{definition}\label{def:sandwiched entropy and info}
({\Renyi} conditional entropies and mutual information)
For any bipartite state $\rho\in
\dop{\leq}(AB)
$ with $\tr{\rho}\neq0$, and $\alpha\in[0,\infty]$, we define the following (sandwiched) {\Renyi} conditional entropies and mutual information:
\begin{align}
    \begin{aligned}
    &H_\alpha(A|B)_\rho\defvar-D_\alpha(\rho_{AB}||\id_A\otimes\rho_B)\\
    &H_\alpha^\uparrow(A|B)_\rho \defvar -\inf_{\sigma_B\in\dop{\leq}(B)}D_\alpha(\rho_{AB}||\id_A\otimes\sigma_B) , \\
    &I_\alpha(A:B)_\rho \defvar D_\alpha(\rho_{AB}||\rho_A\otimes\rho_B),\\
    &\IDA_\alpha(A;B)_\rho \defvar \inf_{\sigma_B\in\dop{\leq}(B)} D_\alpha(\rho_{AB}||\rho_A\otimes\sigma_B) ,\\
    &\IDDA_\alpha(A:B)_\rho \defvar \inf_{\omega_A\in\dop{\leq}(A),\sigma_B\in\dop{\leq}(B)} D_\alpha(\rho_{AB}||\omega_A\otimes\sigma_B).
    \end{aligned}
\end{align}
All the above optimizations can instead be restricted to normalized states without changing the optimal values.
For $\alpha=1$ and $\tr{\rho}=1$, we have $H_\alpha(A|B)=H^\uparrow_\alpha(A|B)$ and their values are equal to the von Neumann conditional entropy; similarly, $I_\alpha(A:B)=\IDA_\alpha(A;B)=\IDDA_\alpha(A:B)$ and their values are equal to the von Neumann mutual information. 
Note that while $I_\alpha$ and $\IDDA_\alpha$ are symmetric with respect to partial registers, $\IDA_\alpha$ is not (i.e., $\IDA_\alpha(A;B)_\rho\neq\IDA_\alpha(B;A)_\rho$).
\end{definition}

From the definitions, it is clear that $H_\alpha \leq H^\uparrow_\alpha$ and $I_\alpha \geq \IDA_\alpha \geq \IDDA_\alpha$, as suggested by the notation. The fact that the optimizations can be restricted to normalized states follows by observing that the definition of $D_\alpha$ implies $D_\alpha(\rho||t\sigma) = D_\alpha(\rho||\sigma) - \log t > D_\alpha(\rho||\sigma)$ for any $t\in(0,1)$,
so given any subnormalized feasible point, one can always obtain a ``better'' feasible point by normalizing it.

We use the quantity $\IDA_\alpha$ to state the following choice of definition for {\Renyi} conditional mutual information, as a difference between {\Renyi} mutual information terms. However, we note that this definition may not be as ``well-behaved'' as its von Neumann analogue; for instance, it is unclear whether it satisfies data-processing on $A$. Still, we introduce it because some of our results are most cleanly stated in terms of this definition.
\begin{definition}\label{def:sandwiched CQMI}
	({\Renyi} conditional mutual information)
	For any tripartite state $\rho\in\dop{=}(ABC)$, and $\alpha\in [0,\infty]$, we define the following (sandwiched) \Renyi\ conditional mutual information
	\begin{align}
		\label{eq:cqmi}
		\Idiff_\alpha(A;B|C)_\rho\defvar\IDA_\alpha(A;BC)_\rho-\IDA_\alpha(A;C)_\rho
	\end{align}
	For $\alpha=1$, this definition is equal to the von Neumann conditional mutual information. Note that this definition is not symmetric with respect to partial registers(i.e., $\Idiff_\alpha(A;B|C)_\rho\neq \Idiff_\alpha(B;A|C)_\rho $).
	\end{definition}

We now list some smoothed entropic quantities and briefly discuss some properties.

\newcommand{\gF}{F} 
\begin{definition}\label{def:genfid}
For $\rho,\sigma \in \dop{\leq}(A)$, the \term{generalized fidelity} is
\begin{align}
\gF(\rho,\sigma) \defvar \norm{\sqrt{\rho}\sqrt{\sigma}}_1 + \sqrt{(1-\tr{\rho})(1-\tr{\sigma})},
\end{align}
and the \term{purified distance} is $\pd(\rho,\sigma)\defvar\sqrt{1-\gF(\rho,\sigma)^2}$. 
If either $\rho$ or $\sigma$ is normalized, this reduces to the standard definition of fidelity between normalized states.
\end{definition}

\begin{definition}
For $\rho\in\dop{\leq}(A), \sigma\in\Pos(A)$ with $\tr{\rho}\neq0$, and any $\epsilon \geq 0$, the $\eps$-\term{smoothed max-divergence} is defined as
\begin{align}
\Dmax^\eps (\rho||\sigma)\defvar 
\inf_{\substack{\tilde{\rho} \in \dop{\leq}(A) \suchthat\\ \pd(\tilde{\rho},\rho)\leq\eps}} \Dmax(\tilde{\rho}||\sigma).
\end{align}
\end{definition}
Note that the infimum is attained in the above definition, but it may not necessarily be attained by a normalized state, even if $\rho$ is normalized~\cite{Tom16}.

Similarly, we can define various versions of smoothed min-entropy or smoothed max-information, corresponding to the assorted possible definitions of {\Renyi} entropies or mutual information for $\alpha=\infty$. In this work, for all our formal definitions and results we use the notations $\Hminup[,\eps]$, $\Imaxnone[\eps]$, $\ImaxDA[,\eps]$, $\ImaxDDA[,\eps]$ (rather than e.g.~$H_\mathrm{min}^{\eps}$ or $I_\mathrm{max}^{\eps}$) for greater clarity in terms of their relations to the above definitions of {\Renyi} conditional entropy and mutual information.
We remark that for smoothed min-entropy, we only introduce the $H^\uparrow_\infty$ version because we do not need the $H_\infty$ version; in any case the term ``min-entropy'' is more commonly used to refer to the former version.

\begin{definition}\label{def:smoothing} (Smoothed min-entropy and smoothed max-information)
For $\rho\in\dop{\leq}(AB)$ with $\tr{\rho}\neq0$, and any $\eps\in\left[0,\sqrt{\tr{\rho_{AB}}}\right)$, let
\begin{align}
\begin{aligned}
&\Hminup[,\eps](A|B)_\rho \defvar
\sup_
{\substack{\tilde{\rho} \in \dop{\leq}(AB) \suchthat\\ \pd(\tilde{\rho},\rho)\leq\eps}}
\Hminup(A|B)_{\tilde{\rho}} = -\inf_{\sigma_B\in\dop{=}(B)}\Dmax^\eps(\rho_{AB}||\id_A\otimes\sigma_B), \\
&\Imaxnone[\eps](A:B)_\rho \defvar
\inf_
{\substack{\tilde{\rho} \in \dop{\leq}(AB) \suchthat\\ \pd(\tilde{\rho},\rho)\leq\eps}} 
\Imaxnone(A:B)_{\tilde{\rho}}, \\
&\ImaxDA[,\eps](A;B)_\rho \defvar
\inf_
{\substack{\tilde{\rho} \in \dop{\leq}(AB) \suchthat\\ \pd(\tilde{\rho},\rho)\leq\eps}} 
\ImaxDA(A;B)_{\tilde{\rho}}, \\
&\ImaxDDA[,\eps](A:B)_\rho \defvar
\inf_
{\substack{\tilde{\rho} \in \dop{\leq}(AB) \suchthat\\ \pd(\tilde{\rho},\rho)\leq\eps}} 
\ImaxDDA(A:B)_{\tilde{\rho}} = \inf_{\omega_A\in\dop{=}(A),\sigma_B\in\dop{=}(B)} \Dmax^\eps(\rho_{AB}||\omega_A\otimes\sigma_B).
\end{aligned}
\end{align}
\end{definition}
Again, it may not be the case that the optimum in the above definitions is attained by a normalized $\tilde{\rho}$, though we know of two special cases. First, for $\Hminup[,\eps](A|B)_\rho$ with normalized $\rho$, the optimum is attainable in some sense by a normalized $\tilde{\rho}$ embedded in a larger Hilbert space~\cite[Lemma~6.5]{Tom16}. Second, for $\ImaxDA[,\eps](A;B)_\rho$ with normalized $\rho$, we observe that the optimum is always attainable by a normalized $\tilde{\rho}$, because the objective function $\ImaxDA(A;B)_{\tilde{\rho}}$ is ``scaling-invariant'': we have $\Dmax(t\tilde{\rho}_{AB}||t\tilde{\rho}_A\otimes\sigma_B) = \Dmax(\tilde{\rho}_{AB}||\tilde{\rho}_A\otimes\sigma_B)$ for any $t\geq 0$ (and any $\sigma$), therefore we can just normalize any $\tilde{\rho}$ attaining the optimum (also noting that since $\rho$ is normalized, this normalization will only increase the value of $\gF(\tilde{\rho},\rho)$ according to Def.~\ref{def:genfid}). In summary, we have
\begin{align}\label{eq:specialImaxeps}
\ImaxDA[,\eps](A;B)_\rho =
\inf_
{\substack{\tilde{\rho} \in \dop{=}(AB) \suchthat\\ \pd(\tilde{\rho},\rho)\leq\eps}} 
\ImaxDA(A;B)_{\tilde{\rho}}.
\end{align} 
We will use this property of $\ImaxDA[,\eps](A;B)_\rho$ in one of our later proofs.

We also remark that in the above definitions, we were able to rewrite $\Hminup[,\eps]$ and $\ImaxDDA[,\eps]$ in terms of $\Dmax^\eps$ by using the fact that infima commute. There is a subtle obstruction to writing similar expressions for $\Imaxnone[\eps]$ and $\ImaxDA[,\eps]$, namely that the smoothing in those cases involves some terms in the second argument of the corresponding $\Dmax$ terms, in contrast to the definition of $\Dmax^\eps$ which only smooths over the first argument. One possible approach to avoid this issue is to instead consider \term{partially smoothed} definitions as introduced in~\cite{ABJT20}, which constrain the partial states in the second argument of the $\Dmax$ terms. However, for this work we shall simply rely on the fact that the various definitions of smoothed max-information can be converted to each other (with some losses) using \cite[Theorem~3]{CBR14}.

We now also list some useful properties we will use throughout our work.

\begin{fact} \label{fact:DPI}
(Data-processing~\cite[Theorem~1]{FL13}; see also~\cite{MDS+13,Beigi13,MO14,Tom16}) For any $\alpha\in[1/2,\infty]$, any $\rho,\sigma\in\Pos(Q)$ with $\tr{\rho}\neq0$, and any channel $\mathcal{E}:Q\to Q'$, we have:
\begin{align}
D_\alpha(\rho\Vert\sigma) \geq D_\alpha(\mathcal{E}[\rho]\Vert\mathcal{E}[\sigma]),
\end{align}
and thus also for any $\rho\in\dop{=}(Q''Q)$,
\begin{align}
\begin{gathered}
H_\alpha(Q''|Q)_{\rho} \leq H_\alpha(Q''|Q')_{\mathcal{E}[\rho]}, \quad H^\uparrow_\alpha(Q''|Q)_{\rho} \leq H^\uparrow_\alpha(Q''|Q')_{\mathcal{E}[\rho]} \\ 
I_\alpha(Q'':Q)_\rho \geq I_\alpha(Q'':Q')_{\mathcal{E}[\rho]}, \quad
\IDDA_\alpha(Q'':Q)_\rho \geq \IDDA_\alpha(Q'':Q')_{\mathcal{E}[\rho]}, \\
\IDA_\alpha(Q'';Q)_\rho \geq \IDA_\alpha(Q'';Q')_{\mathcal{E}[\rho]}, \quad\IDA_\alpha(Q;Q'')_\rho \geq \IDA_\alpha(Q';Q'')_{\mathcal{E}[\rho]}.
\end{gathered}
\end{align}
If $\mathcal{E}$ is an isometry, all the above bounds hold with equality.
\end{fact}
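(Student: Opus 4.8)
The first displayed inequality, data processing for the {\Renyi} divergence itself, is taken from the cited references, so the task is to derive the entropic and mutual-information consequences from it. In every case the strategy is the same: rewrite the quantity via Def.~\ref{def:sandwiched entropy and info} as a (possibly optimized) $D_\alpha$ term, and then apply the divergence data-processing inequality with the channel $\idmap_{Q''}\otimes\mathcal{E}$ (or $\mathcal{E}\otimes\idmap_{Q''}$, depending on which register $\mathcal{E}$ acts on), pushing this channel through both arguments of the divergence. Two elementary observations make the second argument transform as needed: first, $\mathcal{E}$ commutes with the partial trace over the register it does not touch, so $(\mathcal{E}[\rho])_{Q'}=\mathcal{E}(\rho_Q)$ and $(\mathcal{E}[\rho])_{Q''}=\rho_{Q''}$; second, $\mathcal{E}$ is trace preserving, so $\idmap_{Q''}\otimes\mathcal{E}$ sends $\id_{Q''}\otimes\rho_Q$ to $\id_{Q''}\otimes(\mathcal{E}[\rho])_{Q'}$.

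For $H_\alpha$ and $I_\alpha$, whose defining $D_\alpha$ has a fixed product-form second argument, this is immediate: the channel maps the first argument to the corresponding post-channel state and, by the two observations, maps the second argument exactly onto the post-channel second argument; data processing (and negation, in the $H_\alpha$ case) yields the stated inequality. For the quantities defined by an infimum, namely $H^\uparrow_\alpha$, $\IDDA_\alpha$, and both register orderings of $\IDA_\alpha$, I argue one optimization variable at a time. Given any feasible $\sigma_Q\in\dop{\leq}(Q)$ (and $\omega_{Q''}$, where relevant) in the pre-channel optimization, data processing bounds the associated $D_\alpha$ below by the same $D_\alpha$ evaluated on the channel outputs, whose second argument is of the form $\id_{Q''}\otimes\mathcal{E}(\sigma_Q)$, $\omega_{Q''}\otimes\mathcal{E}(\sigma_Q)$, $(\mathcal{E}[\rho])_{Q'}\otimes\sigma_{Q''}$, etc.; since $\mathcal{E}(\sigma_Q)\in\dop{\leq}(Q')$ is a feasible point of the post-channel optimization and the fixed marginals $\rho_{Q''}=(\mathcal{E}[\rho])_{Q''}$ transform correctly, this lower bound is itself at least the post-channel infimum. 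Taking the infimum over the pre-channel variables then gives the inequality. The only point where the asymmetry of $\IDA_\alpha$ enters is whether $\mathcal{E}$ acts on the ``marginal'' register or the ``optimized'' register: in the former case the channel is pushed through $\rho_A$ directly, producing the exact marginal $(\mathcal{E}[\rho])_{A'}$ with no residual infimum, while in the latter case one relaxes the infimum from the set $\{\mathcal{E}(\sigma_B)\}$ to all of $\dop{\leq}(B')$; both fit the scheme above.

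For the closing claim that all bounds hold with equality when $\mathcal{E}$ is an isometry $V$, the plan is to invoke the inequalities just proved in both directions: once with the channel $V$, and once with the recovery channel $\mathcal{R}(X)\defvar V^\dagger X V+\tr{(\id-VV^\dagger)X}\,\tau$ for an arbitrary fixed $\tau\in\dop{=}(Q)$, which satisfies $\mathcal{R}\circ V=\idmap$ on operators (because $V^\dagger V=\id$ and $\tr{V\xi V^\dagger}=\tr{\xi}$, so the second term vanishes on any output of $V$). Since $\idmap_{Q''}\otimes\mathcal{R}$ maps $(V[\rho])_{Q''Q'}$ back to $\rho_{Q''Q}$, applying each monotonicity statement to $\mathcal{R}$ reverses it, forcing equality. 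None of this involves a genuine obstacle; the only care needed is bookkeeping which register each channel acts on and checking, in each $\IDA_\alpha$ case, that the product structure of the second argument is preserved or relaxed in the correct direction.
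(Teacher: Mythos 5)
Your derivation is correct. The paper itself does not prove this statement — it is imported as a Fact with citations to the literature (the base divergence inequality from the cited references, with the entropic and mutual-information consequences left implicit) — so there is no in-paper proof to compare against. Your argument is the standard one: push $\idmap\otimes\mathcal{E}$ through both arguments of the defining $D_\alpha$, use that $\mathcal{E}$ commutes with the partial trace over the untouched register and is trace-preserving so that the product-form second argument maps onto (or into the feasible set of) the post-channel second argument, and handle the optimized quantities by bounding at an arbitrary feasible point before taking infima. The recovery-channel construction $\mathcal{R}(X)= V^\dagger X V+\tr{(\id-VV^\dagger)X}\,\tau$ for the isometry case is also the standard route to equality, and your verification that $\mathcal{R}\circ V=\idmap$ is sound. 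No gaps.
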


\begin{fact} \label{fact:classmix}
(Conditioning on classical registers; see~\cite[Sec.~III.B.2 and Proposition~9]{MDS+13} or~\cite[Eq.~(5.32) and Proposition~5.1]{Tom16}) Let $\rho,\sigma \in \dop{=}(Q)$ be states with a direct-sum structure $\rho_Q = \bigoplus_{c}\rho(c)\rho_{Q|c}, \sigma_Q = \bigoplus_{c}\sigma(c)\sigma_{Q|c}$ (with the direct sum having the same ordering in both cases), for some probability distributions $\bsym{\rho},\bsym{\sigma}$ and some normalized states $\rho_{Q|c},\sigma_{Q|c}$ indexed by $c$.\footnote{This direct-sum structure automatically holds whenever the state explicitly includes a classical register $C$ that encodes the value of $c$ in the direct sum, but we state it in this form for greater flexibility in applications.} Then
\begin{align}
    \label{eq:classmixD}
    D_\alpha(\rho||\sigma)=\frac{1}{\alpha-1}\log\left(\sum_{c}\rho(c)^\alpha\sigma(c)^{1-\alpha}2^{(\alpha-1)D_\alpha(\rho_{Q|c}||\sigma_{Q|c})}\right).
\end{align}
Hence for a state $\rho \in \dop{=}(C Q Q')$ classical on $C$, we have
\begin{align}
    \label{eq:classmixHdown}
    &H_\alpha(Q|CQ')=\frac{1}{1-\alpha}\log\left(\sum_{c}\rho(c)2^{(1-\alpha)H_\alpha(Q|Q')_{\rho|c}}\right),\\
    \label{eq:classmixHup}
    &H^\uparrow_\alpha(Q|CQ')=\frac{\alpha}{1-\alpha}\log\left(\sum_{c}\rho(c)2^{\frac{1-\alpha}{\alpha}H^\uparrow_\alpha(Q|Q')_{\rho|c}}\right).
\end{align}
\end{fact}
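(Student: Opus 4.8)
The plan is to unfold the definition of the sandwiched {\Renyi} divergence (Def.~\ref{def:sandwiched divergence}) and observe that the common direct-sum structure of $\rho$ and $\sigma$ is preserved by every operation involved. Writing $\sigma=\bigoplus_c\sigma(c)\sigma_{Q|c}$, we have $\sigma^{\frac{1-\alpha}{2\alpha}}=\bigoplus_c\sigma(c)^{\frac{1-\alpha}{2\alpha}}\sigma_{Q|c}^{\frac{1-\alpha}{2\alpha}}$ (using the pseudoinverse convention when the exponent is negative, which still respects the blocks), so $\sigma^{\frac{1-\alpha}{2\alpha}}\rho\,\sigma^{\frac{1-\alpha}{2\alpha}}=\bigoplus_c\rho(c)\sigma(c)^{\frac{1-\alpha}{\alpha}}\bigl(\sigma_{Q|c}^{\frac{1-\alpha}{2\alpha}}\rho_{Q|c}\sigma_{Q|c}^{\frac{1-\alpha}{2\alpha}}\bigr)$, and raising this to the power $\alpha$ acts blockwise and pulls the positive scalars out as $\bigoplus_c\rho(c)^\alpha\sigma(c)^{1-\alpha}\bigl(\sigma_{Q|c}^{\frac{1-\alpha}{2\alpha}}\rho_{Q|c}\sigma_{Q|c}^{\frac{1-\alpha}{2\alpha}}\bigr)^\alpha$. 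Taking the trace (additive over blocks), inserting $\tr{\bigl(\sigma_{Q|c}^{\frac{1-\alpha}{2\alpha}}\rho_{Q|c}\sigma_{Q|c}^{\frac{1-\alpha}{2\alpha}}\bigr)^\alpha}=2^{(\alpha-1)D_\alpha(\rho_{Q|c}||\sigma_{Q|c})}$ (the definition of $D_\alpha$ rearranged, using $\rho_{Q|c}$ normalized) and $\tr{\rho}=1$, and applying $\tfrac{1}{\alpha-1}\log$, immediately gives~\eqref{eq:classmixD}. The degenerate cases ($\rho(c)=0$, $\supp\rho_{Q|c}\not\subseteq\supp\sigma_{Q|c}$, and $\alpha\in\{0,1,\infty\}$) are handled by the usual conventions and limits.

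For~\eqref{eq:classmixHdown}, note that the derivation above never used normalization of $\sigma$, so~\eqref{eq:classmixD} holds verbatim for any $\sigma$ sharing the block structure. Apply it with $\rho\to\rho_{CQQ'}$ and $\sigma\to\id_Q\otimes\rho_{CQ'}=\bigoplus_c\rho(c)\bigl(\id_Q\otimes\rho_{Q'|c}\bigr)$: the two weight sequences coincide, so $\rho(c)^\alpha\sigma(c)^{1-\alpha}=\rho(c)$, and the per-block term is $D_\alpha(\rho_{QQ'|c}||\id_Q\otimes\rho_{Q'|c})=-H_\alpha(Q|Q')_{\rho|c}$ by definition; substituting and negating gives~\eqref{eq:classmixHdown}.

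For~\eqref{eq:classmixHup} the extra ingredient is the infimum over $\sigma_{CQ'}$ in $H^\uparrow_\alpha$. I would first argue, via data processing (Fact~\ref{fact:DPI}) under the pinching channel $\mathcal{P}_C$ onto the classical basis of $C$ --- which fixes $\rho_{CQQ'}$ and sends $\id_Q\otimes\sigma_{CQ'}$ to $\id_Q\otimes\mathcal{P}_C(\sigma_{CQ'})$ --- that the infimum is unchanged if we restrict to $\sigma_{CQ'}$ classical on $C$, say $\sigma_{CQ'}=\bigoplus_c q(c)\sigma_{Q'|c}$. Then~\eqref{eq:classmixD} turns the objective into $\tfrac{1}{\alpha-1}\log\sum_c\rho(c)^\alpha q(c)^{1-\alpha}2^{(\alpha-1)D_\alpha(\rho_{QQ'|c}||\id_Q\otimes\sigma_{Q'|c})}$. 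Optimizing each $\sigma_{Q'|c}$ separately replaces the last factor by $2^{(1-\alpha)H^\uparrow_\alpha(Q|Q')_{\rho|c}}$ (checking that the direction dictated by $\mathrm{sgn}(\alpha-1)$ together with whether one takes an $\inf$ of $D_\alpha$ or a $\sup$ of $-D_\alpha$ indeed selects $\inf_{\sigma_{Q'|c}}D_\alpha=-H^\uparrow_\alpha(Q|Q')_{\rho|c}$ in both regimes $\alpha<1$ and $\alpha>1$), and a Lagrange-multiplier/H\"older computation for the remaining optimization over the weights $q$ gives the optimal $q(c)\propto\rho(c)\,2^{\frac{1-\alpha}{\alpha}H^\uparrow_\alpha(Q|Q')_{\rho|c}}$ and optimal value $\bigl(\sum_c\rho(c)\,2^{\frac{1-\alpha}{\alpha}H^\uparrow_\alpha(Q|Q')_{\rho|c}}\bigr)^{\alpha}$ --- the same closed form in both regimes --- so that $-\tfrac{1}{\alpha-1}\log$ of it produces~\eqref{eq:classmixHup}.

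The block-diagonal bookkeeping is routine; the part needing care is~\eqref{eq:classmixHup}: justifying the reduction to classical $\sigma_{CQ'}$, and tracking signs so that the per-block optimization and the weight optimization consistently give the same closed form whether $\alpha<1$ or $\alpha>1$ (and that the relevant optima are attained, so that $\inf/\sup$ commutes with the logarithm). Since this is a known result, one could instead simply invoke \cite[Proposition~9]{MDS+13} or \cite[Proposition~5.1]{Tom16}, as the statement already notes.
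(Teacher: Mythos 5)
Your proposal is correct: the paper states this as a Fact and gives no proof of its own, simply citing \cite{MDS+13,Tom16}, and your argument is precisely the standard derivation behind those cited results (blockwise evaluation of the sandwiched divergence for~\eqref{eq:classmixD}, specialization to $\sigma=\id_Q\otimes\rho_{CQ'}$ for~\eqref{eq:classmixHdown}, and pinching plus the H\"older/Lagrange optimization over the weights $q(c)\propto\rho(c)2^{\frac{1-\alpha}{\alpha}H^\uparrow_\alpha(Q|Q')_{\rho|c}}$ for~\eqref{eq:classmixHup}). The sign bookkeeping you flag does work out identically in both regimes $\alpha<1$ and $\alpha>1$, so there is no gap.
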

\begin{fact}
	\label{fact:smooth to renyi}
	(Relations between smooth divergence and sandwiched \Renyi\ divergence; Proposition~6.5 in~\cite{Tom16}) Let $\rho\in\dop{=}(A)$, $\sigma\in\Pos(A)$. Then, for any $\epsilon\in (0,1),\ \alpha\in (1,\infty)$, we have
	\begin{align}\label{eq:Dsmooth to renyi}
		\Dmax^\epsilon(\rho||\sigma)\leq D_\alpha(\rho||\sigma)+\frac{g(\epsilon)}{\alpha-1},
	\end{align}
	where $g(\epsilon)\defvar\log\left(1-\sqrt{1-\epsilon^2}\right)$. In particular, for any $\rho\in\dop{=}(AB)$,
	\begin{align}\label{eq:smooth to renyi QMI}
		\ImaxDDA[,\epsilon](A:B)_\rho\leq\IDDA_\alpha(A:B)_\rho+\frac{g(\epsilon)}{\alpha-1}.
	\end{align}
\end{fact}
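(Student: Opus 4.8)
The plan is to prove the divergence inequality~\eqref{eq:Dsmooth to renyi} by an explicit smoothing construction and then read off~\eqref{eq:smooth to renyi QMI} by specializing the second argument to a product state. If $\supp(\rho)\not\subseteq\supp(\sigma)$ then $D_\alpha(\rho||\sigma)=+\infty$ and there is nothing to prove, so I may restrict all operators to $\supp(\sigma)$ and assume $\sigma>0$. Put $\lambda\defvar D_\alpha(\rho||\sigma)+\frac{g(\epsilon)}{\alpha-1}$; it then suffices to exhibit a subnormalized $\tilde\rho\in\dop{\leq}(A)$ with $\pd(\tilde\rho,\rho)\leq\epsilon$ and $\tilde\rho\leq 2^{\lambda}\sigma$, since the latter gives $\Dmax(\tilde\rho||\sigma)\leq\lambda$, hence $\Dmax^{\epsilon}(\rho||\sigma)\leq\lambda$ by definition of the smoothed divergence.

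The construction is transparent when $\rho$ and $\sigma$ commute. Diagonalizing them simultaneously as $\rho=\sum_x p_x\pure{x}$ and $\sigma=\sum_x q_x\pure{x}$, and writing $r_x\defvar p_x/q_x$, one has the identity $2^{(\alpha-1)D_\alpha(\rho||\sigma)}=\sum_x p_x r_x^{\alpha-1}$, so Markov's inequality bounds the ``bad'' mass $\sum_{x:\,r_x>2^{\lambda}}p_x=\Pr_{x\sim p}[\,r_x^{\alpha-1}>2^{(\alpha-1)\lambda}\,]$ by $2^{(\alpha-1)(D_\alpha(\rho||\sigma)-\lambda)}$. Taking $\tilde\rho\defvar\sum_{x:\,r_x\leq 2^{\lambda}}p_x\pure{x}$ gives $\tilde\rho\leq 2^{\lambda}\sigma$ directly, while $\gF(\tilde\rho,\rho)=\sum_{x:\,r_x\leq 2^{\lambda}}p_x=1-\Pr[r_x>2^{\lambda}]$; demanding $\pd(\tilde\rho,\rho)\leq\epsilon$, i.e.\ $\Pr[r_x>2^{\lambda}]\leq 1-\sqrt{1-\epsilon^2}$, is precisely what fixes the constant $g(\epsilon)$ in the definition of $\lambda$.

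The main obstacle is the non-commuting case, where truncating $\rho$ in an eigenbasis of $\sigma$ breaks the operator inequality $\tilde\rho\leq 2^{\lambda}\sigma$ (conjugating $\sigma$ by a projector need not keep it below $\sigma$). I would handle it in two steps. Step one upgrades the scalar Markov bound to an operator tail bound: let $Q$ be the spectral projector of $\rho-2^{\lambda}\sigma$ onto its strictly positive part; writing $\Tr[\rho Q]=\Tr\big[\big(\sigma^{\frac{\alpha-1}{2\alpha}}Q\sigma^{\frac{\alpha-1}{2\alpha}}\big)\big(\sigma^{\frac{1-\alpha}{2\alpha}}\rho\,\sigma^{\frac{1-\alpha}{2\alpha}}\big)\big]$ and applying H\"older's inequality for Schatten norms with conjugate exponents $(\tfrac{\alpha}{\alpha-1},\alpha)$, then the Araki--Lieb--Thirring inequality to bound $\Tr\big[\big(\sigma^{\frac{\alpha-1}{2\alpha}}Q\sigma^{\frac{\alpha-1}{2\alpha}}\big)^{\alpha/(\alpha-1)}\big]\leq\Tr[Q\sigma]$, and combining with $\Tr[Q\sigma]\leq 2^{-\lambda}\Tr[Q\rho]$ (a consequence of $Q(\rho-2^{\lambda}\sigma)Q\geq 0$), one solves for $\Tr[\rho Q]$ to obtain $\Tr[\rho Q]\leq 2^{(\alpha-1)(D_\alpha(\rho||\sigma)-\lambda)}$, which for the stated $\lambda$ is at most $1-\sqrt{1-\epsilon^2}$. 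Step two converts this tail bound into a genuine smoothing: one seeks a $\tilde\rho\leq 2^{\lambda}\sigma$ that ``cuts away'' the part of $\rho$ detected by $Q$ and satisfies $\gF(\tilde\rho,\rho)\geq 1-\Tr[\rho Q]\geq\sqrt{1-\epsilon^2}$, the fidelity bound coming from Uhlmann's theorem applied to a purification of $\rho$ acted on by the complementary operator. Producing a $\tilde\rho$ that simultaneously obeys the operator inequality and stays this close in purified distance, with the constants matching $g(\epsilon)$ exactly, is the delicate point — it is essentially the known (and nontrivial) relation between the smoothed max-divergence and the information-spectrum relative entropy, and is where most of the technical work lies.

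Finally,~\eqref{eq:smooth to renyi QMI} follows by specialization. Applying~\eqref{eq:Dsmooth to renyi} with $\sigma=\omega_A\otimes\tau_B$ for arbitrary normalized $\omega_A\in\dop{=}(A)$, $\tau_B\in\dop{=}(B)$ gives $\Dmax^{\epsilon}(\rho_{AB}||\omega_A\otimes\tau_B)\leq D_\alpha(\rho_{AB}||\omega_A\otimes\tau_B)+\frac{g(\epsilon)}{\alpha-1}$; taking the infimum over $(\omega_A,\tau_B)$ on both sides, the left-hand side equals $\ImaxDDA[,\epsilon](A:B)_\rho$ by the rewriting in Definition~\ref{def:smoothing}, the constant pulls out of the infimum, and the right-hand infimum equals $\IDDA_\alpha(A:B)_\rho$ (the optimization defining $\IDDA_\alpha$ may be taken over normalized states, per Definition~\ref{def:sandwiched entropy and info}), yielding $\ImaxDDA[,\epsilon](A:B)_\rho\leq\IDDA_\alpha(A:B)_\rho+\frac{g(\epsilon)}{\alpha-1}$.
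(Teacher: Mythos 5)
The paper does not actually prove Eq.~\eqref{eq:Dsmooth to renyi}: it is imported as Proposition~6.5 of~\cite{Tom16}, and the only argument the paper supplies for this Fact is the passage from \eqref{eq:Dsmooth to renyi} to \eqref{eq:smooth to renyi QMI} --- apply the first bound with $\sigma=\omega_A\otimes\sigma_B$ for normalized $\omega_A,\sigma_B$, take the infimum over both, and use the identification of $\ImaxDDA[,\eps]$ with an infimum of $\Dmax^\eps$ over product states from Def.~\ref{def:smoothing} (together with the fact that the optimization defining $\IDDA_\alpha$ may be restricted to normalized states). Your final paragraph is exactly this argument, so for the portion of the Fact that the paper actually proves, your proposal matches it.

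Your attempt to reprove \eqref{eq:Dsmooth to renyi} itself goes beyond what the paper does, and it is a faithful sketch of the standard information-spectrum route: your step one, the operator Markov bound $\Tr[\rho Q]\le 2^{(\alpha-1)(D_\alpha(\rho||\sigma)-\lambda)}$ via H\"older and Araki--Lieb--Thirring, is correct. But, as you acknowledge, step two --- producing a single $\tilde\rho$ with $\tilde\rho\le 2^{\lambda}\sigma$ \emph{and} $\pd(\tilde\rho,\rho)\le\eps$ from that tail bound --- is left open, and that is precisely the nontrivial lemma (in~\cite{Tom16} it is handled by conjugating $\rho$ with $G=2^{\lambda/2}\sigma^{1/2}(2^{\lambda}\sigma+\Delta)^{-1/2}$, where $\Delta$ is the positive part of $\rho-2^{\lambda}\sigma$, verifying $G\rho G^\dagger \le 2^\lambda\sigma$ directly, and invoking Uhlmann for the fidelity). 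So as a self-contained proof the proposal is incomplete, and as a proof relative to the cited literature it adds nothing the citation does not already cover. One further bookkeeping point: with $g(\eps)=\log\left(1-\sqrt{1-\eps^2}\right)<0$ as defined, your Markov step with $\lambda=D_\alpha+\frac{g(\eps)}{\alpha-1}$ yields a tail probability of $2^{-g(\eps)}=\frac{1}{1-\sqrt{1-\eps^2}}>1$, which is vacuous; the constants only close if the correction term is $-\frac{g(\eps)}{\alpha-1}$, equivalently if $g$ is the positive quantity $\log\frac{1}{1-\sqrt{1-\eps^2}}$ as in~\cite{Tom16}. This sign issue originates in the Fact as transcribed (note that the paper itself uses $-\frac{g(\eps)}{\alpha-1}$ when invoking it in Theorem~\ref{theorem:Info accum}), but your sketch inherits it, and as literally written your claim that the Markov bound ``fixes the constant $g(\eps)$'' does not go through.
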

For completeness, we present how to obtain the second bound from the first (which was proven in~\cite[Proposition~6.5]{Tom16}): observe that for any $\rho\in\dop{=}(AB),\omega_A\in\dop{=}(A),\sigma_B\in\dop{=}(B)$, from the first bound we have
\begin{align}
	\Dmax^\eps(\rho_{AB}||\omega_A\otimes\sigma_B)\leq D_\alpha(\rho_{AB}||\omega_A\otimes\sigma_B)+\frac{g(\eps)}{\alpha-1}.
\end{align}
Taking infimum over $\omega,\sigma$ in the above inequality, and noting the relation between $\ImaxDDA[,\epsilon]$ and $\Dmax^\epsilon$ described in Def.~\ref{def:smoothing}, the claim follows. (Generalizing this property to, say, $\ImaxDA[,\epsilon]$ would require a suitable handling of the reduced state on $A$ in the second argument, as discussed above.)

\section{One-shot chain rule}
\label{sec:Imax_chain}
In this section, we derive a chain rule between smooth min-entropy and smooth max-information, which can be used for one-shot characterization of information leakage caused by an additional conditioning register.
This can be tighter than previous chain rules which were either dimension-based~\cite{Tom16}, or were based on smooth max-entropy~\cite{VDT13}.

We do so via an ``operational'' argument, by drawing a connection to state redistribution protocols. Informally, the idea is that if we can ``reversibly'' compress a quantum register down to some smaller dimension, while in some sense maintaining its correlation with other registers, then the information it contains about other registers should be characterized by only the smaller compressed dimension. To formalize this better, we first define state redistribution protocols:

\begin{definition}
	\label{def:state redistribution}
	($\delta$-Quantum state redistribution; see~\cite{DY08,ADJ17})
	Consider any state $\rho\in\dop{=}(ABCD)$.
	The protocol consists of three parties, Alice, Bob and a reference party, where Alice begins with registers $AC$, Bob has $B$, and the reference party has $D$. In addition, Alice and Bob have access to any shared entangled state $\rho_{A'B'}$. The goal is for Alice to transmit the $C$ register to Bob, possibly assisted by the shared entanglement, such that the final state on $ABCD$ after the protocol is $\delta$-close in purified distance to the initial state. More formally, let $Q$ denote a quantum register that Alice sends to Bob in the protocol. A \term{$\delta$-quantum state redistribution protocol for $\rho$} consists of a pair of encoding and decoding channels $\mathcal{E}:AA'C\rightarrow AQ$ and $\mathcal{D}:QBB'\rightarrow BC$, such that
	\begin{align}
		\pd\left(\rho_{ABCD},\left(\mathcal{D}\circ\mathcal{E}\right)(\rho_{ABCD}\otimes\rho_{A'B'})\right)\leq\delta.
	\end{align}
	We refer to $\log|Q|$ as the \term{quantum communication cost} of the protocol.
\end{definition}
Note that in the above definition, we have not imposed a restriction that $\rho_{ABCD}$ is pure; however, various existing results for state redistribution are only valid for pure states, so some care is needed when applying those results.
Also, while we have allowed the ``message'' register $Q$ to be quantum, it is a standard result that one can restrict to classical messages by simply doubling the communication cost, because it is an entanglement-assisted scenario and hence the pre-shared entanglement can be used to teleport any one qubit with two classical bits. (For classical $Q$, our chain rule below would just have $\log|Q|$ instead of $2\log|Q|$, consistent with this notion that the classical communication cost is twice the quantum communication cost.)

We now use this concept to derive a one-shot chain rule. While the register ordering for the protocol described in this theorem might appear unintuitive at first, we discuss below how in principle it gives the ``correct'' register ordering for a CQMI-like term to quantify the leakage, even if that term is not necessarily easy to analyze.
\begin{theorem}
	\label{thrm:Hmin_Imax chain}
	For any $\rho\in\dop{=}(SLE)$, consider a purification $\ketbra{\rho}{\rho}\in\dop{=}(SLEP)$.
	Suppose there exists a $\delta$-quantum state redistribution protocol for $\rho$ (Def.~\ref{def:state redistribution}) in which Alice holds $PL$, Bob's initial register is $E$, the reference holds $S$, and the register to be transmitted is $L$. Let the quantum communication cost of the protocol be $\log|Q|$ with $Q$ being the encoded register as mentioned in that definition. Then for any $\epsilon\in [0,1)$, the following holds:
	\begin{align}
		\label{eq:Hmin_Imax chain}
		\Hminup[,\eps+\delta](S|LE)_{\rho}\geq \Hminup[,\eps](S|E)_{\rho}-2\log|Q|.
	\end{align}
	If the register $Q$ in the state redistribution protocol is classical, then in the above bound, $2\log|Q|$ can be replaced with $\log|Q|$.
	\begin{proof}
		Consider a $\delta$-quantum state redistribution protocol where Alice, Bob and the reference hold onto the registers indicated in the statement of the theorem. We write the initial state of the protocol as $\rho_{SLEPXY}=\ketbra{\rho}{\rho}_{SLEP}\otimes\rho_{XY}$, where $\rho_{XY}$ is the shared entanglement between Alice and Bob. Then, by Def.~\ref{def:state redistribution} there exist an encoding $\mathcal{E}:LPX\rightarrow PQ$, and decoding $\mathcal{D}:QEY\rightarrow LE$, such that  
		\begin{align}
			\label{eq:EncDec}
			\pd\left(\rho''_{SLEP},\rho_{SLEP}\right)\leq\delta
		\end{align}
		where (leaving identity maps implicit) $\rho''_{SLEP}=\left(\mathcal{D}\circ\mathcal{E}\right)(\rho_{SLEPXY})$. Denoting the state after the encoding map as $\rho'_{SEPQY}=\mathcal{E}(\rho_{SLEPXY})$, we have
		\begin{align}
			\Hminup[,\eps+\delta](S|LE)_{\rho}&\geq \Hminup[,\eps](S|LE)_{\mathcal{D}\circ\mathcal{E}\left[\rho\right]}\notag\\
			&\geq \Hminup[,\eps](S|QEY)_{\mathcal{E}\left[\rho\right]}\notag\\
			&\geq \Hminup[,\eps](S|EY)_{\mathcal{E}\left[\rho\right]}-2\log|Q|\notag\\
			&= \Hminup[,\eps](S|EY)_{\rho}-2\log|Q|\notag\\
			&\geq \Hminup[,\eps](S|E)_{\rho}-2\log|Q|,
		\end{align}
		where the first line holds by Eq.~\eqref{eq:EncDec}, the second line is an application of data-processing (Fact~\ref{fact:DPI}), the third line follows from Prop.~3.3.4 in~\cite{Led16}, the fourth line follows from the fact that the encoding channel $\mathcal{E}$ acts as identity on $SEY$ registers, and the last line holds since by construction $\rho_{SEY}=\rho_{SE}\otimes\rho_Y$. Note that by isometric equivalence of purifications and the isometric invariance property of min-entropies, the above proof holds for all purifications of $\rho_{SLE}$. 
		
		The version for classical $Q$ holds by simply noting that we instead have $\Hminup[,\eps](S|QEY)_{\mathcal{E}\left[\rho\right]}\geq \Hminup[,\eps](S|EY)_{\mathcal{E}\left[\rho\right]}-\log|Q|$ in that case, or alternatively by noting that any such state redistribution protocol can be converted to one where $Q$ is fully quantum but halved in size, via dense coding (since we are allowing arbitrary entanglement assistance).
	\end{proof}
\end{theorem}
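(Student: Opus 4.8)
The plan is to prove the chain rule~\eqref{eq:Hmin_Imax chain} through a reduction to quantum state redistribution, exploiting the fact that after Alice compresses $L$ into a message register $Q$ and sends it to Bob, the only additional conditioning on the secret register $S$ (beyond the side information $E$) comes through $Q$ together with the shared-entanglement half $Y$, and the $Y$ part can be dropped since $S$ is uncorrelated with it by construction.

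First I would set up the initial state of the protocol as $\rho_{SLEPXY} = \pure{\rho}_{SLEP} \otimes \rho_{XY}$, where $\pure{\rho}$ is a fixed purification of $\rho_{SLE}$ and $\rho_{XY}$ is the pre-shared entanglement. By Def.~\ref{def:state redistribution} applied with the register assignment in the theorem statement (Alice: $PL$ and her half $X$ of the entanglement; Bob: $E$ and his half $Y$; reference: $S$; transmitted register: $L$), there exist an encoding channel $\mathcal{E}: LPX \to PQ$ and a decoding channel $\mathcal{D}: QEY \to LE$ such that the output state $\rho'' = (\mathcal{D}\circ\mathcal{E})(\rho_{SLEPXY})$ satisfies $\pd(\rho''_{SLEP}, \rho_{SLEP}) \leq \delta$. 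I would name the intermediate post-encoding state $\rho'_{SEPQY} = \mathcal{E}(\rho_{SLEPXY})$.

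The core of the argument is then a short chain of inequalities on $\Hminup[,\eps+\delta](S|LE)_\rho$. I would (i) use the fact that $\pd(\rho''_{SLEP}, \rho_{SLEP}) \leq \delta$ together with data processing (tracing out $P$) and the definition of the smoothed min-entropy to get $\Hminup[,\eps+\delta](S|LE)_\rho \geq \Hminup[,\eps](S|LE)_{\rho''}$; (ii) apply data processing (Fact~\ref{fact:DPI}) under the decoding channel $\mathcal{D}$, which takes $QEY \to LE$ while leaving $S$ untouched, to get $\Hminup[,\eps](S|LE)_{\rho''} \geq \Hminup[,\eps](S|QEY)_{\rho'}$; (iii) invoke the one-shot dimension bound for removing a conditioning register (Prop.~3.3.4 in~\cite{Led16}) to drop $Q$ at a cost of $2\log|Q|$, giving $\Hminup[,\eps](S|QEY)_{\rho'} \geq \Hminup[,\eps](S|EY)_{\rho'} - 2\log|Q|$; (iv) note that the encoding channel $\mathcal{E}$ acts only on $LPX$ and hence as the identity on $SEY$, so $\rho'_{SEY} = \rho_{SEY}$; and (v) since by construction $\rho_{SEY} = \rho_{SE}\otimes\rho_Y$, the $Y$ register is in tensor product and can be discarded without changing the min-entropy, yielding $\Hminup[,\eps](S|EY)_{\rho'} = \Hminup[,\eps](S|E)_\rho$. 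Chaining these gives the claimed bound. For the classical-$Q$ refinement, step (iii) improves to a cost of only $\log|Q|$; alternatively one observes that a protocol with classical $Q$ can be turned into one with a fully quantum message register of half the size via dense coding, since arbitrary entanglement assistance is available.

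The main obstacle I anticipate is getting the register bookkeeping exactly right — in particular making sure that the encoding genuinely leaves $SEY$ alone (so that step (iv) is valid), that the decoding output registers match $LE$ so data processing applies cleanly in step (ii), and that the purification $P$ is handled consistently (one needs $\pd$-closeness on $SLEP$ rather than just $SLE$ so that later invocations behave well, and one should remark via isometric equivalence of purifications that the choice of purification is immaterial). The analytic content of each individual step is light, relying only on previously-stated facts (monotonicity of smoothed min-entropy under the $\delta$-perturbation, Fact~\ref{fact:DPI}, the dimension bound from~\cite{Led16}, and invariance under appending/removing a product register), so once the register assignments are pinned down the proof is essentially a five-line computation.
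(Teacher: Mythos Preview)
Your proposal is correct and follows essentially the same approach as the paper's proof: the same five-step chain (closeness in purified distance, data processing under the decoding, the $2\log|Q|$ dimension bound from~\cite{Led16}, the observation that $\mathcal{E}$ acts as identity on $SEY$, and dropping the product register $Y$), along with the same remarks on purification-independence and the classical-$Q$ improvement.
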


\begin{remark}
Due to the ``operational'' nature of our above proof, it straightforwardly generalizes to any other conditional entropy that satisfies data-processing (on the conditioning registers), the ``crude'' log-dimension chain rule\footnote{In fact it suffices to only have such a chain rule hold for classical conditioning registers (subtracting $\log|Q|$ instead), by simply restricting $Q$ in the above proof to be classical and doubling its size using the teleportation argument discussed earlier.} in the third line of our above calculation, and additivity across tensor-product states. For instance, this means that if we define a smoothed version $H_\alpha^{\uparrow,\eps}$ of $H_\alpha^\uparrow$ in the same way as the smoothed min-entropy definition, we immediately get the analogous chain rule
\begin{align}\label{eq:Halpha_Imax chain}
H_\alpha^{\uparrow,\delta}(S|LE)_{\rho}\geq 		H_\alpha^{\uparrow}(S|E)_{\rho}-2\log|Q|.
\end{align}
This may be useful as an alternative to the above; for instance, entropy accumulation theorems~\cite{DFR20,MFSR22} often essentially provide a bound on $H_\alpha^{\uparrow}(S|E)_{\rho}$ that is in some senses tighter than a bound on $\Hminup[,\eps](S|E)_{\rho}$, and hence Eq.~\eqref{eq:Halpha_Imax chain} would be more useful in such contexts. (The smoothing on the left-hand-side is easily handled by deriving a smoothed version of the {\Renyi} privacy amplification theorem of~\cite{Dup21}, by the same arguments as the standard proofs of privacy amplification with smoothed min-entropy.)
\end{remark}

Existing achievability and converse bounds~\cite{AWA18,ARS18,ABRNT23,ADJ17} on one-shot state redistribution protocols (for pure $\rho_{SLEP}$) suggest that the optimal quantum communication cost is characterized by some form of one-shot ``CQMI-like'' quantity, i.e.~informally speaking, $2\log|Q| \approx I_\mathrm{max}^\delta(S;L|E)$ for some notion of $I_\mathrm{max}^\delta(S;L|E)$. However, the quantities that appear in those results are often fairly complicated and not easy to analyze. In general, max-CQMI seems difficult to usefully define: for instance, even though in~\cite{BSW15} there were some attempts to define a non-smoothed version of this quantity, it is not very well-behaved. For this reason, we now present a simplified version of the above result, in which we resort to a looser term based on smooth max-information.

In particular,~\cite{ADJ17} proved an achievable rate for the quantum communication cost of a $\delta$-quantum state redistribution task, involving only the smooth max-information between suitable registers. We use that result to gain a simple bound from Theorem~\ref{thrm:Hmin_Imax chain}, as follows:
\begin{corollary}\label{cor:Hmin_Imax chain}
	For any $\rho\in\dop{=}(SLE)$, $\delta\in(0,1)$, and any $\epsilon\in[0,1)$ the following holds:
	\begin{align}
		\label{eq:Hmin_Imax chain2}
		\Hminup[,\eps+\delta](S|LE)_\rho\geq \Hminup[,\eps](S|E)_\rho-\ImaxDA[,\delta](SE;L)_\rho-\log\left(\frac{4}{\delta^2}\right)
	\end{align}
	\begin{proof}
		According to the result in~\cite{ADJ17}, there exists a $\delta$-quantum state redistribution protocol in which Alice holds $PL$, Bob's registers are trivial\footnote{This scenario is usually referred to as \term{quantum state splitting}, though it can just be viewed as a special case of quantum state redistribution. The ``looseness'' of our resulting bound (in that we have a smooth max-information instead of a ``one-shot CQMI'') can be viewed as the fact that in this special case, we are constraining the operations Bob can perform, and hence this result is suboptimal when considering the broader context of state redistribution rather than the special case of state splitting.}, the reference is $ES$, and the register to be transmitted is $L$, such that the quantum communication cost is\footnote{Strictly speaking the proof in that work only straightforwardly applies when the smoothing in $\ImaxDA[,\delta]$ is taken over normalized states, but as discussed above Eq.~\eqref{eq:specialImaxeps}, this is not an issue for this version of smooth max-information.} $\log |Q|=\frac{1}{2}\ImaxDA[,\delta](SE;L)_\rho+\log\left(\frac{2}{\delta}\right)$. Note however that this can also be validly viewed as a $\delta$-quantum state redistribution protocol where Alice holds $PL$, Bob holds $E$, and the reference is $S$; in which we just constrained Bob's decoding channel to act as the identity on $E$. Therefore, we can apply Theorem~\ref{thrm:Hmin_Imax chain} on this result, which gives
		\begin{align}
			\Hminup[,\eps+\delta](S|LE)_{\rho}&\geq \Hminup[,\eps](S|E)_{\rho}-2\log|Q|\notag\\
			&\geq \Hminup[,\eps](S|E)_{\rho} - \ImaxDA[,\delta](SE;L)_\rho-\log\left(\frac{4}{\delta^2}\right),
		\end{align}
		as claimed.
	\end{proof}
\end{corollary}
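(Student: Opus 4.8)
The plan is to derive Corollary~\ref{cor:Hmin_Imax chain} directly from Theorem~\ref{thrm:Hmin_Imax chain} by plugging in a known one-shot achievability bound for the quantum communication cost of state redistribution. The key observation is that Theorem~\ref{thrm:Hmin_Imax chain} only requires the \emph{existence} of some $\delta$-quantum state redistribution protocol with a given communication cost $\log|Q|$; it does not care how that protocol is constructed. So the work reduces to: (i) invoking an existing achievability result that produces such a protocol with $\log|Q|$ controlled by a smooth max-information, and (ii) checking that the register partition among Alice, Bob, and the reference party in that result matches (or can be massaged to match) the partition demanded by Theorem~\ref{thrm:Hmin_Imax chain}.

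Concretely, first I would recall the quantum state splitting achievability bound (a special case of state redistribution where Bob starts with no relevant register), e.g.\ from~\cite{ADJ17}: for the purification $\ket{\rho}_{SLEP}$, there is a $\delta$-state-redistribution protocol in which Alice holds $PL$, the reference holds $ES$, Bob holds nothing, and the transmitted register is $L$, with quantum communication cost $\log|Q| = \tfrac12 \ImaxDA[,\delta](SE;L)_\rho + \log(2/\delta)$. (One should be slightly careful here: the cited bound is phrased in terms of a smooth max-information with smoothing over normalized states, but as noted in the discussion around Eq.~\eqref{eq:specialImaxeps}, $\ImaxDA[,\delta]$ is scaling-invariant so this causes no loss, which I would flag in a footnote.) Second, I would observe that this same protocol is trivially also a valid $\delta$-state-redistribution protocol for the partition required by Theorem~\ref{thrm:Hmin_Imax chain} — Alice holds $PL$, Bob holds $E$, reference holds $S$ — since a protocol in which Bob's decoding channel happens to act as the identity on $E$ is a legitimate (if restricted) instance of the more general task; discarding $E$ from Bob's side at the start and reattaching it at the end is an allowed no-op. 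Third, I would feed $\log|Q| = \tfrac12 \ImaxDA[,\delta](SE;L)_\rho + \log(2/\delta)$ into Eq.~\eqref{eq:Hmin_Imax chain}, so that $2\log|Q| = \ImaxDA[,\delta](SE;L)_\rho + 2\log(2/\delta) = \ImaxDA[,\delta](SE;L)_\rho + \log(4/\delta^2)$, giving exactly the claimed inequality.

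The main obstacle — really the only substantive subtlety — is the register-bookkeeping step: making sure the roles of $S$, $L$, $E$, $P$ in the state-splitting achievability statement line up with the roles required by Theorem~\ref{thrm:Hmin_Imax chain}, and in particular justifying that constraining Bob's decoder to be identity on $E$ does not invalidate the protocol or change its communication cost. This is conceptually easy but easy to get backwards, since the ``reference'' register in the state-splitting result is $ES$ whereas in Theorem~\ref{thrm:Hmin_Imax chain} the reference is only $S$; the reconciliation is that $E$ is simply a side register Bob holds and never touches, which from the reference's perspective is indistinguishable from it being part of the reference's own system up to relabelling. Everything else is a one-line arithmetic substitution. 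I would present the proof in essentially three lines: cite~\cite{ADJ17} for the splitting protocol and its cost, note the reinterpretation as a redistribution protocol with Bob holding $E$, then apply Theorem~\ref{thrm:Hmin_Imax chain} and simplify $2\log(2/\delta) = \log(4/\delta^2)$.
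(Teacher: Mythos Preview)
Your proposal is correct and matches the paper's proof essentially line for line: invoke the state-splitting achievability bound from~\cite{ADJ17} with Alice holding $PL$, reference $ES$, transmitted register $L$, and cost $\log|Q|=\tfrac12\ImaxDA[,\delta](SE;L)_\rho+\log(2/\delta)$; reinterpret it as a redistribution protocol with Bob holding $E$ and acting trivially on it; then substitute into Theorem~\ref{thrm:Hmin_Imax chain}. You have also flagged exactly the same two subtleties the paper notes (the normalized-state smoothing issue handled via Eq.~\eqref{eq:specialImaxeps}, and the register-role reconciliation).
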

The above chain rule is conceptually a tighter bound (apart from possible ``finite-size effects'' regarding the $\eps$ and $\delta$ dependencies) than the previous result of~\cite{VDT13} based on smooth max-entropy, in that it is based on a more natural quantity of smooth max-information, which measures the correlation between the extra conditioning register and the rest of the registers. 

The chain rule presented in Corollary~\ref{cor:Hmin_Imax chain} can be used in cryptography scenarios where we are dealing with device imperfections. One way to employ the above result in a QKD protocol is to set $S$ be the raw secret data, $E$ to be Eve's side information, and $L$ to characterize the leakage caused by having imperfections in the measurement devices (over the entire course of the protocol). In such a scenario, when removing conditioning on the leakage register, naturally one would expect to have a penalty term that measures the amount of leaked information. This expectation is met by the chain rule in  Eq.~\eqref{eq:Hmin_Imax chain2}.
Another possible application would be to address issues regarding protocol composability~\cite{arx_PR21,BCK13}; we discuss this further in the conclusion section.

In fact, it is also possible to prove a chain rule of the form in Corollary~\ref{cor:Hmin_Imax chain} directly from the SDP characterization of min-entropy~\cite{Tom16}, although with $\Imaxnone[\eps]$ instead of $\ImaxDA[,\eps]$. First, we present a version without smoothing:
\begin{theorem} \label{thm:chain_rule_non_smoothed}
For any $\rho\in\dop{\leq}(SLE)$, the following holds:
\begin{equation}
\Hminup(S|LE)_{\rho} \geq \Hminup(S|E)_{\rho} - \Imaxnone(L:SE)_{\rho} \,.
\end{equation}
\end{theorem}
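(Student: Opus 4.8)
The plan is to use the semidefinite-program (SDP) characterization of the min-entropy. Unwinding the definitions of $\Hminup$ and $\Dmax$, for any $\rho\in\dop{=}(AB)$ one has
\[
2^{-\Hminup(A|B)_\rho} \;=\; \inf\bigl\{\tr{\tau_B} \;:\; \tau_B\in\Pos(B),\ \id_A\otimes\tau_B \geq \rho_{AB}\bigr\};
\]
indeed, writing $\Hminup(A|B)_\rho = -\inf_{\sigma_B\in\dop{=}(B)}\Dmax(\rho_{AB}\|\id_A\otimes\sigma_B)$ and $\Dmax(\rho_{AB}\|\id_A\otimes\sigma_B) = \log\inf\{\mu\geq 0 : \rho_{AB}\leq\mu\,\id_A\otimes\sigma_B\}$, one absorbs the scalar $\mu$ into $\tau_B\defvar\mu\sigma_B$ to drop the normalization constraint. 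So to prove the theorem it suffices to show that any $\tau_E$ feasible for the program computing $2^{-\Hminup(S|E)_\rho}$ can be converted into a feasible $\tau_{LE}$ for the program computing $2^{-\Hminup(S|LE)_\rho}$ with $\tr{\tau_{LE}} \leq 2^{\Imaxnone(L:SE)_\rho}\tr{\tau_E}$.

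To do so, set $t\defvar 2^{\Imaxnone(L:SE)_\rho}$ (if this is $+\infty$ the bound is trivial), which by the definition of $\Dmax$ together with $\Imaxnone(L:SE)_\rho = \Dmax(\rho_{LSE}\|\rho_L\otimes\rho_{SE})$ means $\rho_{SLE} \leq t\,(\rho_L\otimes\rho_{SE})$. Given a feasible $\tau_E$, i.e.\ $\tau_E\in\Pos(E)$ with $\id_S\otimes\tau_E\geq\rho_{SE}$, define $\tau_{LE}\defvar t\,(\rho_L\otimes\tau_E)$. Tensoring the inequality $\id_S\otimes\tau_E\geq\rho_{SE}$ by the positive operator $\rho_L$ preserves the Löwner order, so, reordering tensor factors as needed,
\[
\id_S\otimes\tau_{LE} \;=\; t\,\bigl(\id_S\otimes\rho_L\otimes\tau_E\bigr) \;\geq\; t\,\bigl(\rho_L\otimes\rho_{SE}\bigr) \;\geq\; \rho_{SLE},
\]
so $\tau_{LE}$ is feasible for the $S|LE$ program, and $\tr{\tau_{LE}} = t\,\tr{\rho_L}\tr{\tau_E} = t\,\tr{\tau_E}$ since $\rho_L$ is normalized. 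Taking the infimum over feasible $\tau_E$ gives $2^{-\Hminup(S|LE)_\rho}\leq t\cdot 2^{-\Hminup(S|E)_\rho}$; taking $-\log$ and rearranging yields the claim.

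The argument is short and I do not anticipate a genuine obstacle. The only points needing a little care are (i) correctly passing from the $\Dmax$ form to the $\Pos(B)$-constrained SDP for the min-entropy (in particular, justifying that the normalization on $\sigma_B$ may be dropped), and (ii) the bookkeeping of tensor-factor orderings in the feasibility check, both of which are routine. The one actual idea in the proof is guessing the right feasible point: reuse an optimizer for the $S|E$ program, tensored with the marginal $\rho_L$ and scaled by $2^{\Imaxnone(L:SE)_\rho}$ — the scaling by the max-information is precisely what is needed to restore the operator inequality after adjoining the leakage register $L$.
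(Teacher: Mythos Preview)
Your proof is correct and is essentially the same as the paper's: both use the SDP characterization of $\Hminup$, the operator inequality $\rho_{SLE}\leq 2^{\Imaxnone(L:SE)_\rho}\rho_L\otimes\rho_{SE}$ from the definition of $\Dmax$, and then construct the feasible point $\tau_{LE}=2^{\Imaxnone(L:SE)_\rho}\rho_L\otimes\tau_E$ from an optimizer (or feasible point) $\tau_E$ of the $S|E$ program. The only cosmetic difference is that the paper fixes the optimizer $\omega_E^*$ upfront whereas you carry a generic feasible $\tau_E$ and take the infimum at the end.
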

\begin{proof}
Consider the SDP for $\Dmax$~\cite[Chapter~4.2.4]{Tom16}:
\begin{equation}
\begin{aligned}
2^{\Dmax(\rho_{SLE} || \rho_L \otimes \rho_{SE})} = 
& \underset{\mu}{\text{minimize}} & & \mu \\
& \text{subject to} & & \rho_{SLE} \leq \mu \rho_L \otimes \rho_{SE} \\
&&& \mu \geq 0 \,.
\end{aligned}
\end{equation}
From this, it is clear that 
\begin{equation}
\rho_{SLE} \leq 2^{\Dmax(\rho_{SLE} || \rho_L \otimes \rho_{SE})} \rho_L \otimes \rho_{SE} \,. \label{eqn:dmax_bound}
\end{equation}
Now consider the SDP for $\Hminup$~\cite[Definition~6.2]{Tom16} (note that this is consistent with the $\Hminup$ definition we used; Definition~\ref{def:sandwiched entropy and info}): 
\begin{align*}
2^{-\Hminup(S|LE)_{\rho}} & = & \underset{\sigma_{LE}}{\text{minimize}}
&& &  \tr{\sigma_{LE}} \\
&& \text{subject to}
&& & \id_S \otimes \sigma_{LE} \geq \rho_{SLE} \\
&&&&& \sigma_{LE} \geq 0 \\
&\leq & \underset{\sigma_{LE}}{\text{minimize}}
&& &  \tr{\sigma_{LE}} \numberthis\label{eqn:hmin_sdp} \\
&& \text{subject to}
&& & \id_S \otimes \sigma_{LE} \geq 2^{\Dmax(\rho_{SLE} || \rho_L \otimes \rho_{SE})} \rho_L \otimes \rho_{SE} \\
&&&&& \sigma_{LE} \geq 0 \,.
\end{align*}
The inequality holds because by Eq.~\eqref{eqn:dmax_bound}, the feasible set in the second SDP is smaller.

We further consider
\begin{equation}
\begin{aligned}
2^{-\Hminup(S|E)_{\rho}} = & \underset{\omega_E}{\text{minimize}}
& &  \tr{\omega_E} \\
& \text{subject to}
& & \id_S \otimes \omega_E \geq \rho_{SE} \\
&&& \omega_E \geq 0 \,,
\end{aligned}
\end{equation}
and denote an optimizer for this by $\omega_E^*$, i.e. $\omega_E^*$ is feasible and $\tr{\omega_E^*} = 2^{-\Hminup(S|E)_{\rho}}$.

For the SDP in Eq.~\eqref{eqn:hmin_sdp}, we now pick 
\begin{equation}
\sigma_{LE} = 2^{\Dmax(\rho_{SLE} || \rho_L \otimes \rho_{SE})} \rho_L \otimes \omega_E^* \,. 
\end{equation}
This choice of $\sigma_{LE}$ is feasible since  
\begin{align}
\id_S \otimes \sigma_{LE} = 2^{\Dmax(\rho_{SLE} || \rho_L \otimes \rho_{SE})} \id_S \otimes \rho_L \otimes \omega_E^* \geq 2^{\Dmax(\rho_{SLE} || \rho_L \otimes \rho_{SE})} \rho_L \otimes \rho_{SE} \,, 
\end{align}
where the inequality holds because $\omega_E^*$ is feasible, i.e., $\id_S \otimes \omega_E^* \geq \rho_{SE}$. Therefore we get 
\begin{align}
2^{-\Hminup(S|LE)_{\rho_{SLE}}} &\leq \tr{2^{\Dmax(\rho_{SLE} || \rho_L \otimes \rho_{SE})} \rho_L \otimes \omega_E^*} \notag\\
&\leq 2^{\Dmax(\rho_{SLE} || \rho_L \otimes \rho_{SE})} \cdot \tr{\omega_E^*} \notag\\
&= 2^{\Dmax(\rho_{SLE} || \rho_L \otimes \rho_{SE}) -\Hminup(S|E)_{\rho} } \,.
\end{align}\
Taking the logarithm yields the result.
\end{proof}

\newcommand{\sth}{{\text{~s.t.~}}}
\newcommand{\optstate}{\hat{\omega}}
We now use this to prove a chain rule for the corresponding smoothed quantities.
\begin{corollary} \label{cor:chain_rule_smoothed}
For any $\rho\in\dop{\leq}(SLE)$ and $\eps \in [0,1)$, the following holds:
\begin{equation}
\Hminup[,\eps](S|LE)_{\rho} \geq \Hminup[,\eps/3](S|E)_{\rho} - \Imaxnone[\eps/3](L:SE)_\rho \,.
\end{equation}
\end{corollary}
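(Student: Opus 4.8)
The plan is to mimic, at the level of smoothed quantities, the SDP certificate behind Theorem~\ref{thm:chain_rule_non_smoothed}, spending the budget $\eps$ as $\eps/3$ on the max-information term, $\eps/3$ on the min-entropy term, and a reserve $\eps/3$ to reconcile the two optimizing states. Using the $\Dmax^\eps$-form of the smooth min-entropy from Def.~\ref{def:smoothing}, it suffices to exhibit one $\sigma_{LE}\in\dop{=}(LE)$ with
\begin{align}
\Dmax^{\eps}\!\left(\rho_{SLE}\,\middle\|\,\id_S\otimes\sigma_{LE}\right)\le \Imaxnone[\eps/3](L:SE)_\rho - \Hminup[,\eps/3](S|E)_\rho ,
\end{align}
since $\Hminup[,\eps](S|LE)_\rho=-\inf_{\sigma_{LE}\in\dop{=}(LE)}\Dmax^\eps(\rho_{SLE}\|\id_S\otimes\sigma_{LE})$. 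Write $I\defvar\Imaxnone[\eps/3](L:SE)_\rho$ and $h\defvar\Hminup[,\eps/3](S|E)_\rho$.

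First I would extract the two certificates. By definition of $\Imaxnone[\eps/3]$ there is a $\bar\rho_{SLE}$ with $\pd(\bar\rho_{SLE},\rho_{SLE})\le\eps/3$ and $\bar\rho_{SLE}\le 2^{I}\,\bar\rho_L\otimes\bar\rho_{SE}$ (i.e.\ $\Dmax(\bar\rho_{SLE}\|\bar\rho_L\otimes\bar\rho_{SE})=I$, which is finite since the joint support is always contained in the product of the marginal supports). By definition of $\Hminup[,\eps/3]$ there is a normalized $\omega_E\in\dop{=}(E)$ with $\Dmax^{\eps/3}(\rho_{SE}\|\id_S\otimes\omega_E)=-h$ (with an $\eta\to0$ argument if the infimum over $\omega_E$ is not attained). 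I then take $\sigma_{LE}$ to be the normalization of $\bar\rho_L\otimes\omega_E$; passing to the normalization only decreases $\Dmax^\eps(\rho_{SLE}\|\id_S\otimes\,\cdot\,)$, so it is enough to bound $\Dmax^{\eps}(\rho_{SLE}\|\id_S\otimes\bar\rho_L\otimes\omega_E)$.

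The heart of the proof is the estimate $\Dmax^{\eps}(\rho_{SLE}\|\id_S\otimes\bar\rho_L\otimes\omega_E)\le I-h$, in three moves. (i) Since $\pd(\bar\rho_{SLE},\rho_{SLE})\le\eps/3$, any state $2\eps/3$-close to $\bar\rho_{SLE}$ is $\eps$-close to $\rho_{SLE}$, so $\Dmax^{\eps}(\rho_{SLE}\|\,\cdot\,)\le\Dmax^{2\eps/3}(\bar\rho_{SLE}\|\,\cdot\,)$. (ii) Apply a triangle-type inequality for the smoothed max-divergence through the intermediate state $\bar\rho_L\otimes\bar\rho_{SE}$:
\begin{align}
\Dmax^{2\eps/3}\!\left(\bar\rho_{SLE}\,\middle\|\,\id_S\otimes\bar\rho_L\otimes\omega_E\right)\le \Dmax\!\left(\bar\rho_{SLE}\,\middle\|\,\bar\rho_L\otimes\bar\rho_{SE}\right)+\Dmax^{2\eps/3}\!\left(\bar\rho_L\otimes\bar\rho_{SE}\,\middle\|\,\id_S\otimes\bar\rho_L\otimes\omega_E\right),
\end{align}
whose first term is exactly $I$. (iii) In the last term strip off the common tensor factor $\bar\rho_L$: smoothing only the $SE$-part gives $\Dmax^{2\eps/3}(\bar\rho_L\otimes\bar\rho_{SE}\|\id_S\otimes\bar\rho_L\otimes\omega_E)\le\Dmax^{2\eps/3}(\bar\rho_{SE}\|\id_S\otimes\omega_E)$, and this is $\le-h$ because the state certifying $\Dmax^{\eps/3}(\rho_{SE}\|\id_S\otimes\omega_E)=-h$ is $\eps/3$-close to $\rho_{SE}$, hence $2\eps/3$-close to $\bar\rho_{SE}$ (triangle inequality for $\pd$), hence feasible for the $2\eps/3$-smoothing around $\bar\rho_{SE}$. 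Combining (i)--(iii) and inserting $\sigma_{LE}$ into the variational formula for $\Hminup[,\eps](S|LE)_\rho$ yields the corollary.

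The only non-bookkeeping ingredient is the triangle inequality used in move~(ii): from $\bar\rho_{SLE}\le 2^{I}(\bar\rho_L\otimes\bar\rho_{SE})$ and a near-optimal smoothing $\tilde\tau$ of $\bar\rho_L\otimes\bar\rho_{SE}$ with $\tilde\tau\le 2^{-h}(\id_S\otimes\bar\rho_L\otimes\omega_E)$, one must produce a state $\eps$-close to $\bar\rho_{SLE}$ and dominated by $2^{I-h}(\id_S\otimes\bar\rho_L\otimes\omega_E)$; the natural witness is the ``filtered'' operator $(\id_L\otimes N)\,\bar\rho_{SLE}\,(\id_L\otimes N^{\dagger})$ with $N=\tilde\tau^{1/2}(\bar\rho_L\otimes\bar\rho_{SE})^{-1/2}$ (pseudo-inverse on the relevant support), and the work lies in bounding its purified distance to $\bar\rho_{SLE}$ and in handling the support/pseudo-inverse technicalities. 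I expect this to be the main obstacle; since such triangle inequalities for the smooth max-divergence are standard (cf.~\cite{Tom16}), in the write-up I would cite it rather than reprove it, leaving the remainder as routine manipulation of purified distances.
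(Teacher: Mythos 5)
Your proposal is, at its core, the same argument as the paper's: reduce to the unsmoothed statement (whose SDP certificate is exactly your choice $\sigma_{LE}\propto\bar\rho_L\otimes\omega_E$), split the budget as $\eps/3+\eps/3+\eps/3$, control the max-information term by optimizing it separately over an $\eps/3$-ball, and transfer the $\eps/3$-smoothing of the $SE$-marginal onto the full state $\bar\rho_{SLE}$ at a cost of $2\eps/3$ via a filtering operator acting on $SE$ only. The one point I would push back on is the justification of your move~(ii): a triangle inequality of the form $\Dmax^{\delta}(\rho\Vert\sigma)\le \Dmax(\rho\Vert\tau)+\Dmax^{\delta}(\tau\Vert\sigma)$, with the smoothing sitting on the \emph{middle} state $\tau$, is not an off-the-shelf result in~\cite{Tom16} and is dubious for arbitrary $\tau$ --- the filter $N=\tilde\tau^{1/2}\tau^{-1/2}$ maps $\tau$ close to $\tilde\tau$, but nothing controls its action on a generic $\rho\le 2^a\tau$. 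What saves you is the specific structure you have after move~(iii): the smoothing of $\tau=\bar\rho_L\otimes\bar\rho_{SE}$ is confined to the $SE$ tensor factor, and $\bar\rho_{SE}$ is the genuine marginal of $\bar\rho_{SLE}$, so the filter is $\id_L\otimes N_{SE}$ with $N_{SE}=\hat\rho_{SE}^{1/2}U\bar\rho_{SE}^{-1/2}$ for a suitable Uhlmann unitary $U$, and then $\pd\bigl(N_{SE}\bar\rho_{SLE}N_{SE}^\dagger,\bar\rho_{SLE}\bigr)=\pd(\hat\rho_{SE},\bar\rho_{SE})\le 2\eps/3$ is precisely \cite[Lemma~B.3]{DBWR14} --- the same lemma the paper invokes. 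So the step you flag as ``the main obstacle'' is real, but it is closed by citing that lemma rather than a generic triangle inequality; with that substitution your write-up goes through and reproduces the paper's bound.
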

\begin{proof}
The proof technique we use here is similar to the one used in \cite{CBR14,BCR11,DBWR14}. For compactness of notation, let us write $B_\eps(\rho) \subset \dop{\leq}(SLE)$ to denote the set of subnormalised states $\eps$-close to $\rho$ in purified distance, and $L(R)$ to denote the set of linear operators on some register $R$.
The proof follows from a sequence of lower bounds on $\Hminup[,\eps](X|BC)_{\rho}$. First, from Theorem~\ref{thm:chain_rule_non_smoothed} we have 
\begin{align}
\Hminup[,\eps](S|LE)_{\rho} 
&\geq  \max_{\omega \in B_\eps(\rho)} \left(  \Hminup(S|E)_{\omega} - \Imaxnone(L:SE)_{\omega}  \right) \label{eqn:smoothed1}
\end{align}
We can lower bound the maximization by splitting it into two parts as follows:
\begin{align}
\text{\eqref{eqn:smoothed1}} \geq \max_{\optstate \in B_{\eps/3}(\rho)} \quad 
\max_{
\substack{T_{SE} \in L(SE) \sth \\ T_{SE}\optstate_{SLE}T_{SE}^\dagger \in B_{2\eps/3}(\optstate)}
} \left(  \Hminup(S|E)_{T_{SE}\optstate_{SLE}T_{SE}^\dagger} - \Imaxnone(L:SE)_{T_{SE}\optstate_{SLE}T_{SE}^\dagger}  \right) \label{eqn:smoothed2}
\end{align}
The above is a lower bound because the maximisation in Eq.~\eqref{eqn:smoothed2} is over a subset of the maximisation in Eq.~\eqref{eqn:smoothed1}.
We now observe that since $\Dmax$ satisfies data-processing for any completely positive map~\cite{Tom16}, it follows that $\Imaxnone(L:SE)_{T_{SE}\optstate_{SLE}T_{SE}^\dagger} \leq \Imaxnone(L:SE)_{\optstate_{SLE}}$. Therefore, it is in turn lower bounded by:
\begin{align}
\text{\eqref{eqn:smoothed2}} \geq \max_{\optstate \in B_{\eps/3}(\rho)} \quad \max_{
\substack{T_{SE} \in L(SE) \sth \\ T_{SE}\optstate_{SLE}T_{SE}^\dagger \in B_{2\eps/3}(\optstate)}
} \left(  \Hminup(S|E)_{T_{SE}\optstate_{SLE}T_{SE}^\dagger} - \Imaxnone(L:SE)_{\optstate_{SLE}}  \right) \label{eqn:smoothed3}
\end{align}
We now pick $\optstate \in B_{\eps/3}(\rho)$ such that $\Imaxnone(L:SE)_{\optstate_{SLE}} = \Imaxnone[\eps/3](L:SE)_{\rho}$, obtaining another lower bound:
\begin{align}
\text{\eqref{eqn:smoothed3}} \geq \max_{
\substack{T_{SE} \in L(SE) \sth\\ T_{SE}\optstate_{SLE}T_{SE}^\dagger \in B_{2\eps/3}(\optstate)}
} \left(  \Hminup(S|E)_{T_{SE}\optstate_{SLE}T_{SE}^\dagger}\right) - \Imaxnone[\eps/3](L:SE)_{\rho}
\end{align}
Let $\tilde{\rho}_{SE} \in B_{\eps/3}(\rho_{SE})$ be such that $\Hminup[,\eps/3](S|E)_{\rho} = \Hminup(S|E)_{\tilde{\rho}}$.
By \cite[Lemma B.3]{DBWR14} there exists a $\hat T_{SE}$ such that $\hat T_{SE} \optstate_{SLE} \hat T_{SE}^\dagger \in \dop{\leq}(SLE)$ is an extension of $\tilde{\rho}_{SE}$ and $\pd(\optstate_{SLE}, \hat T_{SE} \optstate_{SLE} \hat T_{SE}^\dagger) = \pd(\optstate_{SE}, \tilde{\rho}_{SE})$.
We can bound the latter quantity using the triangle inequality for the purified distance, obtaining $ \pd(\optstate_{SE}, \tilde{\rho}_{SE}) \leq  \pd(\optstate_{SE}, \rho_{SE}) +  \pd(\rho_{SE}, \tilde{\rho}_{SE}) \leq \eps/3+\eps/3 = 2\eps/3$.
Therefore, this $\hat T_{SE}$ lies inside the set of $T_{SE}$ over which we optimise, so we can pick $\hat T_{SE}$ to get a final lower bound:
\begin{align}
\Hminup(S|E)_{\tilde{\rho}} - \Imaxnone[\eps/3](L:SE)_{\rho}
= \Hminup[,\eps/3](S|E)_{\rho} - \Imaxnone[\eps/3](L:SE)_{\rho} \,,
\end{align}
as claimed.
\end{proof}

\section{Information bounding theorem}
\label{sec:Info_Acc}

We now derive a theorem that bounds the \Renyi\ mutual information produced by a sequence of channels, in a similar sense to the entropy accumulation theorem~\cite{DFR20,MFSR22}, except that here we instead have an upper bound (which in some sense cannot have a ``matching'' lower bound; see Remark~\ref{remark:MMI} later).
To achieve this, we start with the following lemma which can be used to establish a chain rule for \Renyi\ mutual information.

\begin{lemma}\label{lemma:D_chain}
	Let $\rho\in\dop{=}(ABC)$, $\sigma\in\Pos(C)$, $\omega'\in\dop{=}(A)$, and $\omega''\in\dop{=}(B)$. Then, for any $\alpha\in [0,\infty)$ we have
\begin{align}
	\label{eq:D_chain}
	\inf_{\omega_{AB}\in\dop{=}(AB)} D_\alpha(\rho_{ABC}||\omega_{AB}\otimes\sigma_C)\leq D_\alpha(\rho_{AC}||\omega'_{A}\otimes\sigma_C)+D_\alpha(\nu_{ABC}||\omega''_{B}\otimes\nu_{AC}),
\end{align}
where $\nu_{ABC}=\nu_{AC}^{\frac{1}{2}}\rho_{B|AC}\nu_{AC}^{\frac{1}{2}}$, with
\begin{align}
	\label{eq:nu}
	\nu_{AC}=\frac{\left(\rho_{AC}^{\frac{1}{2}}\left(\omega'_A\otimes\sigma_C\right)^{\frac{1-\alpha}{\alpha}}\rho_{AC}^{\frac{1}{2}}\right)^\alpha}{\Tr\left[\left(\rho_{AC}^{\frac{1}{2}}\left(\omega'_A\otimes\sigma_C\right)^{\frac{1-\alpha}{\alpha}}\rho_{AC}^{\frac{1}{2}}\right)^\alpha\right]}.
\end{align}
In particular
\begin{align}
	\label{eq:minfo_acc}
	\IDA_\alpha(C;AB)_\rho\leq\IDA_\alpha(C;A)_\rho+\IDA_\alpha(AC;B)_\nu
\end{align}
\begin{proof}
The proof follows from the following crucial relation:
\begin{align}
	\label{eq:D_chain_equality}
		& D_\alpha(\rho_{ABC}||\omega'_{A}\otimes\omega''_B\otimes\sigma_C)\notag\\
		=&\frac{1}{\alpha-1}\log\Tr\left[\left(\rho_{ABC}\left(\omega'_A\otimes\omega''_B\otimes\sigma_C\right)^\frac{1-\alpha}{\alpha}\right)^\alpha\right]\notag\\
		=&\frac{1}{\alpha-1}\log\Tr\left[\left(\rho_{ABC}\rho_{AC}^{-\frac{1}{2}}\rho_{AC}^{\frac{1}{2}}\left(\omega'_A\otimes\omega''_B\otimes\sigma_C\right)^\frac{1-\alpha}{\alpha}\rho_{AC}^{\frac{1}{2}}\rho_{AC}^{-\frac{1}{2}}\right)^\alpha\right]\notag\\
		=&\frac{1}{\alpha-1}\log\Tr\left[\left(\rho_{ABC}\rho_{AC}^{-\frac{1}{2}}\rho_{AC}^{\frac{1}{2}}\left(\omega'_A\otimes\sigma_C\right)^\frac{1-\alpha}{\alpha}\rho_{AC}^{\frac{1}{2}}\rho_{AC}^{-\frac{1}{2}}\otimes\omega''^{\frac{1-\alpha}{\alpha}}_B\right)^\alpha\right]\notag\\
		=&\frac{1}{\alpha-1}\log\Tr\left[\left(\rho_{ABC}\rho_{AC}^{-\frac{1}{2}}\nu_{AC}^{\frac{1}{\alpha}}\rho_{AC}^{-\frac{1}{2}}\otimes\omega''^{\frac{1-\alpha}{\alpha}}_B\right)^\alpha\right]+\frac{1}{\alpha-1}\log\Tr\left[\left(\rho_{AC}^{\frac{1}{2}}\left(\omega'_A\otimes\sigma_C\right)^{\frac{1-\alpha}{\alpha}}\rho_{AC}^{\frac{1}{2}}\right)^\alpha\right]\notag\\
		=&\frac{1}{\alpha-1}\log\Tr\left[\left(\rho_{B|AC}\nu_{AC}^{\frac{1}{\alpha}}\otimes\omega''^{\frac{1-\alpha}{\alpha}}_B\right)^\alpha\right]+D_\alpha(\rho_{AC}||\omega'_{A}\otimes\sigma_C)\notag\\
		=&\frac{1}{\alpha-1}\log\Tr\left[\left(\nu_{ABC}\left(\nu_{AC}\otimes\omega''_B\right)^{\frac{1-\alpha}{\alpha}}\right)^\alpha\right]+D_\alpha(\rho_{AC}||\omega'_{A}\otimes\sigma_C)\notag\\
		=&D_\alpha(\nu_{ABC}||\omega''_{B}\otimes\nu_{AC})+D_\alpha(\rho_{AC}||\omega'_{A}\otimes\sigma_C),
\end{align}
where the second line follows from the cyclic properties of trace, fourth line is just a reshuffling, fifth line follows from substituting Eq.~\eqref{eq:nu}, the sixth line is a reshuffling and using the definition of conditional state in Def.~\ref{def:cond_state}, and the penultimate line is from definition of $\nu_{ABC}$, and reshuffling terms. 

From Eq.~\eqref{eq:D_chain_equality} we immediately obtain Eq.~\eqref{eq:D_chain} as a consequence, 
by simply picking $\omega_{AB} = \omega'_A \otimes \omega''_B$ as a feasible point in the infimum over $\omega_{AB}$.
Now to prove Eq.~\eqref{eq:minfo_acc}, we pick $\sigma_C=\rho_C$, and $\omega''_B=\omega^*_B, \omega'_A=\omega^*_A$ such that the following relations hold:
 \begin{align}
 	&D_\alpha(\nu_{ABC}||\omega^*_B\otimes\nu_{AC})=\inf_{\sigma_{B}\in\dop{=}(B)}D_\alpha(\nu_{ABC}||\sigma_{B}\otimes\nu_{AC})\notag\\
 	&D_\alpha(\rho_{AC}||\omega^*_A\otimes\rho_{C})=\inf_{\sigma_{A}\in\dop{=}(A)}D_\alpha(\rho_{AC}||\sigma_{A}\otimes\rho_{C}).
 \end{align}
 With these choices, Eq.~\eqref{eq:D_chain} states:
 \begin{align}
 	\IDA_\alpha(C;AB)_\rho&=\inf_{\sigma_{AB}\in\dop{=}(AB)} D_\alpha(\rho_{ABC}||\sigma_{AB}\otimes\rho_C)\notag\\&\leq D_\alpha(\rho_{AC}||\omega^*_A\otimes\rho_{C})+D_\alpha(\nu_{ABC}||\omega^*_B\otimes\nu_{AC})\notag\\
 	&=\IDA_\alpha(C;A)_\rho+\IDA_\alpha(AC;B)_\nu,
 \end{align}
 where the second line follows from definition~\ref{def:sandwiched entropy and info}.
\end{proof}
\end{lemma}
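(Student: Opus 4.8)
The plan is to deduce Eq.~\eqref{eq:D_chain} from the stronger \emph{exact} identity
\begin{align*}
D_\alpha(\rho_{ABC}\|\omega'_A\otimes\omega''_B\otimes\sigma_C)
&= D_\alpha(\rho_{AC}\|\omega'_A\otimes\sigma_C) + D_\alpha(\nu_{ABC}\|\omega''_B\otimes\nu_{AC}),
\end{align*}
after which Eq.~\eqref{eq:D_chain} is immediate since $\omega'_A\otimes\omega''_B$ is one feasible point for the infimum over $\omega_{AB}$ on its left-hand side. To prove the identity, I would use the trace form $D_\alpha(\rho\|\tau)=\frac{1}{\alpha-1}\log\Tr[(\rho\,\tau^{(1-\alpha)/\alpha})^\alpha]$ (equivalent to the sandwiched form of Def.~\ref{def:sandwiched divergence} by the similarity $\rho\tau^{(1-\alpha)/\alpha}\sim\rho^{1/2}\tau^{(1-\alpha)/\alpha}\rho^{1/2}$ and cyclicity of the trace), with $\tau=(\omega'_A\otimes\sigma_C)\otimes\omega''_B$ so that $\tau^{(1-\alpha)/\alpha}$ factorizes accordingly. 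Inserting the resolution $\rho_{AC}^{-1/2}\rho_{AC}^{1/2}$ on each side of the $(\omega'_A\otimes\sigma_C)^{(1-\alpha)/\alpha}$ block, the bracketed operator $\rho_{AC}^{1/2}(\omega'_A\otimes\sigma_C)^{(1-\alpha)/\alpha}\rho_{AC}^{1/2}$ equals $Z^{1/\alpha}\nu_{AC}^{1/\alpha}$ by definition~\eqref{eq:nu}, where $Z$ is its own normalizing trace-power; pulling the scalar $Z$ out of the outer $\alpha$-power splits off the factor $\tfrac{1}{\alpha-1}\log Z = D_\alpha(\rho_{AC}\|\omega'_A\otimes\sigma_C)$. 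Rewriting $\rho_{ABC}\rho_{AC}^{-1/2}=\rho_{AC}^{1/2}\rho_{B|AC}$ via Def.~\ref{def:cond_state}, stripping the conjugating $\rho_{AC}^{\pm1/2}$ (and then $\nu_{AC}^{\pm1/2}$) using $\Tr[(PXP^{-1})^\alpha]=\Tr[X^\alpha]$, and recognizing $\nu_{AC}^{1/2}\rho_{B|AC}\nu_{AC}^{1/2}=\nu_{ABC}$, the leftover trace becomes exactly $\Tr[(\nu_{ABC}(\nu_{AC}\otimes\omega''_B)^{(1-\alpha)/\alpha})^\alpha]$, i.e.~$D_\alpha(\nu_{ABC}\|\omega''_B\otimes\nu_{AC})$ after $\tfrac{1}{\alpha-1}\log$.

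For the specialization Eq.~\eqref{eq:minfo_acc}, I would set $\sigma_C=\rho_C$ (allowed since $\sigma_C$ ranges over all of $\Pos(C)$) and pick $\omega'_A$, $\omega''_B$ to attain the infima in $\IDA_\alpha(C;A)_\rho=\inf_{\sigma_A}D_\alpha(\rho_{AC}\|\rho_C\otimes\sigma_A)$ and $\IDA_\alpha(AC;B)_\nu=\inf_{\sigma_B}D_\alpha(\nu_{ABC}\|\nu_{AC}\otimes\sigma_B)$ respectively — not circular, since $\nu_{AC}$ is a function only of $\rho_{AC},\omega'_A,\sigma_C$, so $\omega'_A$ is fixed first and $\omega''_B$ second. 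Then the left side of Eq.~\eqref{eq:D_chain} is $\inf_{\omega_{AB}}D_\alpha(\rho_{ABC}\|\omega_{AB}\otimes\rho_C)=\IDA_\alpha(C;AB)_\rho$ by Def.~\ref{def:sandwiched entropy and info}, and the right side is precisely $\IDA_\alpha(C;A)_\rho+\IDA_\alpha(AC;B)_\nu$.

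I expect the main obstacle to be the operator bookkeeping in the exact identity, specifically justifying the similarity-invariance steps and the factorization of the outer $\alpha$-power when $\rho_{AC}$ (or $\sigma_C,\omega',\omega''$) is rank-deficient: there the inverses must be read as Moore--Penrose pseudoinverses and one must verify that all supports nest correctly, so that in particular $\nu_{AC}$ is a bona fide normalized state with $\Tr_B\nu_{ABC}=\nu_{AC}$ and $\supp(\nu_{AC})\subseteq\supp(\rho_{AC})$. The boundary cases $\alpha\in\{0,1\}$ (and the sign reversal of $1/(\alpha-1)$ across $\alpha=1$) are handled by taking limits, the equality being preserved; and if any term on the right-hand side is $+\infty$ the bound holds trivially.
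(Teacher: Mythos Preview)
Your proposal is correct and follows essentially the same route as the paper: both establish the exact identity $D_\alpha(\rho_{ABC}\|\omega'_A\otimes\omega''_B\otimes\sigma_C)=D_\alpha(\rho_{AC}\|\omega'_A\otimes\sigma_C)+D_\alpha(\nu_{ABC}\|\omega''_B\otimes\nu_{AC})$ via the trace form of the sandwiched divergence, the insertion $\rho_{AC}^{-1/2}\rho_{AC}^{1/2}$, and the similarity/cyclicity argument, then specialize to Eq.~\eqref{eq:D_chain} and Eq.~\eqref{eq:minfo_acc} by the same choices of feasible point and optimizers. Your explicit remark that $\omega'_A$ must be fixed before $\omega''_B$ (since $\nu_{AC}$ depends on $\omega'_A$) and your attention to the support issues in the rank-deficient case are points the paper leaves implicit.
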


\begin{remark}\label{remark:MMI}
Eq.~\eqref{eq:D_chain_equality} in the above proof (which is an \emph{equality}, not an inequality) can be viewed as a {\Renyi} analogue of the following relation for von Neumann mutual information:
\begin{align}
D(\rho_{ABC}||\rho_{A}\otimes\rho_B\otimes\rho_C) =  D(\rho_{AC}||\rho_{A}\otimes\rho_C)+D(\rho_{ABC}||\rho_{B}\otimes\rho_{AC}) =  I(A:C)+I(AC:B),
\end{align}
where the quantity on the left-hand-side is sometimes termed the \term{multipartite mutual information} $I(A:B:C) \defvar D(\rho_{ABC}||\rho_{A}\otimes\rho_B\otimes\rho_C)$, and is an upper bound on $I(AB:C)$. Hence qualitatively speaking, our above bound is loose up to the difference between the {\Renyi} analogues of $I(A:B:C)$ and $I(AB:C)$ (putting aside the fact that one of the terms in our bound is evaluated on a new state $\nu$). Alternatively, we can view our bound as being a loosened version of the chain rule for von Neumann mutual information:
\begin{align}
I(AB:C) =  I(A:C)+I(A:B|C),
\end{align}
in which the $I(A:B|C)$ term has been relaxed to its upper bound $I(AC:B)$. 

Due to this, it is not straightforward to find a converse to our above bound that would instead prove the {\Renyi} mutual information ``accumulates'', since our bound is inherently somewhat loose compared to the von Neumann versions.
However, despite this looseness, our bounds still suffice for various applications, as we describe later.
\end{remark}

The drawback of Eq.~\eqref{eq:minfo_acc} is that $\IDA_\alpha(AC;B)_\nu$ has to be evaluated over a complicated state; this difficulty can be avoided in that term by replacing it with an optimization over input states to some channel.
\begin{lemma}
	\label{lemma:minfo_acc_relaxed}
	Let $\rho'\in\dop{=}(ARC)$ and $\mathcal{N}:R\rightarrow B$ be a channel. Let $\rho_{ABC}=\mathcal{N}(\rho'_{ARC})$, then for any $\alpha\in [0,\infty)$ we have:
	\begin{align}\label{eq:minfo_acc_relaxed}
		\IDA_\alpha(C;AB)_{\mathcal{N}(\rho')}\leq\IDA_\alpha(C;A)_{\rho'}+\sup_{\omega\in\dop{=}(ARC) }\IDA_\alpha(AC;B)_{\mathcal{N}(\omega)}.
	\end{align}
\end{lemma}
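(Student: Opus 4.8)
The plan is to deduce Lemma~\ref{lemma:minfo_acc_relaxed} from Lemma~\ref{lemma:D_chain} by showing that the complicated state $\nu_{ABC}$ appearing in Eq.~\eqref{eq:minfo_acc} is itself the image under $\mathcal{N}$ of a valid input state to the channel, so that the term $\IDA_\alpha(AC;B)_\nu$ can be replaced by the supremum over all channel inputs. Concretely, apply Lemma~\ref{lemma:D_chain} with the identification $\rho \leftarrow \rho_{ABC} = \mathcal{N}(\rho'_{ARC})$, choosing $\sigma_C = \rho_C$ and the optimal $\omega'_A$ for the $\IDA_\alpha(C;A)_\rho$ term, exactly as in the proof of Eq.~\eqref{eq:minfo_acc}. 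This gives $\IDA_\alpha(C;AB)_\rho \leq \IDA_\alpha(C;A)_\rho + \IDA_\alpha(AC;B)_\nu$ where $\nu$ is built from $\rho_{AC}$ and $\rho_{B|AC}$ as in Eq.~\eqref{eq:nu}. Also note $\rho_{AC} = \rho'_{AC}$ and $\IDA_\alpha(C;A)_{\mathcal{N}(\rho')} = \IDA_\alpha(C;A)_{\rho'}$ since $\mathcal{N}$ acts only on $R$, which does not touch $AC$; this already matches the first term on the right of Eq.~\eqref{eq:minfo_acc_relaxed}.

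The key step is then to observe that $\nu_{ABC}$ has the form $\mathcal{N}(\nu'_{ARC})$ for a suitable state $\nu'_{ARC} \in \dop{=}(ARC)$. The natural candidate is $\nu'_{ARC} = \nu_{AC}^{1/2}\, \rho'_{R|AC}\, \nu_{AC}^{1/2}$, i.e.\ replace the reduced state $\rho_{AC}$ inside $\rho'_{ARC}$ by $\nu_{AC}$ while keeping the same conditional state of $R$ given $AC$. This is a valid normalized state since $\nu_{AC}$ is normalized (by the denominator in Eq.~\eqref{eq:nu}) and $\rho'_{R|AC}$ is a valid conditional state, and moreover $\supp(\nu_{AC}) \subseteq \supp(\rho_{AC})$ so the expression is well-defined. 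Since $\mathcal{N}$ acts only on $R$ and commutes with the $\nu_{AC}^{1/2}$ ``twist'' on $AC$, we get $\mathcal{N}(\nu'_{ARC}) = \nu_{AC}^{1/2}\,\mathcal{N}(\rho'_{R|AC})\,\nu_{AC}^{1/2} = \nu_{AC}^{1/2}\, \rho_{B|AC}\, \nu_{AC}^{1/2} = \nu_{ABC}$, where the middle equality uses that $\mathcal{N}(\rho'_{ARC}) = \rho_{ABC}$ (the twist is an invertible map on $AC$, so conditional states are related in the expected way). Hence $\IDA_\alpha(AC;B)_\nu = \IDA_\alpha(AC;B)_{\mathcal{N}(\nu')} \leq \sup_{\omega\in\dop{=}(ARC)}\IDA_\alpha(AC;B)_{\mathcal{N}(\omega)}$, and combining with the above yields the claim.

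The main obstacle I expect is making rigorous the manipulation of conditional states under the channel and the twist, i.e.\ justifying $\mathcal{N}(\nu_{AC}^{1/2}\rho'_{R|AC}\nu_{AC}^{1/2}) = \nu_{AC}^{1/2}\rho_{B|AC}\nu_{AC}^{1/2}$. One has to be careful with pseudoinverses/support conditions in Def.~\ref{def:cond_state} when $\rho_{AC}$ or $\nu_{AC}$ is not full rank, and with the fact that $\rho_{B|AC}$ is defined relative to $\rho_{AC}$ rather than $\nu_{AC}$. A clean way around this is to avoid passing through conditional states altogether: directly verify that the operator $\nu'_{ARC} \defvar \nu_{AC}^{1/2}\rho_{AC}^{-1/2}\,\rho'_{ARC}\,\rho_{AC}^{-1/2}\nu_{AC}^{1/2}$ (with inverses on the support of $\rho_{AC}$) is positive semidefinite with trace one, and that applying $\mathcal{N}$ gives $\nu_{AC}^{1/2}\rho_{AC}^{-1/2}\rho_{ABC}\rho_{AC}^{-1/2}\nu_{AC}^{1/2} = \nu_{ABC}$ since $\mathcal{N}$ commutes with the $AC$-operators sandwiching $\rho'_{ARC}$; this mirrors exactly the algebra already used in Eq.~\eqref{eq:D_chain_equality} and in the definition of $\nu_{ABC}$ in Lemma~\ref{lemma:D_chain}, so it should go through with essentially the same care about supports that is already implicit there.
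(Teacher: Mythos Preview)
Your proposal is correct and follows essentially the same approach as the paper's proof: starting from Eq.~\eqref{eq:minfo_acc}, noting $\rho'_{AC}=\rho_{AC}$, and then exhibiting a preimage $\nu'_{ARC}\in\dop{=}(ARC)$ with $\mathcal{N}(\nu'_{ARC})=\nu_{ABC}$. In fact the ``clean way around'' you identify in your last paragraph, namely taking $\nu'_{ARC}=\nu_{AC}^{1/2}\rho_{AC}^{-1/2}\rho'_{ARC}\rho_{AC}^{-1/2}\nu_{AC}^{1/2}$ and using that $\mathcal{N}$ commutes with the $AC$-sandwiching operators, is precisely the construction the paper uses (it frames this via the set $S_\rho=\{\sigma:\sigma_{B|AC}=\rho_{B|AC}\}$, but the substance is identical).
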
 
\begin{proof}
	 Defining  $S_\rho\coloneqq\left\{\sigma\in\dop{=}(ABC)\middle\vert\ \sigma_{B|AC}=\rho_{B|AC}\right\}$, we have
	 \begin{align}
	 	\IDA_\alpha(C;AB)_\rho\leq\IDA_\alpha(C;A)_\rho+\sup_{\sigma\in S_\rho }\IDA_\alpha(AC;B)_{\sigma},
	 \end{align}
	 where the inequality follows from Eq.~\eqref{eq:minfo_acc}, and by noting that $\nu_{ABC}$ in that equation is a feasible solution to the above optimization. Since by definition $\rho'_{AC}=\rho_{AC}$, it is now only needed to show that the optimization in Eq.~\eqref{eq:minfo_acc_relaxed} contains the set $S_\rho$ defined above (i.e., $\nu_{ABC}\in S_\rho$). In order to do so let $\rho_{ABC}=\mathcal{N}(\rho'_{ARC})$, Defining $\omega\in\dop{=}(ARC)$ such that
	 \begin{align}
	 	\omega_{ARC}=\nu_{AC}^{\frac{1}{2}}\rho_{AC}^{-\frac{1}{2}}\rho'_{ARC}\rho_{AC}^{-\frac{1}{2}}\nu_{AC}^{\frac{1}{2}},
	 \end{align}
	 then we have
	 \begin{align}
	 	\mathcal{N}(\omega_{ARC})&=\nu_{AC}^{\frac{1}{2}}\rho_{AC}^{-\frac{1}{2}}\mathcal{N}(\rho'_{ARC})\rho_{AC}^{-\frac{1}{2}}\nu_{AC}^{\frac{1}{2}}\notag\\
	 	&=\nu_{AC}^{\frac{1}{2}}\rho_{B|AC}\nu_{AC}^{\frac{1}{2}}\notag\\
	 	&=\nu_{ABC}.
	 \end{align}
	 where the last line follows from definition of $\nu_{ABC}$ in lemma~\ref{lemma:D_chain}.
\end{proof}

Having proven the above chain rule, we can iteratively apply it to get the information bounding theorem in Eq.~\eqref{eq:introRenyiIAT}, which bounds the \Renyi\ mutual information of a state that is produced by a sequence of channels, in terms of the \Renyi\ mutual information of the individual channel outputs. We can also connect it to the smooth max-information of the final state, for application in the Corollary~\ref{cor:Hmin_Imax chain}--\ref{cor:chain_rule_smoothed} chain rules we described above (after converting to $\ImaxDA[,\eps]$ or $\Imaxnone[\eps]$ via~\cite[Theorem~3]{CBR14}):
\begin{theorem}
	[Information bounding theorem]
	\label{theorem:Info accum}
	Consider a sequence of channels $\left\{\mathcal{M}_i:R_{i-1}\rightarrow R_iL_i\right\}_{i=1}^n$. Let $\epsilon\in[0,1)$. Then, for any $\alpha>1$, and any $\rho\in\dop{=}(R_0SE)$ we have:
	\begin{align}
		\label{eq:Info accum}
		\ImaxDDA[,\epsilon](SE:L_1^n)_{\mathcal{M}_n\circ\cdots\circ\mathcal{M}_1(\rho_{R_0SE})}\leq \sum_{i=1}^n\sup_{\sigma\in\dop{=}(R_{i-1}SE)}\IDA_\alpha(L_1^{i-1}SE;L_i)_{\mathcal{M}_i(\sigma_{R_{i-1}SE})}-\frac{g(\epsilon)}{\alpha-1},
	\end{align}
	where $g(\epsilon)=\log\left(1-\sqrt{1-\epsilon^2}\right)$.
	\begin{proof}
		First note that by combining Fact~\ref{fact:smooth to renyi} with the generic inequality $\IDDA_\alpha \leq \IDA_\alpha$, we have
		\begin{align}\label{eq:Info accum1}
			\ImaxDDA[,\epsilon](SE:L_1^n)_{\mathcal{M}_n\circ\cdots\circ\mathcal{M}_1(\rho_{R_0SE})}&\leq \IDA_\alpha(SE;L_1^n)_{\mathcal{M}_n\circ\cdots\circ\mathcal{M}_1(\rho_{R_0SE})}-\frac{g(\epsilon)}{\alpha-1}.
		\end{align}
		Then to bound the \Renyi\ mutual information term, we employ Lemma~\ref{lemma:minfo_acc_relaxed}:
		\begin{align}
			\IDA_\alpha(SE;L_1^n)_{\mathcal{M}_n\circ\cdots\circ\mathcal{M}_1(\rho_{R_0SE})}\leq \IDA_\alpha(SE;L_1^{n-1})_{\mathcal{M}_{n-1}\circ\cdots\circ\mathcal{M}_1(\rho_{R_0SE})}+\sup_{\sigma\in\dop{=}(R_{n-1}SE)}\IDA_\alpha(L_1^{n-1}SE;L_n)_{\mathcal{M}_n(\sigma_{R_{n-1}SE})}.
		\end{align}
		Repeating this step $n$ times we get:
		\begin{align}\label{eq:Imax vhain proof2}
			\IDA_\alpha(SE;L_1^n)_{\mathcal{M}_n\circ\cdots\circ\mathcal{M}_1(\rho_{R_0SE})}\leq\sum_{i=1}^n\sup_{\sigma\in\dop{=}(R_{i-1}SE)}\IDA_\alpha(L_1^{i-1}SE;L_i)_{\mathcal{M}_i(\sigma_{R_{i-1}SE})}.
		\end{align}
		Combining Eq.~\eqref{eq:Imax vhain proof2} with \eqref{eq:Info accum1} proves the Theorem.
	\end{proof}
\end{theorem}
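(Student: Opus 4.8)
The plan is to first trade the smoothing for a {\Renyi} mutual information, and then establish the resulting {\Renyi} bound by iterating the single-round chain rule of Lemma~\ref{lemma:minfo_acc_relaxed}. Write $\rho^{(i)}\defvar\mathcal{M}_i\circ\cdots\circ\mathcal{M}_1(\rho_{R_0SE})$, so that $\rho^{(n)}$ is the state whose smooth max-information we must bound and $\rho^{(i)}=\mathcal{M}_i(\rho^{(i-1)})$. For the first step I would apply Eq.~\eqref{eq:smooth to renyi QMI} of Fact~\ref{fact:smooth to renyi} (which is where $\alpha>1$ enters) to get $\ImaxDDA[,\epsilon](SE:L_1^n)_{\rho^{(n)}}\leq\IDDA_\alpha(SE:L_1^n)_{\rho^{(n)}}$ up to the additive $g(\epsilon)/(\alpha-1)$ correction, and then use the generic ordering $\IDDA_\alpha\leq\IDA_\alpha$ (immediate from the definitions, since the infimum defining $\IDDA_\alpha$ runs over a larger set) to reduce everything to upper bounding $\IDA_\alpha(SE;L_1^n)_{\rho^{(n)}}$.

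For the second step I would peel off one leakage register per round. At round $i$ the state $\rho^{(i)}=\mathcal{M}_i(\rho^{(i-1)})$ lives on $R_i$ together with the spectator registers $L_1^{i-1}$, $S$, $E$; applying Lemma~\ref{lemma:minfo_acc_relaxed} with the identifications $C=SE$, $A=L_1^{i-1}$, $R=R_{i-1}$, $B=L_i$ and $\rho'=\rho^{(i-1)}$ gives
\begin{align}
\IDA_\alpha(SE;L_1^i)_{\rho^{(i)}}\leq\IDA_\alpha(SE;L_1^{i-1})_{\rho^{(i-1)}}+\sup_{\sigma}\IDA_\alpha(L_1^{i-1}SE;L_i)_{\mathcal{M}_i(\sigma)},
\end{align}
where $\sigma$ ranges over input states of $\mathcal{M}_i$. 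Iterating from $i=n$ down to $i=1$ telescopes; the residual term $\IDA_\alpha(SE;L_1^0)_{\rho^{(0)}}$ vanishes because its second register is trivial, leaving exactly the sum of per-round suprema appearing on the right-hand side of the theorem. Combining with the first step concludes.

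The main obstacle is register bookkeeping rather than new analysis: Lemma~\ref{lemma:minfo_acc_relaxed} is phrased for a channel $\mathcal{N}:R\to B$ with a single output, whereas each $\mathcal{M}_i$ outputs the pair $R_iL_i$. One must observe that every {\Renyi} mutual information occurring in the per-round inequality involves only $L_1^{i-1},S,E,L_i$ and never $R_i$, so that one may harmlessly compose $\mathcal{M}_i$ with a partial trace over $R_i$ when invoking the lemma and then reattach the untouched $R_i$ as the memory input to round $i+1$; one must also check that the supremum produced by the lemma matches the form $\sup_{\sigma}$ written in the statement, carefully tracking which registers are genuine spectators and which are carried inside the memory registers. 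All of the genuine content sits in Lemmas~\ref{lemma:D_chain} and~\ref{lemma:minfo_acc_relaxed}; were one to prove the theorem without them, the crux would be the exact {\Renyi} identity~\eqref{eq:D_chain_equality} together with the state-substitution step that replaces the complicated state $\nu$ by an optimization over channel inputs.
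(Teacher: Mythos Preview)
Your proposal is correct and follows essentially the same two-step approach as the paper: first convert $\ImaxDDA[,\epsilon]$ to $\IDA_\alpha$ via Fact~\ref{fact:smooth to renyi} together with $\IDDA_\alpha\leq\IDA_\alpha$, then iterate Lemma~\ref{lemma:minfo_acc_relaxed} to peel off one $L_i$ per round. Your register identifications $C=SE$, $A=L_1^{i-1}$, $R=R_{i-1}$, $B=L_i$ are exactly the ones needed, and your remarks about tracing out $R_i$ before invoking the lemma and about verifying that the supremum domain matches the statement are the only bookkeeping points that arise; the paper's proof handles them in the same way (implicitly, and somewhat loosely in its notation for the supremum).
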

In principle (assuming the registers have bounded dimension), one could bound the single-round {\Renyi} mutual information terms in terms of von Neumann mutual information via continuity bounds of the form $\IDA_\alpha(L_1^{i-1}SE;L_i) \leq I(L_1^{i-1}SE;L_i) + O(\alpha-1)$ (see~\cite{DFR20,DF19}; the bounds in the latter may be sharper in some contexts). By choosing $\alpha = 1+\Theta(1/\sqrt{n})$, one would then obtain a bound relating smooth max-information to von Neumann mutual information in the form described in Eq.~\eqref{eq:introIAT}. However, we note later in Sec.~\ref{sec:examples} that tighter bounds can be usually obtained by analyzing the {\Renyi} mutual information directly. We also defer the discussion of how to get explicit bounds on these quantities to that section. 

\begin{remark}
Note that another work~\cite{TMM16} has also analyzed {\Renyi} mutual information quantities in some ``sequential'' processes, though mainly those focused on contexts such as quantum hypothesis testing and quantum-feedback-assisted channel capacities. We believe that the proof techniques in that work (specifically, Sec.~5) can be used to obtain a bound with a somewhat similar form to Theorem~\ref{theorem:Info accum}, but where the global state is instead $\mathcal{M}_n \otimes \cdots \otimes \mathcal{M}_1(\rho_{R_1^n SE})$ for some channels  $\left\{\mathcal{M}_i:R_i\rightarrow L_i\right\}_{i=1}^n$; in other words, it is produced by a tensor product of channels rather than a sequence of channels. However, we leave a detailed investigation of this point for future work. We thank M.~Wilde for discussions regarding this work.
\end{remark}

While the approach of combining Corollary~\ref{cor:Hmin_Imax chain}~or~\ref{cor:chain_rule_smoothed} with Theorem~\ref{theorem:Info accum} is fairly straightforward to implement, it has a drawback in that it requires the registers $L_1^n$ to be handled ``separately'' from the other side-information. This implies for instance that an adversary Eve in a protocol cannot ``adapt'' her attack using the leakage register $L_i$ in each round (see the ``restricted adaptiveness'' condition described in~\cite{arx_Tan23}). Also, there may be some subtle conditions in ensuring that the channels describing leakage in the protocol satisfy the conditions of Theorem~\ref{theorem:Info accum}; for instance, it seems we would require all of the secret raw data to be generated first, and then keep a separate register $R_0$ from which the leakage registers are produced.\footnote{One simple approach would be just to have $R_0$ be a classical copy of $S$, but there may be some leakage models in which such a channel structure does not quite work.} In the next section, we describe a different approach that can handle this issue, by having the leakage processes ``built in'' to an entropy accumulation framework.

\section{Entropy accumulation with leakage}\label{sec:GEATleakage}
One drawback of the entropy accumulation framework is that it cannot directly handle some forms of device imperfections, such as photon leakage from the measurement devices, or source correlations.
The reason for this is that to analyze such scenarios, one essentially needs to introduce a leakage register that sends information to the adversary in each round. However, such a construction violates the non-signalling (NS) condition in~\cite{MFSR22}, or the Markov condition in~\cite{DFR20}. In this section we derive a chain rule which can be used to modify the entropy accumulation theorem in a way that allows us to handle device imperfections.

\subsection{Relating {\Renyi} conditional entropies to {\Renyi} mutual information}
\label{sec:HandIbounds}

We begin with the following corollary that states a somewhat similar result to Corollaries~\ref{cor:Hmin_Imax chain}~and~\ref{cor:chain_rule_smoothed}, but instead relates \Renyi\ conditional entropies to \Renyi\ mutual information.
\begin{corollary}
	\label{cor:mutual info chain}
	Let $\rho\in\dop{=}(ABC)$ and $\nu_{ABC}=\nu_{AC}^{\frac{1}{2}}\rho_{B|AC}\nu_{AC}^{\frac{1}{2}}$, where $\nu_{AC}$ is defined in~\eqref{eq:nu}. Then, for any $\alpha\in[0,\infty)$ we have
	\begin{align}
		\label{eq:mutual info chain}
		H_\alpha^\uparrow(C|AB)_\rho\geq H_\alpha^\uparrow(C|A)_\rho -\IDA_\alpha(AC;B)_\nu.
	\end{align}
	Moreover, let $\rho'\in\dop{=}(RAC)$, and $\mathcal{M}:R\rightarrow B$ be a channel such that $\rho_{ABC}=\mathcal{M}(\rho'_{RAC})$. Then, the above relation can be relaxed to the following
	\begin{align}
		\label{eq:mutual info chain_relaxed}
		H_\alpha^\uparrow(C|AB)_{\mathcal{M}(\rho')}\geq H_\alpha^\uparrow(C|A)_{\rho'} -\sup_{\omega\in\dop{=}(RAC)}\IDA_\alpha(AC;B)_{\mathcal{M}(\omega)}.
	\end{align}
	\begin{proof}
		The proof of Eq.\eqref{eq:mutual info chain} follows directly by applying Lemma~\ref{lemma:D_chain}, setting the states in that lemma to be $\sigma_C=\id_C$ and $\omega'_A,\ \omega''_B$ to be such that
		\begin{align}
			&D_\alpha(\rho_{AC}||\omega'_A\otimes\id_C)=\inf_{\sigma_{A}\in\dop{=}(A)}D_\alpha(\rho_{AC}||\sigma_A\otimes\id_C),\notag\\
			&D_\alpha(\nu_{ABC}||\omega''_B\otimes\nu_{AC})=\inf_{\sigma_{B}\in\dop{=}(B)}D_\alpha(\nu_{ABC}||\sigma_B\otimes\nu_{AC}).
		\end{align}
		Then we have
		\begin{align}
			H^\uparrow_\alpha(C|AB)_\rho&=-\inf_{\sigma_{AB}\in\dop{=}(AB)}D_\alpha(\rho_{ABC}||\sigma_{AB}\otimes\id_C)\notag\\
			&\geq -D_\alpha(\rho_{AC}||\omega'_A\otimes\id_C) -D_\alpha(\nu_{ABC}||\omega''_B\otimes\nu_{AC})\notag\\
			&=H^\uparrow_\alpha(C|A)_\rho - \IDA_\alpha(AC;B)_\nu,
		\end{align}
		where the second line follows by applying Lemma~\ref{lemma:D_chain}, and the last line holds by definition. To prove Eq.~\eqref{eq:mutual info chain_relaxed}, let us define $S_\rho=\left\{\sigma\in\dop{=}(ABC)\middle\vert\ \sigma_{B|AC}=\rho_{B|AC}\right\}$. Since by definition $\nu_{B|AC}=\rho_{B|AC}$, we have $\nu_{ABC}\in S_\rho$, and thus we can relax Eq.~\eqref{eq:mutual info chain} to the following:
		\begin{align}
		H_\alpha^\uparrow(C|AB)_\rho\geq H_\alpha^\uparrow(C|A)_\rho -\sup_{\sigma\in S_\rho}\IDA_\alpha(AC;B)_\sigma.			
		\end{align}
		It is now suffices to show that the optimization in Eq.~\eqref{eq:mutual info chain} contains the set $S_\rho$. Defining $\omega\in\dop{=}(RAC)$ as
		\begin{align}
			\omega_{RAC}=\nu_{AC}^\frac{1}{2}\rho_{AC}^{-\frac{1}{2}}\rho'_{RAC}\rho_{AC}^{-\frac{1}{2}}\nu_{AC}^\frac{1}{2}
		\end{align}
		Then we have
		\begin{align}
			\mathcal{M}(\omega_{RAC})&=\nu_{AC}^\frac{1}{2}\rho_{AC}^{-\frac{1}{2}}\mathcal{M}(\rho'_{RAC})\rho_{AC}^{-\frac{1}{2}}\nu_{AC}^\frac{1}{2}\notag\\
			&=\nu_{ABC}.
		\end{align}
		It is clear that $\mathcal{M}(\omega_{RAC})\in S_\rho$, and thus the lemma is proven.
	\end{proof}
\end{corollary}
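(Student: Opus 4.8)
The plan is to derive everything from Lemma~\ref{lemma:D_chain}, which is the workhorse here. For the first inequality~\eqref{eq:mutual info chain}, I would start from the definition $H^\uparrow_\alpha(C|AB)_\rho = -\inf_{\sigma_{AB}\in\dop{=}(AB)} D_\alpha(\rho_{ABC}\|\sigma_{AB}\otimes\id_C)$. The key observation is that Lemma~\ref{lemma:D_chain} was stated for a general $\sigma\in\Pos(C)$, so I am free to take $\sigma_C = \id_C$; the identity is positive semidefinite, so this is a legitimate choice even though it is not a state. With this choice, the left side of~\eqref{eq:D_chain} becomes exactly $-H^\uparrow_\alpha(C|AB)_\rho$ (up to the sign and the fact that the infimum in the lemma is over normalized $\omega_{AB}$, which matches the normalized optimization in $H^\uparrow_\alpha$). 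I then pick $\omega'_A$ and $\omega''_B$ to be the optimizers of $\inf_{\sigma_A} D_\alpha(\rho_{AC}\|\sigma_A\otimes\id_C)$ and $\inf_{\sigma_B} D_\alpha(\nu_{ABC}\|\sigma_B\otimes\nu_{AC})$ respectively, so that the two terms on the right of~\eqref{eq:D_chain} become precisely $-H^\uparrow_\alpha(C|A)_\rho$ and $\IDA_\alpha(AC;B)_\nu$. Negating the inequality in~\eqref{eq:D_chain} then gives~\eqref{eq:mutual info chain} directly.

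For the relaxed version~\eqref{eq:mutual info chain_relaxed}, I would follow the same strategy as in Lemma~\ref{lemma:minfo_acc_relaxed}. First define $S_\rho \defvar \{\sigma\in\dop{=}(ABC)\mid \sigma_{B|AC}=\rho_{B|AC}\}$. Since $\nu_{ABC} = \nu_{AC}^{1/2}\rho_{B|AC}\nu_{AC}^{1/2}$ has $\nu_{B|AC} = \rho_{B|AC}$ (by Def.~\ref{def:cond_state}, recalling $\nu_{AC}$ is the marginal of $\nu_{ABC}$), we have $\nu_{ABC}\in S_\rho$, so~\eqref{eq:mutual info chain} can be relaxed to $H^\uparrow_\alpha(C|AB)_\rho \geq H^\uparrow_\alpha(C|A)_\rho - \sup_{\sigma\in S_\rho}\IDA_\alpha(AC;B)_\sigma$. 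It then remains to show that the supremum over $\omega\in\dop{=}(RAC)$ of $\IDA_\alpha(AC;B)_{\mathcal{M}(\omega)}$ dominates $\sup_{\sigma\in S_\rho}\IDA_\alpha(AC;B)_\sigma$; for this I would exhibit, for the specific $\nu_{ABC}$, a preimage $\omega_{RAC} = \nu_{AC}^{1/2}\rho_{AC}^{-1/2}\rho'_{RAC}\rho_{AC}^{-1/2}\nu_{AC}^{1/2}$ and verify by linearity of $\mathcal{M}$ (and the fact that $\mathcal{M}$ acts trivially on $AC$) that $\mathcal{M}(\omega_{RAC}) = \nu_{AC}^{1/2}\mathcal{M}(\rho'_{RAC})_{B|AC}\nu_{AC}^{1/2} = \nu_{AC}^{1/2}\rho_{B|AC}\nu_{AC}^{1/2} = \nu_{ABC}$. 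Since the state $\nu$ appearing in~\eqref{eq:mutual info chain} is the only thing we need to reach, and it lies in the image of $\mathcal{M}$, the relaxation goes through. One should also note $\rho'_{AC} = \rho_{AC}$ since $\mathcal{M}$ does not touch those registers, so $H^\uparrow_\alpha(C|A)_{\rho'} = H^\uparrow_\alpha(C|A)_\rho$.

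The main subtlety I would flag is the use of $\sigma_C = \id_C$ in Lemma~\ref{lemma:D_chain}: I should double-check that the lemma's proof (the chain of equalities in~\eqref{eq:D_chain_equality}) does not secretly use normalization or invertibility of $\sigma_C$ in a way that breaks when $\sigma_C = \id_C$. Inspecting it, the only operations are cyclicity of trace, the substitution of~\eqref{eq:nu}, and the definition of the conditional state $\rho_{B|AC}$ — none of which require $\sigma_C$ to be a state — so this is fine, and $\nu_{AC}$ is still well-defined with $\sigma_C=\id_C$. A second minor point is the well-definedness of $\rho_{AC}^{-1/2}$ and $\nu_{AC}^{-1/2}$ in the preimage construction, which should be read in the pseudoinverse sense on the appropriate supports, exactly as in the proof of Lemma~\ref{lemma:minfo_acc_relaxed}; since that proof already handled this, I would simply refer back to it rather than re-deriving the support conditions.
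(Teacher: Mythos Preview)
Your proposal is correct and follows essentially the same approach as the paper: both parts invoke Lemma~\ref{lemma:D_chain} with $\sigma_C=\id_C$ and optimal $\omega'_A,\omega''_B$ for~\eqref{eq:mutual info chain}, and then use the set $S_\rho$ together with the explicit preimage $\omega_{RAC}=\nu_{AC}^{1/2}\rho_{AC}^{-1/2}\rho'_{RAC}\rho_{AC}^{-1/2}\nu_{AC}^{1/2}$ for~\eqref{eq:mutual info chain_relaxed}. The extra sanity checks you flag (validity of $\sigma_C=\id_C$, pseudoinverse interpretation) are sound and do not deviate from the paper's argument.
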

	The result in Corollary~\ref{cor:mutual info chain} can be tightened further in a certain sense. Intuitively, at least for the limit of $\alpha\rightarrow 1$ (i.e., the von Neumann limit), one would expect that when removing a conditional register from conditional entropy, the penalty term should be a conditional mutual information as opposed to just the unconditional mutual information. In the following, we derive a chain rule with a \Renyi\ version of conditional mutual information. We begin by proving a statement without any conditioning registers on the right-hand-side:
\begin{lemma}
	\label{lemma:C_chain}
	Let $\rho\in\dop{=}(AB)$, $\sigma\in\Pos(B)$, and $\omega\in\Pos(A)$. Then, for any $\alpha\in[0,\infty)$ we have
	\begin{align}
		\label{eq:C_chain}
		D_\alpha(\rho_{AB}||\omega_A\otimes\sigma_B)=D_\alpha(\rho_A||\omega_A)+D_\alpha(\eta_{AB}||\eta_A\otimes\sigma_B),
	\end{align}
	where $\eta_{AB}=\eta_A^{\frac{1}{2}}\rho_{B|A}\eta_A^{\frac{1}{2}}$, with
	\begin{align}
		\label{eq:eta}
		\eta_A=\frac{\left(\rho_A^{\frac{1}{2}}\omega_A^{\frac{1-\alpha}{\alpha}}\rho_A^{\frac{1}{2}}\right)^\alpha}{\Tr\left[\left(\rho_A^{\frac{1}{2}}\omega_A^{\frac{1-\alpha}{\alpha}}\rho_A^{\frac{1}{2}}\right)^\alpha\right]}.
	\end{align}
	In particular 
	\begin{align}
		\label{eq:minfo_1}
		H^\uparrow_\alpha(A|B)_\rho=H_\alpha(A)_\rho-\IDA_\alpha(A;B)_\eta.
	\end{align}
	\begin{proof}
		The proof holds by the following chain of equalities:
		\begin{align}
			\label{eq:C_chain_proof}
			D_\alpha(\rho_{AB}||\omega_A\otimes\sigma_B)&=\frac{1}{\alpha-1}\log\Tr\left[\left(\rho_{AB}\left(\omega_A\otimes\sigma_B\right)^\frac{1-\alpha}{\alpha}\right)^\alpha\right]\notag\\
			&=\frac{1}{\alpha-1}\log\Tr\left[\left(\rho_{AB}\rho_{A}^{-\frac{1}{2}}\rho_{A}^{\frac{1}{2}}\omega_A^\frac{1-\alpha}{\alpha}\rho_{A}^{\frac{1}{2}}\rho_{A}^{-\frac{1}{2}}\otimes\sigma_B^{\frac{1-\alpha}{\alpha}}\right)^\alpha\right]\notag\\
			&=\frac{1}{\alpha-1}\log\Tr\left[\left(\rho_{B|A}\eta^{\frac{1}{\alpha}}\otimes\sigma_B^{\frac{1-\alpha}{\alpha}}\right)^\alpha\right]+\frac{1}{\alpha-1}\log\Tr\left[\left(\rho_A^{\frac{1}{2}}\omega_A^{\frac{1-\alpha}{\alpha}}\rho_A^{\frac{1}{2}}\right)^\alpha\right]\notag\\
			&=\frac{1}{\alpha-1}\log\Tr\left[\left(\nu_{AB}\left(\eta_A\otimes\sigma_B\right)^{\frac{1-\alpha}{\alpha}}\right)^\alpha\right]+D_\alpha(\rho_A||\omega_A)\notag\\
			&=D_\alpha(\eta_{AB}||\eta_A\otimes\sigma_B)+D_\alpha(\rho_A||\omega_A),
		\end{align}
		where the first line is due to the cyclic property of trace, the third line follows from  Def.~\ref{def:cond_state} and substituting Eq.~(\ref{eq:eta}), and the fourth line follows from reshuffling terms and the definition of $\eta_{AB}$. To prove Eq.~\eqref{eq:minfo_1}, we take $\omega_A=\id_A$ in~\eqref{eq:C_chain}, and take an infimum from both sides. Thus we have:
		\begin{align}
			\label{eq:minfo_1proof}
			H_\alpha^\uparrow(A|B)_\rho&=-\inf_{\sigma_B\in\dop{=}(B)}D_\alpha(\rho_{AB}||\omega_A\otimes\sigma_B)\notag\\
			&=-D_\alpha(\rho_A||\id_A)-\inf_{\sigma_B\in\dop{=}(B)}D_\alpha(\eta_{AB}||\eta_A\otimes\sigma_B)\notag\\
			&=H_\alpha(A)_\rho-\IDA_\alpha(A;B)_\eta,
		\end{align}
		which completes the proofs.
	\end{proof}	
\end{lemma}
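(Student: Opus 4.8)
The heart of the lemma is the \emph{exact} operator identity~\eqref{eq:C_chain}; once that is in hand, the ``in particular'' claim~\eqref{eq:minfo_1} follows by a one-line specialization. Note first that~\eqref{eq:C_chain} is nothing but the equality~\eqref{eq:D_chain_equality} from the proof of Lemma~\ref{lemma:D_chain} with the $A$ register taken trivial and with a relabeling of registers (and a routine scaling remark to pass from normalized to general positive $\omega,\sigma$, using $D_\alpha(\rho\|t\tau)=D_\alpha(\rho\|\tau)-\log t$ together with the fact that $\eta_A$ and $\eta_{AB}$ are invariant under rescaling $\omega_A,\sigma_B$). Since a self-contained argument is barely longer, however, I would just reprove it directly, massaging the trace expression defining the sandwiched {\Renyi} divergence so as to peel off the $A$-marginal factor. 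Throughout I would use the cyclic form $D_\alpha(\rho\|\tau)=\frac{1}{\alpha-1}\log\Tr[(\rho^{1/2}\tau^{(1-\alpha)/\alpha}\rho^{1/2})^\alpha]$, which coincides with Def.~\ref{def:sandwiched divergence} since $\rho$ is normalized and $\Tr[(X^{1/2}YX^{1/2})^\alpha]=\Tr[(Y^{1/2}XY^{1/2})^\alpha]$, abbreviating $\Tr[(\rho^{1/2}\tau^{(1-\alpha)/\alpha}\rho^{1/2})^\alpha]$ as $\Tr[(\rho\,\tau^{(1-\alpha)/\alpha})^\alpha]$ as in the excerpt.

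Concretely, I would expand $D_\alpha(\rho_{AB}\|\omega_A\otimes\sigma_B)=\frac{1}{\alpha-1}\log\Tr[(\rho_{AB}\,(\omega_A^{(1-\alpha)/\alpha}\otimes\sigma_B^{(1-\alpha)/\alpha}))^\alpha]$, insert $\rho_A^{-1/2}\rho_A^{1/2}$ on each side of the $\omega_A^{(1-\alpha)/\alpha}$ factor, and write $\rho_{AB}=\rho_A^{1/2}\rho_{B|A}\rho_A^{1/2}$ via Def.~\ref{def:cond_state}. This exposes the middle sandwich $\rho_A^{1/2}\omega_A^{(1-\alpha)/\alpha}\rho_A^{1/2}$, which by the definition~\eqref{eq:eta} of $\eta_A$ equals $2^{(\alpha-1)D_\alpha(\rho_A\|\omega_A)/\alpha}\,\eta_A^{1/\alpha}$. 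Pulling this scalar out of $\Tr[(\cdot)^\alpha]$ raises it to the power $\alpha$, and after $\tfrac{1}{\alpha-1}\log$ it contributes exactly the $D_\alpha(\rho_A\|\omega_A)$ summand; cyclicity $\Tr[(MN)^\alpha]=\Tr[(NM)^\alpha]$ then cancels the residual $\rho_A^{\pm1/2}$ factors against each other (the intervening $\sigma_B^{(1-\alpha)/\alpha}$ acts only on $B$ and commutes through), leaving $\Tr[(\rho_{B|A}\eta_A^{1/\alpha}\otimes\sigma_B^{(1-\alpha)/\alpha})^\alpha]$. Finally, recognizing $\rho_{B|A}=\eta_A^{-1/2}\eta_{AB}\eta_A^{-1/2}$ and cycling once more rewrites this as $\Tr[(\eta_{AB}\,(\eta_A^{(1-\alpha)/\alpha}\otimes\sigma_B^{(1-\alpha)/\alpha}))^\alpha]=2^{(\alpha-1)D_\alpha(\eta_{AB}\|\eta_A\otimes\sigma_B)}$, where one checks $\eta_{AB}$ is normalized because $\Tr_B\rho_{B|A}=\id_A$ on $\supp\rho_A$. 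Collecting the two logarithmic contributions yields~\eqref{eq:C_chain}.

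For~\eqref{eq:minfo_1} I would set $\omega_A=\id_A$ in~\eqref{eq:C_chain}: then $D_\alpha(\rho_A\|\id_A)=-H_\alpha(A)_\rho$, and $\eta_A=\rho_A^\alpha/\Tr[\rho_A^\alpha]$, so $\eta_A$ and $\eta_{AB}$ depend only on $\rho$ and not on $\sigma_B$. Taking $\inf_{\sigma_B\in\dop{=}(B)}$ of both sides, the first summand is constant, so the left side becomes $-H^\uparrow_\alpha(A|B)_\rho$ and the infimum of the second summand is $\IDA_\alpha(A;B)_\eta$ by Def.~\ref{def:sandwiched entropy and info}; rearranging gives the claim.

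The genuine difficulty lies not in any single manipulation but in the accumulated bookkeeping around supports and boundary values: one must justify the cyclic trace form and the Moore--Penrose pseudoinverse conventions when $\rho_A$, $\omega_A$, or $\sigma_B$ are rank-deficient (checking that every rearrangement lives on the correct support, and that for $\alpha>1$ the divergences are finite precisely under the support conditions of Def.~\ref{def:sandwiched divergence}), confirm that $\eta_{AB}$ is genuinely a normalized state with ``$A$-part'' $\eta_A$ so that $D_\alpha(\eta_{AB}\|\eta_A\otimes\sigma_B)$ and $\IDA_\alpha(A;B)_\eta$ are well posed, and treat $\alpha\in\{0,1\}$ (and $\alpha<1$) by the usual limiting argument --- for $\alpha=1$ the identity degenerates to the additivity $D(\rho_{AB}\|\omega_A\otimes\sigma_B)=D(\rho_A\|\omega_A)+D(\rho_{AB}\|\rho_A\otimes\sigma_B)$ with $\eta_A\to\rho_A$ and $\eta_{AB}\to\rho_{AB}$, which can also be verified directly.
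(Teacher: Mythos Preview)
Your proposal is correct and follows essentially the same route as the paper: insert $\rho_A^{-1/2}\rho_A^{1/2}$ around the $\omega_A^{(1-\alpha)/\alpha}$ factor, recognize the conditional state $\rho_{B|A}$ and the rescaled sandwich $\eta_A^{1/\alpha}$, pull out the scalar to split off $D_\alpha(\rho_A\|\omega_A)$, and then regroup into $D_\alpha(\eta_{AB}\|\eta_A\otimes\sigma_B)$; the specialization $\omega_A=\id_A$ followed by $\inf_{\sigma_B}$ is also identical. Your additional remarks on support conditions, normalization of $\eta_{AB}$, and the $\alpha\in\{0,1\}$ limits go beyond what the paper spells out but do not change the argument.
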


With this, we can obtain a bound involving a form of \Renyi\ conditional mutual information:
\begin{corollary}
	\label{cor:renyiCQMI}
	Let $\rho\in\dop{=}(ABC)$, and set $\eta_{ABC}\coloneqq\eta_A^{\frac{1}{2}}\rho_{BC|A}\eta_A^{\frac{1}{2}}$ where $\eta_A$ is the state in Eq.~\eqref{eq:eta}. Then, for any $\alpha\in [0,\infty)$:
	\begin{align}\label{eq:renyiCQMI}
		H_\alpha^\uparrow(A|BC)_\rho=H_\alpha^\uparrow(A|B)_\rho-\Idiff_\alpha(A;C|B)_\eta.
	\end{align}
	Moreover, let $\rho'\in\dop{=}(RA)$, and $\mathcal{N}:R\rightarrow BC$ be a channel such that $\rho_{ABC}=\mathcal{N}(\rho'_{RA})$. Then, Eq.~\eqref{eq:renyiCQMI} can be relaxed to
	\begin{align}
		\label{eq:renyiCQMI_relaxed}
		H_\alpha^\uparrow(A|B)_{\mathcal{N}(\rho')}-\sup_{\omega\in\dop{=}(RA)}\Idiff_\alpha(A;C|B)_{\mathcal{N}(\omega)}\leq H_\alpha^\uparrow(A|BC)_{\mathcal{N}(\rho')}\leq H_\alpha^\uparrow(A|B)_{\mathcal{N}(\rho')}-\inf_{\omega\in\dop{=}(RA)}\Idiff_\alpha(A;C|B)_{\mathcal{N}(\omega)}.
	\end{align}
	\begin{proof}
		We start by applying Lemma~\ref{lemma:C_chain} to the left side of the above equality:
		\begin{align}
			H_\alpha^\uparrow(A|BC)_\rho&=H_\alpha(A)_\rho-\IDA_\alpha(A;BC)_\eta\notag\\
			&=H_\alpha^\uparrow(A|B)_\rho+\IDA_\alpha(A;B)_\eta-\IDA_\alpha(A;BC)_\eta\notag\\
			&= H_\alpha^\uparrow(A|B)_\rho-\Idiff_\alpha(A;C|B)_\eta,
		\end{align}
		where the second line follows from applying Eq.~\eqref{eq:eta} and noting $\eta_{AB}=\Tr_C\eta_{ABC}=\eta_A^{\frac{1}{2}}\rho_{B|A}\eta_A^{\frac{1}{2}}$, and the last line follows from the definition of \Renyi\ conditional mutual information in Def.~\ref{def:sandwiched CQMI}.
		To prove the relaxed version let us first define $S_\rho=\left\{\sigma\in\dop{=}(ABC)\middle\vert\ \sigma_{BC|A}=\rho_{BC|A}\right \}$. Then, we have:
		\begin{align}
			H_\alpha^\uparrow(A|B)_\rho-\sup_{\sigma\in S_\rho}\Idiff_\alpha(A;C|B)_\sigma\leq H_\alpha^\uparrow(A|BC)_\rho\leq H_\alpha^\uparrow(A|B)_\rho-\inf_{\sigma\in S_\rho}\Idiff_\alpha(A;C|B)_\sigma,
		\end{align}
		where the inequalities hold by noting that by definition, $\eta_{BC|A}=\rho_{BC|A}$, therefore, $\eta_{ABC}\in S_\rho$. Thus, we only need to show that the optimizations in Eq.~\eqref{eq:renyiCQMI_relaxed} contain the set $S_\rho$. Let $\rho_{ABC}=\mathcal{N}(\omega_{RA})$ and define $\omega\in\dop{=}(RA)$ such that the following holds:
		\begin{align}
			\omega_{RA}=\eta_{A}^\frac{1}{2}\rho_A^{-\frac{1}{2}}\rho'_{RA}\rho_A^{-\frac{1}{2}}\eta_{A}^\frac{1}{2}.
		\end{align}  
		Then,
		\begin{align}
			\mathcal{N}(\omega_{RA})&=\eta_A^{\frac{1}{2}}\rho_A^{-\frac{1}{2}}\mathcal{N}(\rho'_{RA})\rho_A^{-\frac{1}{2}}\eta_A^{\frac{1}{2}}\notag\\
			&=\eta_{ABC}.
		\end{align}
		It is then clear that $\mathcal{N}(\omega_{RA})\in S_\rho$, which completes the proof.
	\end{proof}
\end{corollary}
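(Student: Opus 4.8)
The plan is to derive both parts of Corollary~\ref{cor:renyiCQMI} purely from Lemma~\ref{lemma:C_chain}, exploiting one key observation: the state $\eta_A$ appearing in Eq.~\eqref{eq:minfo_1} (i.e.\ Eq.~\eqref{eq:eta} with $\omega_A=\id_A$) depends only on $\rho_A$ and $\alpha$ — one checks it reduces to $\eta_A=\rho_A^\alpha/\Tr[\rho_A^\alpha]$ — and in particular is \emph{the same} whether we expand $H_\alpha^\uparrow(A|BC)$ or $H_\alpha^\uparrow(A|B)$. This is what lets us put both conditional entropies on a common footing and take their difference.

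Concretely, first I would apply Eq.~\eqref{eq:minfo_1} twice. Taking the conditioning system to be the joint register $BC$ gives $H_\alpha^\uparrow(A|BC)_\rho = H_\alpha(A)_\rho - \IDA_\alpha(A;BC)_{\eta_{ABC}}$ with $\eta_{ABC}=\eta_A^{1/2}\rho_{BC|A}\eta_A^{1/2}$, which is exactly the $\eta_{ABC}$ named in the statement. Taking the conditioning system to be $B$ gives $H_\alpha^\uparrow(A|B)_\rho = H_\alpha(A)_\rho - \IDA_\alpha(A;B)_{\eta_{AB}}$ with $\eta_{AB}=\eta_A^{1/2}\rho_{B|A}\eta_A^{1/2}$. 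I would then verify the compatibility fact $\eta_{AB}=\Tr_C\,\eta_{ABC}$, which holds because $\Tr_C\rho_{BC|A}=\rho_{B|A}$ and the $\eta_A^{1/2}$ sandwich commutes with the partial trace over $C$; since $\IDA_\alpha(A;B)$ depends only on the reduced state on $AB$, the two uses of ``$\eta$'' are consistent. Subtracting the two identities and regrouping yields $H_\alpha^\uparrow(A|BC)_\rho = H_\alpha^\uparrow(A|B)_\rho - \bigl(\IDA_\alpha(A;BC)_\eta - \IDA_\alpha(A;B)_\eta\bigr)$, and the parenthesized term is $\Idiff_\alpha(A;C|B)_\eta$ by Def.~\ref{def:sandwiched CQMI}, proving Eq.~\eqref{eq:renyiCQMI}.

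For the relaxed version I would introduce $S_\rho=\{\sigma\in\dop{=}(ABC):\sigma_{BC|A}=\rho_{BC|A}\}$ and note $\eta_{ABC}\in S_\rho$ directly from its definition (conjugating $\rho_{BC|A}$ by $\eta_A^{1/2}$ leaves the conditional state fixed), so Eq.~\eqref{eq:renyiCQMI} at once relaxes to a two-sided bound with $\sup$ and $\inf$ of $\Idiff_\alpha(A;C|B)_\sigma$ over $\sigma\in S_\rho$. It then suffices to show the optimization over $\{\mathcal{N}(\omega):\omega\in\dop{=}(RA)\}$ reaches every element of $S_\rho$, which I would do by exhibiting the explicit preimage $\omega_{RA}=\eta_A^{1/2}\rho_A^{-1/2}\rho'_{RA}\rho_A^{-1/2}\eta_A^{1/2}$ and computing $\mathcal{N}(\omega_{RA})=\eta_A^{1/2}\rho_A^{-1/2}\mathcal{N}(\rho'_{RA})\rho_A^{-1/2}\eta_A^{1/2}=\eta_A^{1/2}\rho_{BC|A}\eta_A^{1/2}=\eta_{ABC}$, using that $\mathcal{N}$ acts only on $R$ and hence commutes with the $\rho_A^{-1/2}$, $\eta_A^{1/2}$ factors, and that $\mathcal{N}(\rho'_{RA})=\rho_{ABC}$.

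The main obstacle I expect is the bookkeeping around support and invertibility: $\rho_A$ (and $\eta_A$) need not be full rank, so the $\rho_A^{-1/2}$ factors must be read as pseudoinverses and one has to confirm that every manipulation — the identity $\Tr_C\rho_{BC|A}=\rho_{B|A}$, the commutation of $\mathcal{N}$ through the sandwich, and the claim $\omega_{RA}\in\dop{=}(RA)$ — stays on the appropriate supports. This is already implicitly handled in Lemma~\ref{lemma:C_chain} and Def.~\ref{def:cond_state}, so the real care is just in not stepping outside those supports. A secondary subtlety is matching the relabeling in Def.~\ref{def:sandwiched CQMI} so that it is $\Idiff_\alpha(A;C|B)$ (not $\Idiff_\alpha(A;B|C)$) that appears, and checking that the $\sup$ and $\inf$ in Eq.~\eqref{eq:renyiCQMI_relaxed} end up on the correct sides of the inequality given the minus sign.
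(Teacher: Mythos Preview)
Your proposal is correct and follows essentially the same route as the paper's proof: two applications of Eq.~\eqref{eq:minfo_1} (for conditioning systems $BC$ and $B$), the compatibility check $\eta_{AB}=\Tr_C\,\eta_{ABC}$, and then the explicit preimage $\omega_{RA}=\eta_A^{1/2}\rho_A^{-1/2}\rho'_{RA}\rho_A^{-1/2}\eta_A^{1/2}$ for the relaxed version. Your added remark that $\eta_A=\rho_A^\alpha/\Tr[\rho_A^\alpha]$ when $\omega_A=\id_A$, making it manifestly independent of whether one is looking at $\rho_{ABC}$ or $\rho_{AB}$, is a nice explicit justification that the paper leaves implicit.
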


As compared to Corollary~\ref{cor:mutual info chain}, the above chain rule in Corollary~\ref{cor:renyiCQMI} has the advantage that as $\alpha\to1$, it reproduces the equality $H(A|BC)_\rho = H(A|B)_\rho-I(A:C|B)_\rho$ for von Neumann entropies. Furthermore, it has the advantage that the channel can output both the $BC$ registers, not just one of them as in Corollary~\ref{cor:mutual info chain}. However, it also has some disadvantages due to $\Idiff_\alpha$ being potentially a less well-behaved quantity. (Note that we also cannot view Corollary~\ref{cor:mutual info chain} directly as a relaxation of Corollary~\ref{cor:renyiCQMI} either, because inspecting Def.~\ref{def:sandwiched CQMI} for $\Idiff_\alpha$ shows that the registers are ordered in a way that does not yield Corollary~\ref{cor:mutual info chain} as a simplified case.) Hence in the subsequent discussions, we state results based on both versions, leaving it up to individual applications which version is more useful.

\subsection{Incorporating the GEAT}
\label{subsec:chainwithleakage}

To incorporate the above chain rules into the entropy accumulation theorem, we need a version of that theorem with $H_\alpha^\uparrow$ entropies. To obtain such a version of entropy accumulation, we first present the following statement, which is a combination of Corollary~5.1 in~\cite{FF21}, with Theorem~3.1 and Lemma~3.5 in~\cite{arx_MFSR22}.
	\begin{fact}
		\label{fact:channel_chain}
		(Corollary~5.1 in~\cite{FF21}, with Theorem~3.1 and Lemma~3.5 in~\cite{arx_MFSR22}) Let $\rho\in\dop{=}(AA'R)$, $\sigma\in\Pos(AA'R)$, $\mathcal{M}$ be a channel $AA'R\rightarrow BB'R'$, and $\mathcal{N}=\mathcal{R}_{B}\circ\mathcal{M}$ with $\mathcal{R}_{B}$ being a replacer channel to identity (i.e., $\forall\omega\in\dop{=}(AB);\ \mathcal{R}_B(\omega)=\omega_A\otimes\id_B$). Then for any $\alpha\in(1,2)$, letting $\widehat{\alpha}\defvar1/(2-\alpha)$ we have
		\begin{align}
			\label{eq:channel_chain}
			D_\alpha(\mathcal{M}(\rho)||\mathcal{N}(\sigma))\leq D_\alpha(\rho||\sigma) + D_{\widehat{\alpha}}(\mathcal{M}||\mathcal{N}).
		\end{align}
Furthermore, if $\mathcal{N}$ satisfies the \term{no-signalling (NS) condition} that there exists a channel $\mathcal{E}:AA'\rightarrow BB'R'$ such that $\mathcal{N}=\mathcal{E}\circ\Tr_{R}$, then we have
		\begin{align}
			\label{eq:channel_chain_reduced}
			D_\alpha(\mathcal{M}(\rho)||\mathcal{N}(\sigma))\leq D_\alpha(\rho_{AA'}||\sigma_{AA'}) + D_{\widehat{\alpha}}(\mathcal{M}||\mathcal{N}).
		\end{align}
	\end{fact}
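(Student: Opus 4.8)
Since the statement only repackages known chain rules, the proof I would give is a verification that the cited ingredients combine correctly; here I describe the assembly and then isolate the one delicate point.

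For Eq.~\eqref{eq:channel_chain} I would invoke the sandwiched \Renyi\ chain rule for a pair of channels: for any channels $\Phi,\Psi$ sharing an input space and an output space, any $\alpha\in(1,2)$, and $\widehat{\alpha}\defvar1/(2-\alpha)$, one has $D_\alpha(\Phi(\xi)\|\Psi(\zeta))\le D_\alpha(\xi\|\zeta)+D_{\widehat{\alpha}}(\Phi\|\Psi)$, with $D_{\widehat{\alpha}}(\Phi\|\Psi)$ the stabilized channel divergence of Def.~\ref{def:stablized channel divergence}; this is \cite[Corollary~5.1]{FF21} (and is the single-round form of the chain rule in \cite[Theorem~3.1]{arx_MFSR22}). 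Applying it with $\Phi=\mathcal{M}$ and $\Psi=\mathcal{N}=\mathcal{R}_{B}\circ\mathcal{M}$ would immediately give~\eqref{eq:channel_chain}, once one checks the routine points: $\mathcal{N}$ has the same input and output registers as $\mathcal{M}$; $\widehat{\alpha}=1/(2-\alpha)\in(1,\infty)$, so the channel divergence appearing is the one named in the statement; and the trace-rescaling introduced by the unnormalized $\id_B$ in $\mathcal{R}_{B}$ is harmless, since $D_\alpha$ shifts only by an additive constant when the second argument is rescaled, and this constant is absorbed into $D_{\widehat{\alpha}}(\mathcal{M}\|\mathcal{N})=\sup_{\omega}\bigl(-H_{\widehat{\alpha}}(B|B'R')_{\mathcal{M}(\omega)}\bigr)$.

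For the non-signalling refinement~\eqref{eq:channel_chain_reduced} I would follow \cite[Theorem~3.1 and Lemma~3.5]{arx_MFSR22}. Here the extra hypothesis $\mathcal{N}=\mathcal{E}\circ\Tr_{R}$ says that $\mathcal{N}(\sigma)=\mathcal{E}(\sigma_{AA'})$ is insensitive to the $R$-part of $\sigma$, so one wants to replace $D_\alpha(\rho\|\sigma)$ by the smaller $D_\alpha(\rho_{AA'}\|\sigma_{AA'})$. A natural route is to run the chain rule of the previous paragraph on the channels $\mathcal{M}\circ\mathcal{G}_\rho$ and $\mathcal{E}$, viewed as maps out of $AA'$, where $\mathcal{G}_\rho:AA'\to AA'R$ is a recovery channel for $\Tr_R$ with reference $\rho_{AA'R}$ (e.g.\ the Petz map, which recovers $\rho_{AA'R}$ from its marginal $\rho_{AA'}$); then $\mathcal{M}(\rho_{AA'R})=(\mathcal{M}\circ\mathcal{G}_\rho)(\rho_{AA'})$ while $\mathcal{N}(\sigma_{AA'R})=\mathcal{E}(\sigma_{AA'})$, and the chain rule produces exactly the term $D_\alpha(\rho_{AA'}\|\sigma_{AA'})$. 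What then remains is to bound the resulting channel divergence, $D_{\widehat{\alpha}}(\mathcal{M}\circ\mathcal{G}_\rho\|\mathcal{E})\le D_{\widehat{\alpha}}(\mathcal{M}\|\mathcal{N})$, using \emph{both} the factorization $\mathcal{N}=\mathcal{E}\circ\Tr_R$ and the replacer structure $\mathcal{N}=\mathcal{R}_{B}\circ\mathcal{M}$; this is precisely the content of \cite[Lemma~3.5]{arx_MFSR22}, and I would reconstruct the matching of the two channel-divergence optimizations from there.

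The delicate step is exactly this last channel-divergence inequality, and it is where I expect the main difficulty to lie. It is not data processing, which would only give $D_\alpha(\rho_{AA'R}\|\sigma_{AA'R})\ge D_\alpha(\rho_{AA'}\|\sigma_{AA'})$, the wrong direction; and the cruder move of replacing $\sigma$ by a product extension $\sigma_{AA'}\otimes\omega_R$ and optimizing over $\omega_R$ only yields $D_\alpha(\rho_{AA'}\|\sigma_{AA'})$ plus a generally nonzero correction (for instance $I(AA':R)_\rho$ as $\alpha\to1$), strictly weaker than the claimed bound. So the non-signalling reduction genuinely exploits that, because $\mathcal{N}=\mathcal{R}_{B}\circ\mathcal{M}$, the divergence being controlled is effectively a conditional entropy of the replaced register, together with the $R$-independence of $\mathcal{N}$'s output. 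The rest would be bookkeeping: checking that the $\alpha\mapsto\widehat{\alpha}$ correspondence and the admissible ranges of $\alpha$ are mutually consistent across \cite{FF21} and \cite{arx_MFSR22}, and that the replacer-channel hypothesis $\mathcal{N}=\mathcal{R}_{B}\circ\mathcal{M}$ persists through all the manipulations.
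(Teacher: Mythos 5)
The paper gives no proof of this Fact beyond the citations, so I am judging your reconstruction on its own terms. Your derivation of Eq.~\eqref{eq:channel_chain} is fine: it is the cited chain rule applied to the pair $(\mathcal{M},\mathcal{R}_B\circ\mathcal{M})$, and the points you check (matching input/output registers, the $\alpha\mapsto\widehat{\alpha}$ correspondence, the unnormalized $\id_B$) are the right ones.

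The gap is in your route to Eq.~\eqref{eq:channel_chain_reduced}. You apply the chain rule to the channels $\mathcal{M}\circ\mathcal{G}_\rho$ and $\mathcal{E}$ out of $AA'$, with $\mathcal{G}_\rho$ a Petz recovery map for $\Tr_R$, and you then need $D_{\widehat{\alpha}}(\mathcal{M}\circ\mathcal{G}_\rho\Vert\mathcal{E})\leq D_{\widehat{\alpha}}(\mathcal{M}\Vert\mathcal{N})$. That inequality is not the content of the cited lemma and does not follow from the hypotheses. The only structural way to compare the two channel divergences would be to write $\mathcal{E}=\mathcal{E}\circ\Tr_R\circ\mathcal{G}_\rho=\mathcal{N}\circ\mathcal{G}_\rho$ and argue that the supremum defining $D_{\widehat{\alpha}}(\mathcal{M}\circ\mathcal{G}_\rho\Vert\mathcal{N}\circ\mathcal{G}_\rho)$ runs over a subset of the inputs appearing in $D_{\widehat{\alpha}}(\mathcal{M}\Vert\mathcal{N})$; but this requires $\Tr_R\circ\mathcal{G}_\rho=\idmap_{AA'}$, which the Petz map does not satisfy --- indeed no channel can both be a right inverse of $\Tr_R$ and map $\rho_{AA'}$ to $\rho_{AA'R}$ in general (consider $\rho_{AA'R}$ pure and entangled across the $AA'{:}R$ cut). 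On generic inputs $\omega$ the second arguments $\mathcal{E}(\omega)$ and $\mathcal{N}(\mathcal{G}_\rho(\omega))$ therefore differ, and since R\'{e}nyi divergences have no triangle inequality there is no cheap way to absorb the mismatch. So as written the delicate step you isolate is not merely delicate; it is unproved and your proposed mechanism cannot supply it.

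The derivation that does work modifies the second argument rather than the first, and you were in fact one step away from it when you dismissed product extensions. Because $\mathcal{N}=\mathcal{E}\circ\Tr_R$, the output $\mathcal{N}(\sigma)=\mathcal{E}(\sigma_{AA'})$ is unchanged if $\sigma$ is replaced by \emph{any} $\tilde{\sigma}\in\Pos(AA'R)$ with $\Tr_R\tilde{\sigma}=\sigma_{AA'}$, so Eq.~\eqref{eq:channel_chain} applied to $(\rho,\tilde{\sigma})$ yields
\begin{align}
D_\alpha(\mathcal{M}(\rho)\Vert\mathcal{N}(\sigma))\leq\inf_{\tilde{\sigma}:\,\Tr_R\tilde{\sigma}=\sigma_{AA'}}D_\alpha(\rho\Vert\tilde{\sigma})+D_{\widehat{\alpha}}(\mathcal{M}\Vert\mathcal{N}),
\end{align}
with the channels --- and hence the channel-divergence term --- left untouched. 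One then invokes the fact that the infimum over \emph{all} (generally non-product, $\alpha$-dependent) extensions equals $D_\alpha(\rho_{AA'}\Vert\sigma_{AA'})$: data processing gives one direction, and an explicit conditional-state-type extension achieves equality; this is the lemma supplied by~\cite{arx_MFSR22}, and it is closely related to the constructions of the states $\nu$ and $\eta$ in Lemmas~\ref{lemma:D_chain} and~\ref{lemma:C_chain} of this paper. This is also how the paper itself deploys the Fact in Lemma~\ref{lemma:GEAT}, where $\sigma_{SRE}$ is explicitly chosen as a suitable extension of $\sigma_{SE}$.
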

	We can now state a simple modification of Lemma~3.6 in~\cite{arx_MFSR22}, where we have $H_\alpha^\uparrow$ instead of $H_\alpha$. 
\begin{lemma}\label{lemma:GEAT}
	Let $\rho\in\dop{=}(SRE)$, $\mathcal{M}$ be a channel $RE\rightarrow S'R'E'$, and $\widetilde{E}$ be a purifying register for $RE$. Then for any $\alpha\in (1,2)$, letting $\widehat{\alpha}\defvar1/(2-\alpha)$ we have
	\begin{align}
		\label{eq:GEAT}
		H_\alpha^\uparrow(SS'|E')_{\mathcal{M}(\rho)}\geq H_\alpha^\uparrow(S|RE)_\rho+\inf_{\omega\in\dop{=}(RE\widetilde{E})}H_{\widehat{\alpha}}(S'|E'\widetilde{E})_{\Tr_{R'}\circ\mathcal{M}(\omega)}.
	\end{align}
Furthermore, if there exists a channel $\mathcal{R}:E\rightarrow E'$ such that $\Tr_{S'R'}\circ\mathcal{M}=\mathcal{R}\circ\Tr_{R}$ (i.e., the channel $\mathcal{M}$ is non-signalling from $R$ to $E'$), then we have:
	\begin{align}
		\label{eq:GEAT_NS}
		H_\alpha^\uparrow(SS'|E')_{\mathcal{M}(\rho)}\geq H_\alpha^\uparrow(S|E)_\rho+\inf_{\omega\in\dop{=}(RE\widetilde{E})}H_{\widehat{\alpha}}(S'|E'\widetilde{E})_{\Tr_{R'}\circ\mathcal{M}(\omega)}.
	\end{align}
	\begin{proof}
		Let $\tilde{\mathcal{M}}=\idmap_S\otimes\Tr_{R'}\circ\mathcal{M}$ and $\tilde{\mathcal{N}}=\mathcal{R}_{S'}\circ\tilde{\mathcal{M}}$ be channels $SRE\rightarrow SS'E'$, with $\mathcal{R}_{S'}$ being a replacer channel, then for any $\rho\in\dop{=}(SRE)$, define $\sigma\in\Pos(SRE)$ such that
		\begin{align}
			D_\alpha(\rho_{SRE}||\sigma_{SRE})=D_\alpha(\rho_{SRE}||\id_S\otimes\sigma^*_{RE})=\inf_{\sigma\in\dop{=}(RE)}D_\alpha(\rho_{SRE}||\id_S\otimes\sigma_{RE}).
		\end{align}
		Then, by definition it is clear that
		\begin{align}
			\label{eq:entropies}
			&D_\alpha\left(\tilde{\mathcal{M}}(\rho_{SRE})\middle\vert\middle\vert\tilde{\mathcal{N}}(\sigma_{SRE})\right)\geq\inf_{\omega\in\dop{=}(SE')}D_\alpha\left(\tilde{\mathcal{M}}(\rho_{SRE})\middle\vert\middle\vert\id_{S'}\otimes\omega_{SE'}\right)=-H_\alpha^\uparrow(SS'|E')_{\mathcal{M}(\rho)}
			\notag\\
			&D_{\widehat{\alpha}}(\tilde{\mathcal{M}}||\tilde{\mathcal{N}})=\sup_{\omega\in\dop{=}(RE\widetilde{E})}-H_{\widehat{\alpha}}(S'|E'\widetilde{E})_{\Tr_{R'}\circ\mathcal{M}(\omega)}\notag\\
			&D_\alpha(\rho_{SRE}||\sigma_{SRE})=-H_\alpha^\uparrow(S|RE)_\rho,
		\end{align} 
		where $\widetilde{E}$ is isomorphic to $RE$.
		Combining Eq.~\eqref{eq:channel_chain} with~\eqref{eq:entropies}, results in the following:
		\begin{align}
			H^\uparrow_\alpha(SS'|E')_{\mathcal{M}(\rho)}\geq H_\alpha^\uparrow(S|RE)_\rho +\inf_{\omega\in\dop{=}(RE\widetilde{E})}H_{\widehat{\alpha}}(S'|E'\widetilde{E})_{\Tr_{R'}\circ\mathcal{M}(\omega)},
		\end{align}
		where we used the fact that $H_\alpha^\uparrow\geq H_\alpha$. 
		
		To prove Eq.~\eqref{eq:GEAT_NS}, we use the same channels $\tilde{\mathcal{M}},\tilde{\mathcal{N}}$, and for any $\rho\in\dop{=}(SRE)$ we define $\sigma\in\Pos(SRE)$ to be any extension of $\sigma_{SE}$ such that
		\begin{align}\label{eq:entropies_NS}
				D_\alpha(\rho_{SE}||\sigma_{SE})=D_\alpha(\rho_{SE}||\id_S\otimes\sigma^*_{E})=\inf_{\sigma\in\dop{=}(E)}D_\alpha(\rho_{SE}||\id_S\otimes\sigma_{E}).
		\end{align}
		With this choice of $\sigma_{SRE}$, the first two lines in Eq.~\eqref{eq:entropies} remain unchanged. Then, furthermore, to employ Eq.~\eqref{eq:GEAT_NS}, we need to show that there exists a channel $\mathcal{E}:SE\rightarrow SS'E'$ such that $\tilde{\mathcal{N}}=\mathcal{E}\circ\Tr_{R}$. For any input state $\omega_{SE}$ we define $\mathcal{E}$ as follows:
		\begin{align}
			\mathcal{E}(\omega_{SE})&\coloneqq \id_{S'}\otimes\idmap_S\otimes\mathcal{R}(\omega_{SE})\notag\\
			&=\id_{S'}\otimes\idmap_S\otimes\mathcal{R}\circ\Tr_{R}(\omega_{SRE})\notag\\
			&=\id_{S'}\otimes\Tr_{S'}\circ\tilde{\mathcal{M}}(\omega_{SRE})\notag\\
			&=\tilde{\mathcal{N}}(\omega_{SRE}),
		\end{align}
		where the third line follows from non-signalling assumption on $\mathcal{M}$. Applying the chain rule in Eq.~\eqref{eq:channel_chain_reduced} using the first two terms in Eq.~\eqref{eq:entropies}, and~\eqref{eq:entropies_NS} we have:
		\begin{align}
			H_\alpha^\uparrow(SS'|E')_{\mathcal{M}(\rho)}\geq H_\alpha^\uparrow(S|E)_\rho+\inf_{\omega\in\dop{=}(RE\widetilde{E})}H_{\widehat{\alpha}}(S'|E'\widetilde{E})_{\Tr_{R'}\circ\mathcal{M}(\omega)},
		\end{align}
		where we used $H_\alpha^\uparrow\geq H_\alpha$.
	\end{proof}
\end{lemma}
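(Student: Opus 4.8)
The plan is to reduce this single-round entropy-accumulation bound to the channel-divergence chain rule of Fact~\ref{fact:channel_chain}, applied to a ``padded'' version of $\mathcal{M}$, with the comparator state $\sigma$ chosen to be the optimizer of the $H_\alpha^\uparrow$ appearing on the right-hand side (this mirrors the proof of Lemma~3.6 in~\cite{arx_MFSR22}, but keeps the ``$\uparrow$'' versions). Concretely, I would set $\tilde{\mathcal{M}}\defvar\idmap_S\otimes(\Tr_{R'}\circ\mathcal{M})$, viewed as a channel $SRE\to SS'E'$, and $\tilde{\mathcal{N}}\defvar\mathcal{R}_{S'}\circ\tilde{\mathcal{M}}$ with $\mathcal{R}_{S'}$ the replacer-to-identity channel on the $S'$ output. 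Since $\tilde{\mathcal{N}}$ is a replacer composed onto an output register of $\tilde{\mathcal{M}}$, Fact~\ref{fact:channel_chain} applies and gives $D_\alpha(\tilde{\mathcal{M}}(\rho)\|\tilde{\mathcal{N}}(\sigma))\leq D_\alpha(\rho\|\sigma)+D_{\widehat{\alpha}}(\tilde{\mathcal{M}}\|\tilde{\mathcal{N}})$ for every $\sigma\in\Pos(SRE)$.

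Then I would identify the three terms. Choose $\sigma^*_{RE}$ attaining $H_\alpha^\uparrow(S|RE)_\rho=-D_\alpha(\rho_{SRE}\|\id_S\otimes\sigma^*_{RE})$ and set $\sigma_{SRE}\defvar\id_S\otimes\sigma^*_{RE}$, so the middle term is $D_\alpha(\rho\|\sigma)=-H_\alpha^\uparrow(S|RE)_\rho$. Because $\tilde{\mathcal{M}}$ acts as the identity on $S$ and $\tilde{\mathcal{N}}$ additionally replaces $S'$, the state $\tilde{\mathcal{N}}(\sigma)$ has the product form $\id_{SS'}\otimes\theta_{E'}$ for some normalized $\theta_{E'}$; hence the left term satisfies $D_\alpha(\tilde{\mathcal{M}}(\rho)\|\tilde{\mathcal{N}}(\sigma))\geq\inf_{\omega_{E'}}D_\alpha(\tilde{\mathcal{M}}(\rho)\|\id_{SS'}\otimes\omega_{E'})=-H_\alpha^\uparrow(SS'|E')_{\mathcal{M}(\rho)}$, where the last equality uses that tracing out $R'$ does not affect this conditional entropy. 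For the channel-divergence term, I would use that $\tilde{\mathcal{M}}$ and $\tilde{\mathcal{N}}$ both act trivially on $S$: the stabilized channel divergence only needs a reference isomorphic to the input and is unchanged under tensoring with an identity channel, so $D_{\widehat{\alpha}}(\tilde{\mathcal{M}}\|\tilde{\mathcal{N}})=D_{\widehat{\alpha}}(\Tr_{R'}\circ\mathcal{M}\,\|\,\mathcal{R}_{S'}\circ\Tr_{R'}\circ\mathcal{M})$, with the inputs taken together with a reference $\widetilde{E}\cong RE$. Unwinding the definition of the replacer and of $H_{\widehat{\alpha}}$, this equals $\sup_{\omega\in\dop{=}(RE\widetilde{E})}\bigl(-H_{\widehat{\alpha}}(S'|E'\widetilde{E})_{\Tr_{R'}\circ\mathcal{M}(\omega)}\bigr)=-\inf_{\omega}H_{\widehat{\alpha}}(S'|E'\widetilde{E})_{\Tr_{R'}\circ\mathcal{M}(\omega)}$. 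Substituting all three into the chain rule and rearranging yields~\eqref{eq:GEAT}.

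For the non-signalling refinement~\eqref{eq:GEAT_NS}, I would instead invoke the reduced chain rule~\eqref{eq:channel_chain_reduced}, whose hypothesis is that $\tilde{\mathcal{N}}$ factors as $\mathcal{E}\circ\Tr_R$ for some channel $\mathcal{E}:SE\to SS'E'$. This follows from the assumption $\Tr_{S'R'}\circ\mathcal{M}=\mathcal{R}\circ\Tr_R$: then $\tilde{\mathcal{N}}(\omega_{SRE})=(\idmap_S\otimes\mathcal{R})(\omega_{SE})\otimes\id_{S'}$ depends only on $\omega_{SE}$, so one can take $\mathcal{E}(\omega_{SE})\defvar(\idmap_S\otimes\mathcal{R})(\omega_{SE})\otimes\id_{S'}$. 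Now pick $\sigma^*_E$ attaining $H_\alpha^\uparrow(S|E)_\rho$, set $\sigma_{SE}\defvar\id_S\otimes\sigma^*_E$ and let $\sigma_{SRE}$ be any extension; then $D_\alpha(\rho_{SE}\|\sigma_{SE})=-H_\alpha^\uparrow(S|E)_\rho$, the state $\tilde{\mathcal{N}}(\sigma)=\id_{SS'}\otimes\mathcal{R}(\sigma^*_E)$ still has the form needed to lower-bound the left term by $-H_\alpha^\uparrow(SS'|E')_{\mathcal{M}(\rho)}$, and the channel-divergence term is the same as before. Rearranging gives~\eqref{eq:GEAT_NS}.

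I expect the main obstacle to be purely organizational: correctly tracking which registers serve as channel inputs, references, and outputs at each stage, and in particular ensuring that (i) $\tilde{\mathcal{M}},\tilde{\mathcal{N}}$ genuinely satisfy the structural hypotheses of Fact~\ref{fact:channel_chain} (output register being replaced; and, in the NS case, factoring through $\Tr_R$), and (ii) the comparator $\sigma$, once pushed through $\tilde{\mathcal{N}}$, lands in the product form $\id_{SS'}\otimes(\cdot)_{E'}$, so that the left term collapses to exactly $H_\alpha^\uparrow(SS'|E')$ rather than some larger conditional entropy. The one genuinely non-routine input is the stability of the stabilized channel divergence under $\idmap_S\otimes(\cdot)$, which is what produces the clean single-round quantity $H_{\widehat{\alpha}}(S'|E'\widetilde{E})$ with $\widetilde{E}$ purifying only $RE$; the rest is bookkeeping with the definitions. (As a sanity alternative, one could first establish the same bound with $H_\alpha$ in place of $H_\alpha^\uparrow$ on the left and then apply $H_\alpha^\uparrow\geq H_\alpha$; but the choice of $\sigma$ above already delivers the $\uparrow$ version directly.)
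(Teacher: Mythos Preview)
Your proposal is correct and follows essentially the same approach as the paper's proof: both define $\tilde{\mathcal{M}}=\idmap_S\otimes\Tr_{R'}\circ\mathcal{M}$ and $\tilde{\mathcal{N}}=\mathcal{R}_{S'}\circ\tilde{\mathcal{M}}$, choose $\sigma$ as the optimizer of the relevant $H_\alpha^\uparrow$, and then read off the three terms in Fact~\ref{fact:channel_chain} (using Eq.~\eqref{eq:channel_chain_reduced} for the NS case after verifying $\tilde{\mathcal{N}}$ factors through $\Tr_R$). If anything, your bookkeeping is slightly cleaner: you explicitly note that $\tilde{\mathcal{N}}(\sigma)=\id_{SS'}\otimes\theta_{E'}$ (not merely $\id_{S'}\otimes(\cdot)_{SE'}$), which makes the identification with $-H_\alpha^\uparrow(SS'|E')$ immediate, and you make explicit the stability of the stabilized channel divergence under tensoring with $\idmap_S$, which the paper uses without comment to reduce the reference register from $SRE$ to $\widetilde{E}\cong RE$.
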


\newcommand{\iniR}{R}
\newcommand{\iniE}{E}
\newcommand{\midR}{\widehat{R}}
\newcommand{\midE}{\widehat{E}}

With this, we can state a modification of entropy accumulation, which allows for handling some ``leakage processes''. 
Qualitatively, the idea is that before applying a channel $\mathcal{N}$ that satisfies the NS condition, we first apply a ``leakage channel'' $\LEAKchann$ that sends a leakage register $L$ from the memory register to the side-information registers. Using the chain rules from Sec.~\ref{sec:HandIbounds}, we can compensate for this by subtracting suitable mutual-information quantities involving the register $L$.
\begin{theorem}\label{thrm:GEATleakage}
	Let $\rho\in\dop{=}(S \iniR \iniE)$, let $\LEAKchann:\iniR\iniE\rightarrow L \midR \midE$ and $\mathcal{N}:L \midR \midE\rightarrow S'R'E'$ be channels, such that there exists a channel $\mathcal{R}:L \midE\rightarrow E'$, where $\Tr_{S'R'}\circ\mathcal{N}=\mathcal{R}\circ\Tr_{\midR}$ (i.e., $\mathcal{N}$ is non-signalling from $\midR$ to $E'$). Let $\widetilde{E}$ be a purifying register for $L \midR \midE$.
	Then, for any $\alpha\in(1,2)$, letting $\widehat{\alpha}\defvar1/(2-\alpha)$ we have:
	\begin{align}\label{eq:GEATleakage diff}
		H_\alpha^\uparrow(SS'|E')_{\mathcal{N}\circ\LEAKchann(\rho)}\geq H_\alpha^\uparrow(S|\iniE)_\rho-\sup_{\omega\in\dop{=}(S\iniR\iniE)}\Idiff_\alpha(S;L|\midE)_{\LEAKchann(\omega)}+\inf_{\omega\in\dop{=}(L\midR\midE\widetilde{E})}
		H_{\widehat{\alpha}}(S'|E'\widetilde{E})_{\mathcal{N}(\omega)}.
	\end{align}
	Furthermore, if the registers $\iniE,\midE$ are isomorphic and we restrict $\LEAKchann$ to act non-trivially only on register $\iniR$  (formally: there exists a channel $\widetilde{\LEAKchann}:\iniR\rightarrow L\midR$ such that $\LEAKchann=\widetilde{\LEAKchann}\otimes\idmap_{\iniE\rightarrow E}$), then we have:
	\begin{align}\label{eq:GEATleakage}
		H_\alpha^\uparrow(SS'|E')_{\mathcal{N}\circ\LEAKchann(\rho)} &\geq H_\alpha^\uparrow(S|\iniE)_\rho-\sup_{\omega\in\dop{=}(S\iniR E)}\IDA_\alpha(S\midE;L)_{\LEAKchann(\omega)}+\inf_{\omega\in\dop{=}(L\midR\midE\widetilde{E})}
		H_{\widehat{\alpha}}(S'|E'\widetilde{E})_{\mathcal{N}(\omega)} \nonumber\\
		&= H_\alpha^\uparrow(S|\iniE)_\rho-\sup_{\omega\in\dop{=}(\stabR\iniR)}\IDA_\alpha(\stabR;L)_{\widetilde{\LEAKchann}(\omega)}+
		\inf_{\omega\in\dop{=}(L\midR\midE\widetilde{E})}
		H_{\widehat{\alpha}}(S'|E'\widetilde{E})_{\mathcal{N}(\omega)},
	\end{align}
	where $\stabR$ is a purifying register for $\iniR$.
\end{theorem}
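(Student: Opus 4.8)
The plan is to apply the $H_\alpha^\uparrow$-version of entropy accumulation (Lemma~\ref{lemma:GEAT}) to the channel $\mathcal{N}$ and then use the chain rules of Section~\ref{sec:HandIbounds} to peel the leakage register $L$ off the conditioning side. Concretely, I would first invoke Eq.~\eqref{eq:GEAT_NS} of Lemma~\ref{lemma:GEAT} on $\mathcal{N}:L\midR\midE\to S'R'E'$, matching its registers as: secret register $S$, memory register $\midR$, side-information register $L\midE$, and input state $\LEAKchann(\rho)\in\dop{=}(SL\midR\midE)$. The no-signalling hypothesis required there (non-signalling from the memory register to $E'$) is exactly the assumed factorization $\Tr_{S'R'}\circ\mathcal{N}=\mathcal{R}\circ\Tr_{\midR}$ with $\mathcal{R}:L\midE\to E'$, so this yields
\[
H_\alpha^\uparrow(SS'|E')_{\mathcal{N}\circ\LEAKchann(\rho)}\;\geq\;H_\alpha^\uparrow(S|L\midE)_{\LEAKchann(\rho)}\;+\;\inf_{\omega\in\dop{=}(L\midR\midE\widetilde E)}H_{\widehat\alpha}(S'|E'\widetilde E)_{\mathcal{N}(\omega)},
\]
where $\widetilde E$ purifies $L\midR\midE$ and the $\Tr_{R'}$ from Lemma~\ref{lemma:GEAT} is left implicit since $R'$ does not appear in the entropy. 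The remaining task is to lower-bound $H_\alpha^\uparrow(S|L\midE)_{\LEAKchann(\rho)}$.

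For the first part (Eq.~\eqref{eq:GEATleakage diff}), I would write $\LEAKchann(\rho)_{SL\midE}$ as the image of $\rho_{S\iniR\iniE}$ under $\Tr_{\midR}\circ\LEAKchann:\iniR\iniE\to L\midE$, and apply the first inequality of Eq.~\eqref{eq:renyiCQMI_relaxed} in Corollary~\ref{cor:renyiCQMI} with $A=S$, $B=\midE$, $C=L$, $R=\iniR\iniE$ and this channel. This gives $H_\alpha^\uparrow(S|L\midE)_{\LEAKchann(\rho)}\geq H_\alpha^\uparrow(S|\midE)_{\LEAKchann(\rho)}-\sup_{\omega\in\dop{=}(S\iniR\iniE)}\Idiff_\alpha(S;L|\midE)_{\LEAKchann(\omega)}$, using that the $SL\midE$-marginal of $(\Tr_{\midR}\circ\LEAKchann)(\omega)$ agrees with that of $\LEAKchann(\omega)$ so the $\Idiff_\alpha$-terms may be written on $\LEAKchann(\omega)$. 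Finally $H_\alpha^\uparrow(S|\midE)_{\LEAKchann(\rho)}\geq H_\alpha^\uparrow(S|\iniE)_\rho$ by data-processing (Fact~\ref{fact:DPI}), since $\midE$ is produced from $\iniE$; chaining the three bounds proves Eq.~\eqref{eq:GEATleakage diff}.

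For the second part (Eq.~\eqref{eq:GEATleakage}), now $\LEAKchann=\widetilde\LEAKchann\otimes\idmap_{\iniE\to E}$ with $\widetilde\LEAKchann:\iniR\to L\midR$ and $\midE\cong\iniE$, so $\LEAKchann(\rho)_{S\midE}=\rho_{SE}$. Here I would instead use Eq.~\eqref{eq:mutual info chain_relaxed} of Corollary~\ref{cor:mutual info chain} with $C=S$, $A=\midE$, $B=L$, $R=\iniR$ and channel $\Tr_{\midR}\circ\widetilde\LEAKchann:\iniR\to L$, obtaining $H_\alpha^\uparrow(S|L\midE)_{\LEAKchann(\rho)}\geq H_\alpha^\uparrow(S|\iniE)_\rho-\sup_{\omega\in\dop{=}(S\iniR E)}\IDA_\alpha(S\midE;L)_{\LEAKchann(\omega)}$ in one step (no separate data-processing needed, since $\rho'_{S\midE}=\rho_{SE}$); combined with the Lemma~\ref{lemma:GEAT} bound this is the first line of Eq.~\eqref{eq:GEATleakage}. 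The equality to the second line is the usual ``stabilization'' fact: because $\widetilde\LEAKchann$ acts only on $\iniR$, data-processing for $\IDA_\alpha$ in its first argument (Fact~\ref{fact:DPI}) together with the isometric equivalence of purifications of $\iniR$ shows $\sup_{\omega\in\dop{=}(S\iniR E)}\IDA_\alpha(S\midE;L)_{\LEAKchann(\omega)}=\sup_{\omega\in\dop{=}(\stabR\iniR)}\IDA_\alpha(\stabR;L)_{\widetilde\LEAKchann(\omega)}$, where $\stabR$ purifies $\iniR$.

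The main obstacle is not computational but organizational: correctly tracking the register roles when feeding $\LEAKchann(\rho)$ and $\mathcal{N}$ into Lemma~\ref{lemma:GEAT}. The key realization is that in the no-signalling version of that lemma the only register that may be dropped ``for free'' from the side information is the one satisfying the non-signalling condition --- here $\midR$ --- so $L$ (and $\midE$) are necessarily pushed onto the conditioning side and must be removed afterwards via the Section~\ref{sec:HandIbounds} chain rules, not by the accumulation step itself. Secondary care is needed for the data-processing step $H_\alpha^\uparrow(S|\midE)_{\LEAKchann(\rho)}\geq H_\alpha^\uparrow(S|\iniE)_\rho$ in the general first part (trivial, indeed an equality, in the restricted setting where $\midE=\iniE$), and for the purification argument behind the ``stabilized'' leakage term in Eq.~\eqref{eq:GEATleakage}.
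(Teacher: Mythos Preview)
Your proposal is correct and follows essentially the same route as the paper: apply Eq.~\eqref{eq:GEAT_NS} of Lemma~\ref{lemma:GEAT} to $\mathcal{N}$ to obtain the $H_\alpha^\uparrow(S|L\midE)_{\LEAKchann(\rho)}$ term, then peel off $L$ using Corollary~\ref{cor:renyiCQMI} for Eq.~\eqref{eq:GEATleakage diff} and Corollary~\ref{cor:mutual info chain} for Eq.~\eqref{eq:GEATleakage}, finishing with the same data-processing step. You are in fact slightly more explicit than the paper in spelling out the register assignments and in justifying the ``stabilization'' equality in Eq.~\eqref{eq:GEATleakage}, which the paper states without proof.
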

\begin{proof}
	Since $\mathcal{N}$ is non-signalling from $\midR$ to $E'$, by applying Lemma~\ref{lemma:GEAT}, we have
	\begin{align}
		H_\alpha^\uparrow(SS'|E')_{\mathcal{N}\circ\LEAKchann(\rho)}&\geq H_\alpha^\uparrow(S|L\midE)_{\LEAKchann(\rho)}+
		\inf_{\omega\in\dop{=}(L\midR\midE\widetilde{E})}
		H_{\widehat{\alpha}}(S'|E'\widetilde{E})_{\mathcal{N}(\omega)}\notag\\
		&\geq H_\alpha^\uparrow(S|\midE)_{\LEAKchann(\rho)}-\sup_{\omega\in\dop{=}(S\iniR\iniE)}\Idiff_\alpha(S;L|\midE)_{\LEAKchann(\omega)}+
		\inf_{\omega\in\dop{=}(L\midR\midE\widetilde{E})}
		H_{\widehat{\alpha}}(S'|E'\widetilde{E})_{\mathcal{N}(\omega)}\notag\\
		&\geq H_\alpha^\uparrow(S|\iniE)_{\rho}-\sup_{\omega\in\dop{=}(S\iniR\iniE)}\Idiff_\alpha(S;L|\midE)_{\LEAKchann(\omega)}+
		\inf_{\omega\in\dop{=}(L\midR\midE\widetilde{E})} 
		H_{\widehat{\alpha}}(S'|E'\widetilde{E})_{\mathcal{N}(\omega)},
	\end{align}
	where the first line follows from Eq.~\eqref{eq:GEAT_NS}, the second line is an application of Corollary~\ref{cor:renyiCQMI}, and the last line holds by data-processing (Fact~\ref{fact:DPI}). To prove Eq.~\eqref{eq:GEATleakage}, we use Corollary~\ref{cor:mutual info chain}:
	\begin{align}
		H_\alpha^\uparrow(SS'|E')_{\mathcal{N}\circ\LEAKchann(\rho)}&\geq H_\alpha^\uparrow(S|L\midE)_{\LEAKchann(\rho)}+
		\inf_{\omega\in\dop{=}(L\midR\midE\widetilde{E})}
		H_{\widehat{\alpha}}(S'|E'\widetilde{E})_{\mathcal{N}(\omega)}\notag\\
		&\geq 
		H_\alpha^\uparrow(S|\iniE)_{\rho}-\sup_{\omega\in\dop{=}(S\iniR\iniE)}\IDA_\alpha(S\iniE;L)_{\LEAKchann(\omega)}+
		\inf_{\omega\in\dop{=}(L\midR\midE\widetilde{E})}
		H_{\widehat{\alpha}}(S'|E'\widetilde{E})_{\mathcal{N}(\omega)},
	\end{align}
	where the first line holds by Eq.~\eqref{eq:GEAT_NS}, and the second line follows from Eq.~\eqref{eq:mutual info chain_relaxed} together with the fact that $\LEAKchann$ essentially acts as identity on $\iniE$.
\end{proof}

As another possibility for applications, we now also state a slight variant where rather than having a separate leakage register, the channel $\LEAKchann$ simply modifies the ``memory register'' $\iniR$, and we do not impose the NS condition on the subsequent $\mathcal{N}$ channel. If $\LEAKchann$ modifies the memory register in such a way that it has bounded mutual information with other registers (informally, it always ``disrupts'' any information stored on the memory register), this would also yield nontrivial bounds, despite the lack of NS condition. However, we do not currently have a concrete application in mind for this version.
\begin{corollary}
	\label{cor:GEATsignalling}
	Let $\rho\in\dop{=}(S\iniR\iniE)$, let $\LEAKchann: \iniR\iniE\rightarrow \midR\midE$ and $\mathcal{N}: \midR\midE\rightarrow S'R'E'$ be channels. Let $\widetilde{E}$ be a purifying register for $\midR\midE$. Then for any $\alpha\in(1,2)$, letting $\widehat{\alpha}\defvar1/(2-\alpha)$ we have:
		\begin{align}
		\label{eq:GEATsignalling}
		H_\alpha^\uparrow(SS'|E')_{\mathcal{N}\circ\LEAKchann(\rho)}\geq H_\alpha^\uparrow(S|\iniE)_\rho-\sup_{\omega\in\dop{=}(\iniR SE)}\Idiff_\alpha(S;\midR|\midE)_{\LEAKchann(\omega)}+\inf_{\omega\in\dop{=}(\midR\midE\widetilde{E})}H_{\widehat{\alpha}}(S'|E'\widetilde{E})_{\mathcal{N}(\omega)}.
	\end{align}
	Furthermore, if the registers $\iniE,\midE$ are isomorphic and we restrict $\LEAKchann$ to act non-trivially only on register $\iniR$  (formally: there exists a channel $\widetilde{\LEAKchann}:\iniR\rightarrow \midR$ such that $\LEAKchann=\widetilde{\LEAKchann}\otimes\idmap_{\iniE\rightarrow \midE}$), then we have:
	\begin{align}
		\label{eq:GEATsignallingreduced}
		H_\alpha^\uparrow(SS'|E')_{\mathcal{N}\circ\LEAKchann(\rho)}&\geq H_\alpha^\uparrow(S|\iniE)_\rho-\sup_{\omega\in\dop{=}(\iniR SE)}\IDA_\alpha(S\iniE;\midR)_{\LEAKchann(\omega)}+\inf_{\omega\in\dop{=}(\midR\midE\widetilde{E})}H_{\widehat{\alpha}}(S'|E'\widetilde{E})_{\mathcal{N}(\omega)} \nonumber\\
		&= H_\alpha^\uparrow(S|\iniE)_\rho - \sup_{\omega\in\dop{=}(\stabR\iniR)}\IDA_\alpha(\stabR;\midR)_{\widetilde{\LEAKchann}(\omega)} + \inf_{\omega\in\dop{=}(\midR\midE\widetilde{E})}H_{\widehat{\alpha}}(S'|E'\widetilde{E})_{\mathcal{N}(\omega)}.
	\end{align}
	where $\stabR$ is a purifying register for $\iniR$.
	\begin{proof}
		The proof is essentially the same as Theorem~\ref{thrm:GEATleakage}, by replacing the $L$ register with $\midR$, and noting that since we no longer have non-signalling condition we must use Eq.~\eqref{eq:GEAT} instead.
	\end{proof}
\end{corollary}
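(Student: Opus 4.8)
The plan is to run the argument of Theorem~\ref{thrm:GEATleakage} with two changes dictated by the weaker hypotheses: since $\mathcal{N}$ is no longer assumed non-signalling, I would invoke the non-NS channel chain rule Eq.~\eqref{eq:GEAT} of Lemma~\ref{lemma:GEAT} in place of Eq.~\eqref{eq:GEAT_NS}; and since there is no designated leakage register, the role of $L$ in that proof is taken over by the modified memory register $\midR$ that $\LEAKchann$ emits.

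First I would apply Eq.~\eqref{eq:GEAT} of Lemma~\ref{lemma:GEAT} to the channel $\mathcal{N}:\midR\midE\to S'R'E'$ acting on $\LEAKchann(\rho)\in\dop{=}(S\midR\midE)$, with the identifications $R\mapsto\midR$, $E\mapsto\midE$, $\mathcal{M}\mapsto\mathcal{N}$, and $\widetilde{E}$ the purifying register of $\midR\midE$ from the statement, giving
\begin{align}
H_\alpha^\uparrow(SS'|E')_{\mathcal{N}\circ\LEAKchann(\rho)}\geq H_\alpha^\uparrow(S|\midR\midE)_{\LEAKchann(\rho)}+\inf_{\omega\in\dop{=}(\midR\midE\widetilde{E})}H_{\widehat{\alpha}}(S'|E'\widetilde{E})_{\mathcal{N}(\omega)},
\end{align}
where the $\Tr_{R'}$ in Eq.~\eqref{eq:GEAT} is dropped because $H_{\widehat{\alpha}}(S'|E'\widetilde{E})$ is independent of $R'$. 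Next I would strip $\midR$ off the conditioning. For Eq.~\eqref{eq:GEATsignalling} this is done with the relaxed chain rule Eq.~\eqref{eq:renyiCQMI_relaxed} of Corollary~\ref{cor:renyiCQMI} with $A\mapsto S$, $B\mapsto\midE$, $C\mapsto\midR$, input register $R\mapsto\iniR\iniE$ and channel $\LEAKchann$ (so $\rho'_{RA}=\rho$ and $\mathcal{N}(\rho')=\LEAKchann(\rho)$), which gives
\begin{align}
H_\alpha^\uparrow(S|\midR\midE)_{\LEAKchann(\rho)}\geq H_\alpha^\uparrow(S|\midE)_{\LEAKchann(\rho)}-\sup_{\omega\in\dop{=}(\iniR SE)}\Idiff_\alpha(S;\midR|\midE)_{\LEAKchann(\omega)}.
\end{align}
Finally I would bound $H_\alpha^\uparrow(S|\midE)_{\LEAKchann(\rho)}\geq H_\alpha^\uparrow(S|\iniE)_\rho$ and chain the three displays, obtaining Eq.~\eqref{eq:GEATsignalling}.

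For the ``furthermore'' version Eq.~\eqref{eq:GEATsignallingreduced}, I would instead strip $\midR$ using Eq.~\eqref{eq:mutual info chain_relaxed} of Corollary~\ref{cor:mutual info chain} (with $C\mapsto S$, $A\mapsto\midE$, $B\mapsto\midR$, channel $\widetilde{\LEAKchann}:\iniR\to\midR$), which produces the penalty $\sup_{\omega\in\dop{=}(\iniR SE)}\IDA_\alpha(S\iniE;\midR)_{\LEAKchann(\omega)}$; and the last step becomes the \emph{equality} $H_\alpha^\uparrow(S|\midE)_{\LEAKchann(\rho)}=H_\alpha^\uparrow(S|\iniE)_\rho$, since $\LEAKchann=\widetilde{\LEAKchann}\otimes\idmap_{\iniE\to\midE}$ acts on $\iniE$ as a mere relabelling. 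The final identity $\sup_{\omega\in\dop{=}(\iniR SE)}\IDA_\alpha(S\iniE;\midR)_{\LEAKchann(\omega)}=\sup_{\omega\in\dop{=}(\stabR\iniR)}\IDA_\alpha(\stabR;\midR)_{\widetilde{\LEAKchann}(\omega)}$ would follow from the data-processing monotonicity of $\IDA_\alpha$ in its first argument (Fact~\ref{fact:DPI}) together with the fact that $\widetilde{\LEAKchann}$ touches only $\iniR$, so the worst-case input is an isometric image of a purification $\stabR$ of $\iniR$ --- the same stabilization step used in Def.~\ref{def:stablized channel divergence}.

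The main obstacle is the last step, $H_\alpha^\uparrow(S|\midE)_{\LEAKchann(\rho)}\geq H_\alpha^\uparrow(S|\iniE)_\rho$: data-processing through $\Tr_{\midR}\circ\LEAKchann:\iniR\iniE\to\midE$ only gives $H_\alpha^\uparrow(S|\midE)_{\LEAKchann(\rho)}\geq H_\alpha^\uparrow(S|\iniR\iniE)_\rho$ directly, so one genuinely needs either the structural assumption that $\LEAKchann$ leaves $\iniE$ intact (the ``furthermore'' case, where the inequality is tight) or some additional hypothesis controlling how $\LEAKchann$ transfers memory correlations into $\midE$. The remaining work is bookkeeping: checking that the register identifications feeding Lemma~\ref{lemma:GEAT}, Corollary~\ref{cor:renyiCQMI} and Corollary~\ref{cor:mutual info chain} are legitimate, that the optimization domains $\dop{=}(\midR\midE\widetilde{E})$, $\dop{=}(\iniR SE)$, $\dop{=}(\stabR\iniR)$ match the statement, and that $\dim(SE)$ is large enough for the stabilization identity above to hold with equality rather than merely ``$\leq$''.
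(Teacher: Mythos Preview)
Your proposal is correct and follows exactly the route the paper sketches: replace $L$ by $\midR$ throughout the Theorem~\ref{thrm:GEATleakage} argument and substitute the non-NS chain rule Eq.~\eqref{eq:GEAT} for Eq.~\eqref{eq:GEAT_NS}, then strip $\midR$ via Corollary~\ref{cor:renyiCQMI} (or Corollary~\ref{cor:mutual info chain} for the ``furthermore'' part). The concern you flag about the step $H_\alpha^\uparrow(S|\midE)_{\LEAKchann(\rho)}\geq H_\alpha^\uparrow(S|\iniE)_\rho$ in the general case is well-spotted and is inherited verbatim from the paper's own proof of Theorem~\ref{thrm:GEATleakage}; it is not a defect of your adaptation.
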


The standard GEAT relates the {\Renyi} conditional entropy of a state produced by a sequence of channels $\mathcal{M}_j$ (satisfying the NS condition) to the {\Renyi} conditional entropy produced by each individual channel, by iterating the chain rule in Fact~\ref{fact:channel_chain}. Our above results allow us to weaken the NS condition, by introducing ``leakage channels'' $\LEAKchann_j$ of the form considered in those results, although a little additional care is needed to account for the versions of the GEAT with ``testing''. For ease of future applications, we present formal theorem statements regarding this in Appendix~\ref{app:great}, for readers already familiar with the GEAT (and a strengthened version recently developed in~\cite{arx_AHT24}). 

\section{Example applications}
\label{sec:examples}

\newcommand{\dleak}{\delta_{\mathrm{leak}}}

Qualitatively, our results in the preceding section (and Appendix~\ref{app:great}) imply that if we analyze a sequence of $n$ channels in a GEAT-based framework, in which for each round a channel $\LEAKchann_j$ occurs that leaks some side-information to an adversary, then we can compensate for these leakage channels $\LEAKchann_j$ very simply by just subtracting the corresponding mutual-information values (summed over the $n$ channels $\LEAKchann_j$).
For simplicity we focus on the bound resulting from~\eqref{eq:GEATleakage}, since in that case, for each leakage channel the quantity of interest would be (omitting the $j$ subscript for brevity, and recalling that the $\widetilde{\LEAKchann}$ channel is the ``part'' of $\LEAKchann$ that only acts on $\iniR$):
\begin{align}\label{eq:simplebnd}
\sup_{\omega\in\dop{=}(\stabR \iniR)}\IDA_\alpha(\stabR ;L)_{\widetilde{\LEAKchann}(\omega)}.
\end{align} 
In other words, we simply need to upper bound the {\Renyi} mutual information between the channel output and some arbitrary purification of its input. Given a bound on the above quantity\footnote{The $\alpha$ value to use here would be determined by the $\alpha$ value used in the GEAT-based security proof. This is usually optimized numerically to maximize the keyrates, and so we mainly do not focus on specific values for it here; we simply note that its optimal value approaches $1$ as $n$ increases, as described in e.g.~\cite{DF19,arx_MFSR22,arx_AHT24}. Still, in Fig.~\ref{fig:infobound} later we show some example calculations for specific $\alpha$ values that may be of interest.} (summed over the rounds), we can very simply compensate for the leakage channels by just subtracting that value from the entropy bounds given by the GEAT without leakage.

We now outline some device-imperfection scenarios in which such a model can be considered, before explaining possible methods to explicitly bound the above quantity. 

\subsection{Types of device imperfections}

The most straightforward way to apply this model would be if there is an actual physical system $L_j$ that leaks to the adversary Eve in each round, and we can constrain its correlations to the secret data. For instance, this could be a photonic system that we constrain by requiring it to have a large vacuum component, as considered in~\cite{KZMW01,PCS+18} for device-dependent protocols and in~\cite{arx_Tan23} for device-independent protocols. In particular, this resolves an issue regarding a ``restricted adaptiveness'' condition imposed on Eve in~\cite{arx_Tan23} --- under the model we consider here, Eve is free to modify her attack in every round based on the leakage registers gathered from previous rounds. Hence our result allows us to handle such leakage mechanisms even against the most general adaptive attacks.

However, our method can also apply to less obviously related forms of device imperfection, such as source correlation. To sketch out the ideas, a typical prepare-and-measure protocol consists of Alice generating some secret classical value $S_j$ in each round, and then using some source device to prepare a quantum state (to be sent to Bob) that depends only on the value of $S_j$. Such a source can be handled using the GEAT in a PM protocol, as it satisfies the NS condition. The notion of source correlation refers to the possibility that a realistic source device may be imperfect, in that the state it generates might depend not only on the $S_j$ value in that round, but also the values used in previous rounds (or even more generally, it might even be entangled with the quantum states it prepared in previous rounds). 

While it may not be immediately apparent that our results can be used to study such a scenario, we now argue that in fact they can be, for some classes of source correlations. 
Specifically, our results can be applied whenever the source correlation can be modelled as follows (again leaving out the $j$ subscripts for brevity): after each round, the source retains a memory register $\iniR$ for the next round. Then each time Alice uses it to prepare a state, the source first prepares the ideal specified state (on some register, say, $A'$), then applies a channel of the form $\widetilde{\LEAKchann}: \iniR \to L$, and finally applies some channel $A'L \to A'$ to modify the state it prepared on $A'$.\footnote{Strictly speaking, for full accuracy in this model, various channels should have additional output registers to allow memory to be transferred onwards to future rounds. However, these are not pertinent to our discussion here and so we omit them for brevity.}

A concrete example of a simple scenario we can capture with this would be for instance if the source behaves as follows in each round:
\begin{itemize}
\item With some probability $1-\dleak$ the source perfectly prepares the ideal state on $A'$. 
\item Otherwise (with probability $\dleak$), it modifies the prepared state on $A'$ via some arbitrary quantum channel $A'\iniR\to A'$.
\end{itemize}
This can be captured within our model as follows. Let $L$ be a quantum register isomorphic to $\iniR$ except with one additional dimension given by a basis state we denote as $\ket{\perp}_L$. We describe the source device behaviour as:
\begin{itemize}
\item It first applies a channel $\widetilde{\LEAKchann}:\iniR \to L$, which with probability $1-\dleak$ completely ignores the input and outputs $\pure{\perp}_L$, and otherwise embeds the input state on $\iniR$ into the output register $L$.
\item It then perfectly prepares the ideal state on $A'$. 
\item It then performs a two-outcome projective measurement on $L$, onto the subspace spanned by $\ket{\perp}_L$ and the orthogonal subspace. Conditioned on the outcome, it either simply leaves the state on $A'$ untouched, or it modifies it with some arbitrary quantum channel $A'L \to A'$.
\end{itemize}
Note that this will indeed reproduce the same states as the source device model described above, and is compatible with our approach of interleaving leakage channels between channels satisfying the NS conditions.\footnote{A technical point is that in the context of our full Theorem~\ref{thrm:GREATleakage}--\ref{th:GREAT} statements in Appendix~\ref{app:great}, the $\widetilde{\LEAKchann}$ channel in the above description would be referring to the leakage channel $\LEAKchann_{j-1}$ of the \emph{preceding} round in Theorems~\ref{thrm:GREATleakage}--\ref{th:GREAT}, in order to model this process correctly within that framework.} Still, we remark that this is just a starting example --- we could consider more elaborate models to allow more general forms of source correlations. For instance, rather than the essentially ``classical'' form of imperfect behaviour considered above, one could impose only the constraint that the generated states are close in fidelity to the ideal states (along the lines of~\cite{LPK23,MCA23,CNLT24}), and aim to construct a corresponding model in our framework in which the ``leaked'' mutual information can be bounded. We aim to study such scenarios in future work.

\subsection{Bounding the mutual information}

\newcommand{\CvsQ}{\zeta_\mathrm{cq}}

We now turn to the question of bounding~\eqref{eq:simplebnd}. If we consider the simple example described above, a straightforward bound can be obtained as follows. (This analysis also applies in the context of photonic leakage as described earlier, if it follows the ``classical-probabilistic leakage'' model described in~\cite{arx_Tan23}; refer to that work for details of that model.)

First note that in the described model, we can write $\widetilde{\LEAKchann}$ as $(1-\dleak) \widetilde{\LEAKchann}^{\perp} + \dleak \widetilde{\LEAKchann}^\mathrm{leak}$ for some channels $\widetilde{\LEAKchann}^{\perp}$ and $\widetilde{\LEAKchann}^\mathrm{leak}$, where the former just outputs a fixed state $\pure{\perp}_L$.
Therefore, for any possible output state $\nu_{RL} \defvar \widetilde{\LEAKchann}(\omega_{\stabR \iniR})$ of the channel (with a purifying system $\stabR$), by writing $\nu'_{\stabR L} \defvar \widetilde{\LEAKchann}^\mathrm{leak}(\omega_{\stabR \iniR})$, we have
\begin{align}
\nu_{\stabR L} = (1-\dleak)\nu_{\stabR} \otimes \pure{\perp}_L + \dleak \nu'_{\stabR L}, \text{ where $\bra{\perp}\nu'_L\ket{\perp} = 0$ and $\nu'_{\stabR} = \nu_{\stabR}$},
\end{align}
where the $\nu'_{\stabR} = \nu_{\stabR}$ property follows from the fact that all these channels did not act on $\stabR$. With this we have a simple bound:
\begin{align}\label{eq:simplebound}
\IDA_\alpha(\stabR ;L)_\nu &\leq I_\alpha(\stabR :L)_\nu \nonumber\\ 
&= D_\alpha( (1-\dleak)\nu_{\stabR} \otimes \pure{\perp}_L + \dleak \nu'_{\stabR L} ||\nu_{\stabR}\otimes ((1-\dleak)\pure{\perp}_L + \dleak \nu'_{L})) \nonumber\\
&= \frac{1}{\alpha-1} \log \left( (1-\dleak) 2^{(\alpha-1)(0)} + \dleak 2^{(\alpha-1)D_\alpha(\nu'_{\stabR L} || \nu_{\stabR} \otimes \nu'_L)} \right) \nonumber\\
&= \frac{1}{\alpha-1} \log \left( (1-\dleak) + \dleak 2^{(\alpha-1)D_\alpha(\nu'_{\stabR L} || \nu'_{\stabR} \otimes \nu'_L)} \right) \nonumber\\
&\leq \frac{1}{\alpha-1} \log \left( (1-\dleak) + \dleak 2^{(\alpha-1) \CvsQ \log \min\{\dim(\stabR ),\dim(L)\}} \right) \quad \forall \alpha \in (1,3/2)
,
\end{align}  
where $\CvsQ\defvar1$ if either $\stabR$ or $L$ is classical and $\CvsQ\defvar2$ otherwise. In the above, the third line holds by~\eqref{eq:classmixD} (and the fact that all the terms with $\nu'$ have disjoint support from the terms with $\pure{\perp}_L$), the fourth line uses the critical property $\nu'_{\stabR} = \nu_{\stabR}$, and the last line is from~\cite[proof of Corollary III.5]{DF19}. (In our specific example we always have $\dim(\stabR )<\dim(L)$ so the minimum in the last line is quite redundant, but we leave it in this form for other potential applications.) While the dimension dependence might appear inconvenient, we note that if for instance the memory $\iniR$ only stores memory about a bounded number of previous rounds, then its dimension (and thus also that of the purifying system $\stabR$) can be bounded. Alternatively, as observed in works such as~\cite{Win16,arx_Tan23}, it should be possible to replace it with weaker conditions (such as an energy bound); we leave this for future work. 

While the above bound might look like a slightly complicated formula, it is easy to compute, and straightforwardly verified to go to zero as $\dleak\to0$, as we would expect. Slightly less straightforwardly, it is also decreasing as $\alpha\to1$, and in that limit it simply converges to 
\begin{align}\label{eq:vNbound}
\dleak \CvsQ \log \min\{\dim(\stabR ),\dim(L)\},
\end{align}
which is the upper bound we would get if we had considered the von Neumann mutual information $I(\stabR :L)_\nu$.\footnote{Interestingly, this bound is linear in $\dleak$, in contrast to previous bounds computed in~\cite[Sec.~4.1]{arx_Tan23} (which applied to the ``classical-probabilistic leakage'' scenario), which reduced to $\sup_{\omega}H(L)_{\widetilde{\LEAKchann}(\omega)}$ asymptotically. The latter is significantly worse, in that it is of order $O(\sqrt{\dleak})$ at small $\dleak$ and hence grows faster than any linear function there, i.e.~it is not Lipschitz continuous. It appears that by working with mutual information rather than entropies, we were able to exploit more structure in this particular leakage model to obtain better bounds. } This is consistent with the fact that in GEAT-based security proofs we usually take $\alpha\to1$ to obtain convergence to the asymptotic rates.

\begin{remark}\label{remark:CQ}
In the above bounds, we defined $\CvsQ$ in terms of whether $\stabR$ or $L$ is classical, but we highlight that whenever the memory register $\iniR$ is classical, we can also take $\stabR$ to be classical without loss of generality. The argument is fairly intuitive, though somewhat tedious. First observe that in the optimization~\eqref{eq:simplebnd}, a condition that $\iniR$ is classical can be equivalently reformulated by stating that the channel $\widetilde{\LEAKchann}$ always begins by applying a pinching channel $\mathcal{P}_{\iniR}$ in the classical basis of $\iniR$, and taking the supremum over all (fully quantum) $\omega_{\stabR \iniR}$ rather than only classical-quantum $\omega_{\stabR \iniR}$.\footnote{Strictly speaking, this formulation is needed anyway to rigorously formalize the constraint that $\iniR$ is classical, since our Theorem~\ref{thrm:GEATleakage} statement technically requires the optimization to be taken over arbitrary (fully quantum) input states.} We now take any arbitrary $\omega_{\stabR \iniR}$ attaining the supremum (this always exists, by compactness and continuity), and perform the following sequence of transformations to find another optimizer $\widetilde{\omega}_{\stabR \iniR}$ that is fully classical. First note that $\mathcal{P}_{\iniR}[\omega_{\stabR \iniR}]$ also attains the supremum, since $\mathcal{P}_{\iniR} \circ \mathcal{P}_{\iniR} = \mathcal{P}_{\iniR}$ and so we can always pinch the state \emph{before} supplying it to $\widetilde{\LEAKchann}$. Now, letting $\omega'_{\iniR}$ denote the reduced state of $\mathcal{P}_{\iniR}[\omega_{\stabR \iniR}]$ on $\iniR$, it must have the form $\omega'_{\iniR} = \sum_j \lambda_j \pure{j}_{\iniR}$ where $\ket{j}_{\iniR}$ is the classical basis, due to the pinching. By a standard data-processing argument, \emph{any} purification $\pure{\omega'}_{\stabR \iniR}$ of $\omega'_{\iniR}$ attains the supremum as well; in particular we can choose $\ket{\omega'}_{\stabR \iniR} \defvar \sum_j \sqrt{\lambda_j} \ket{jj}_{\stabR \iniR}$. Finally, again recalling that we can always pinch the initial state without affecting the results, we conclude that $\widetilde{\omega}_{\stabR \iniR} \defvar \mathcal{P}_{\iniR}[\pure{\omega'}_{\stabR \iniR}] = \sum_j \lambda_j \pure{jj}_{\stabR \iniR}$ also attains the supremum, and it is fully classical as desired (here we exploited the fact that for a state of the form $\sum_j \sqrt{\lambda_j} \ket{jj}_{\stabR \iniR}$, pinching on $\iniR$ alone is sufficient to also make it classical on $\stabR$).
\end{remark}

\begin{figure}
	\centering
	\subfloat{
		\includegraphics[width=0.49\textwidth]{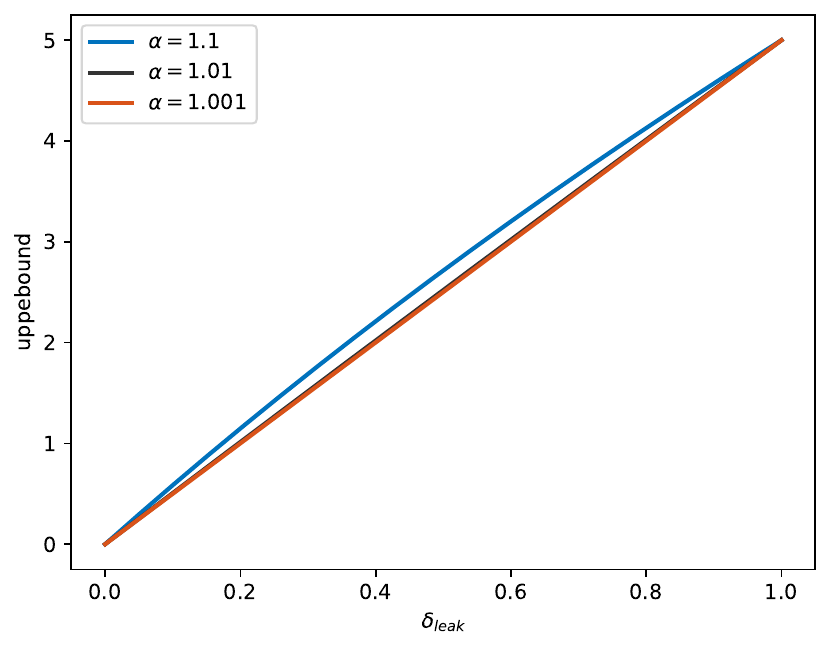}
	}
	\subfloat{
		\includegraphics[width=0.51\textwidth]{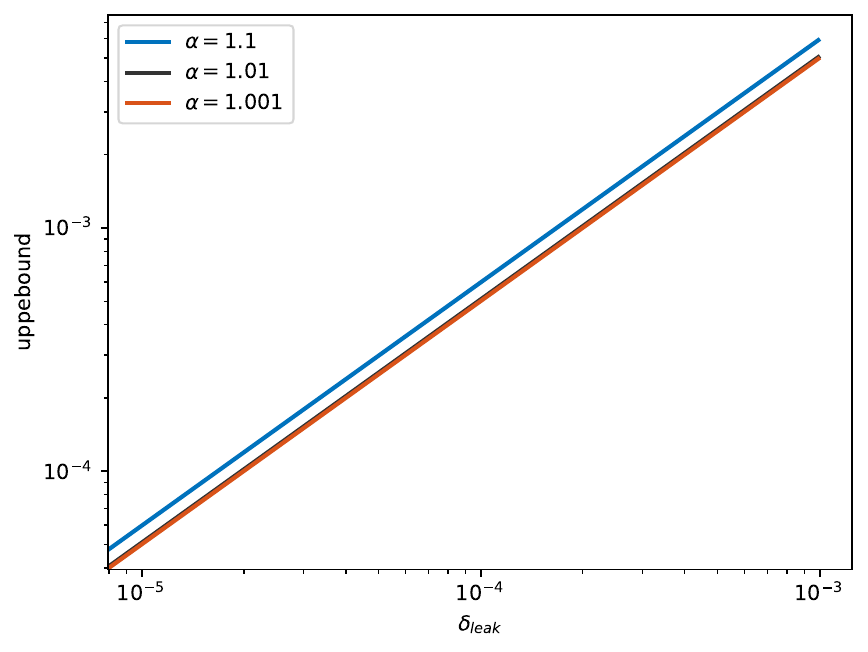}
	}
	\caption{Left: the upper bound on {\Renyi} mutual information in Eq~\eqref{eq:simplebound} versus leakage probability $\dleak$, for three different values of \Renyi\ parameter $\alpha\in\{1.1,1.01,1.001\}$, when the memory register is taken to be a classical register of $5$ bits (so $\dim(\stabR )=\dim(\iniR)=2^5$ and $\CvsQ=1$; see Remark~\ref{remark:CQ}). Right: the same plot focused on the small-$\dleak$ regime, and on log scales for the axes. We omit a plot of the von Neumann case (Eq.~\eqref{eq:vNbound}) because we found that it was visually indistinguishable from the $\alpha=1.001$ case; it can be seen from the plots that the $\alpha=1.01$ case is also already very close to this case. Note that existing GEAT security proofs usually choose $\alpha$ significantly smaller than $1.1$ at typical protocol sample sizes, and thus the $\alpha=1.1$ plots in the above figures serve as a conservative upper bound on how much the keyrates decrease when applying our techniques to account for leakage (under the model described in the main text).}
	\label{fig:infobound}
\end{figure}

To get a sense of how our bound in Eq.~\eqref{eq:simplebound} behaves, we plot it for some simple examples in Fig.~\ref{fig:infobound}. 
Those plots provide a complete description of how much the keyrates would decrease when applying our techniques to account for leakage under the above model, since our analysis states that the \emph{only} change to make compared to the case without leakage is to simply subtract the value in Eq.~\eqref{eq:simplebound} from the keyrates. While it is technically a function of the $\alpha$ value used in the security proof, note that the bound is nondecreasing in $\alpha$ and we found in~\cite{arx_KAG+24} that GEAT-based security proofs would usually take $\alpha<1.1$ in typical parameter regimes, so the $\alpha=1.1$ case in that figure provides a conservative upper bound on how much the keyrates would decrease in typical applications. 
We find that the bound should be reasonably tight for practical applications, and also converges to the von Neumann version quite quickly as $\alpha\to1$.

As briefly mentioned in Sec.~\ref{sec:Info_Acc}, an alternative to the above approach would be to first write $\IDA_\alpha(\stabR ;L)_\nu \leq I(\stabR :L)_\nu + O(\alpha-1)$ using the continuity bounds in~\cite{DFR20,DF19}, then bound the von Neumann mutual information $I(\stabR :L)_\nu$. However, this often has somewhat worse performance because for instance, at any \emph{fixed} $\alpha$ the resulting bound does not converge to zero even as $\dleak\to0$, due to the $O(\alpha-1)$ term. While it is true that one usually takes $\alpha\to1$ in the large-$n$ regime of GEAT-based security proofs, this difference can be significant at smaller $n$. 

Again, we highlight that the bounds we presented here are just a fairly simple starting demonstration --- since our results reduce the problem to studying an intuitive measure of correlation (the {\Renyi} mutual information) between the single-round leakage registers and the memory, there should be other ways to bound it. Another model worth studying is for instance if we just constrain $L$ to have a large vacuum component, as briefly noted above, and apply techniques similar to those in~\cite{arx_Tan23}.\footnote{Note however that if we formalize this task as e.g.~evaluating $\sup_{\nu}\IDA_\alpha(\stabR ;L)_{\nu} $ under \emph{only} the constraint of a dimension bound and a constraint $\bra{\phi}\nu\ket{\phi}_L \geq 1-\dleak$ where $\ket{\phi}_L$ denotes the vacuum state, then (focusing on the $\alpha=1$ asymptotic scenario) it is not hard to show that $\sup_{\nu}I(\stabR :L)_{\nu} \leq \sup_{\nu} 2 H(L)_{\nu}$ by taking purifications of $\nu_L$, and this bound is saturated whenever $\dim(\stabR)\geq\dim(L)$. (If we restrict to classical $L$, we instead get $\sup_{\nu}H(L)_{\nu}$.) 
As briefly mentioned above, this matches the asymptotic results for this model in~\cite{arx_Tan23}, which implies in particular that it unfortunately grows faster than any linear function of $\dleak$, unlike the bound~\eqref{eq:vNbound}. Still, there may perhaps be some room for refining the analysis by more closely studying the requirement that $\nu_{\stabR L}$ must always be some output state of a fixed channel. Also, this equality may not precisely generalize to the $\alpha>1$ case, though it should hold in some slightly modified form. We leave more detailed analysis for future work.} 
We leave a detailed exploration for future work.

\begin{remark}
A natural question regarding the above analysis is whether the dimension dependence can be removed. However, a fundamental obstacle that must be addressed in such contexts is the issue of the ``random full-leakage attack'' described in~\cite[Sec.~2.2]{arx_Tan23} (or~\cite[Appendix~E]{MD23}), which inherently renders a protocol completely insecure despite each round only ``behaving badly'' with probability $\dleak$. Any imperfection-robust security proof \emph{necessarily} has to impose enough assumptions to rule out that attack. In this case we chose to do so by assuming the dimensions of $L$ or $\stabR$ are bounded (so that~\eqref{eq:simplebound} is a nontrivial bound), but our general results highlight that in principle, any other reasonably plausible assumption that bounds the correlation between the single-round leakage register and other registers (specifically, as quantified by $\IDA_\alpha(\stabR ;L)$) would suffice. As mentioned above, another alternative to dimension bounds would be energy bounds, given suitable Hamiltonians; see~\cite{arx_Tan23,Win16}.
\end{remark}

\section{Conclusion}

Overall, in this work we have derived a variety of chain rules involving mutual information quantities, including both one-shot chain rules and {\Renyi} chain rules suitable for repeated application over a seequence of channels. In particular, the latter yield modifications of entropy accumulation that can accommodate some leakage, which are fairly straightforward to incorporate into security proofs. These results should be useful in rigorously quantifying the effects of device imperfections in protocol implementations. 

We also remark that the one-shot chain rules in Sec.~\ref{sec:Imax_chain} may potentially be relevant in the context of composable security~\cite{arx_PR21}. A full discussion of this topic is well beyond the scope of this work, but to informally summarize, the composable security of QKD protocols may encounter issues if additional information $L$ about the raw key is leaked after the protocol has completed~\cite{BCK13}. Our one-shot chain rules suggest that it might theoretically be possible to compensate for such additional leakage by shortening the output of the privacy amplification step in QKD~\cite{rennerthesis} by approximately $\ImaxDA[,\delta](SE;L)$ (or the potentially tighter quantity in Theorem~\ref{thrm:Hmin_Imax chain}), if that value can be bounded. In particular, this bound would likely be better than simply subtracting the log-dimension or smooth max-entropy of $L$, since $\ImaxDA[,\delta](SE;L)$ does at least reduce to zero in trivial scenarios where $L$ is in product with $SE$.
However, we leave a detailed analysis of whether such an approach truly yields composable security to future work. (For now we simply highlight that in particular, it likely does not work if the leakage ``directly depends on'' the \emph{final} key rather than the \emph{raw} key, e.g.~if the leakage is simply an exact copy of some bits in the final key.)

\section*{Acknowledgements}

We thank Mario Berta, Ian George, Srijita Kundu, Ashutosh Marwah, Renato Renner, Marco Tomamichel, and Mark Wilde for helpful feedback and discussions.
AA and EYZT conducted research at the Institute for Quantum Computing, at the University of Waterloo, which is supported by Innovation, Science, and Economic Development Canada. Support was also provided by NSERC under the Discovery Grants Program, Grant No. 341495.
TM acknowledges support from the ETH Z\"{u}rich Quantum Center, the SNSF QuantERA project (grant 20QT21\_187724), the AFOSR grant FA9550-19-1-0202, and an ETH Doc.Mobility Fellowship.

\appendix

\section{Incorporating entropy accumulation with testing}
\label{app:great}
In this appendix, we incorporate the chain rule we obtained in Sec.~\ref{sec:HandIbounds} into a full entropy accumulation statement with ``testing'' registers to estimate the accumulated entropy, to allow handling the leakage structure within that framework as well. To focus on the tightest existing bounds, we discuss only the ``fully {\Renyi}'' GEAT recently developed in~\cite{arx_AHT24} --- as shown in that work (Sec.~5.2), the resulting bounds can be easily relaxed to more closely correspond to earlier results based on von Neumann entropy and smooth min-entropy. 

For brevity, we only briefly state the required definitions, deferring detailed explanations to~\cite{arx_AHT24}. We also only state results based on our chain rule in Eq.~\eqref{eq:GEATleakage} rather than Eq.~\eqref{eq:GEATleakage diff}, as we believe that the {\Renyi} mutual information term resulting from the former will be easier to analyze in most circumstances; however, all the statements presented below would straightforwardly generalize to the latter version as well. (As previously discussed, the latter version may be a somewhat tighter bound in some situations, but the {\Renyi} CQMI term in it might not satisfy properties such as data-processing.)

\begin{definition}\label{def:freq}
(Frequency distributions) For a string $z_1^n\in\mathcal{Z}^n$ on some alphabet $\mathcal{Z}$, $\freq_{z_1^n}$ denotes the following probability distribution on $\mathcal{Z}$:
\begin{align}
\freq_{z_1^n}(z) \defvar \frac{\text{number of occurrences of $z$ in $z_1^n$}}{n} .
\end{align}
\end{definition}

\begin{definition}
	(Measure-and-prepare or read-and-prepare channels)
	A (projective) \term{measure-and-prepare channel} is a channel $\mathcal{E}: Q \to QQ'$ of the form
	\begin{align}
		\mathcal{E}[\rho_Q] = \sum_j (P_j \rho_Q P_j) \otimes \sigma_{Q'|j},
	\end{align}
	for some projective measurement $\{P_j\}$ on $Q$ and some normalized states $\sigma_{Q'|j}$.
	If $Q$ is classical and the measurement is a projective measurement in its classical basis, we shall refer to it as a \term{read-and-prepare channel}.
	Note that a read-and-prepare channel always simply extends the state ``without disturbing it'', i.e. tracing out $Q'$ results in the original state again.
\end{definition}
\begin{definition}\label{def:GEATTL}
$\{\EATchann_j\}_{j=1}^n$ is called a \term{sequence of GEAT-with-testing-and-leakage (GEATTL) channels} if each $\EATchann_j$ is a channel $R_{j-1}E_{j-1}\rightarrow S_jR_jE_j\CS_j\CP_j$ such that the output registers $\CS_j\CP_j$ are always classical (for any input states), and it can be written as $\EATchann_j=\mathcal{N}_j\circ\LEAKchann_j$, where $\LEAKchann_j: R_{j-1}\rightarrow \widehat{R}_{j-1}L_j$ and $\mathcal{N}_j: \widehat{R}_{j-1}L_jE_{j-1}\rightarrow S_jR_jE_j\CS_j\CP_j$, with $\mathcal{N}_j$ satisfying the following \term{non-signalling (NS) condition}:
\begin{align}
\exists \text{ a channel } \mathcal{R}_j: L_jE_{j-1}\rightarrow E_j\widehat{C}_j \text{ such that } \Tr_{S_jR_j\CS_j}\circ\mathcal{N}_j=\mathcal{R}_j\circ\Tr_{\widehat{R}_{j-1}}.
\end{align} 
We refer to the channels $\LEAKchann_j$ as the \term{leakage channels}. If a state $\rho \in \dop{=}(S_1^n \CS_1^n \CP_1^n E_n R_n)$ is of the form $\rho=\EATchann_n \circ \dots \circ \EATchann_1 [\omega^0]$
	(leaving some identity channels implicit) 
	for some initial state $\omega^0 \in \dop{=}(R_0 E_0)$, we say it is \term{generated by the sequence of GEATTL channels $\{\EATchann_j\}_{j=1}^n$}.
\end{definition}
Qualitatively, 
$S_j$ can be understood as (possibly quantum) registers that will be kept secret,  
$E_j$ represents some side-information that can be updated by each GEAT channel, and $R_j$ represents memory passed between the channels without being available as side-information except via ``leakage'' through the registers $L_j$. The classical registers $\CS_j$, $\CP_j$ respectively contain secret and public information that will be used to estimate the accumulated entropy. The NS condition is basically the statement that after the leakage channel is applied,  there is no further ``signalling'' of information from the memory register $\widehat{R}_{j-1}$ to the side-information registers $E_j\widehat{C}_j$, in the sense that if we consider the term $\mathcal{R}_j\circ\Tr_{\widehat{R}_{j-1}}$ on the right-hand-side, it is a channel that outputs the ``correct'' reduced state on the side-information $E_j\widehat{C}_j$ (namely, the state produced by $\Tr_{S_jR_j\CS_j}\circ\mathcal{N}_j$) despite first tracing out $\widehat{R}_{j-1}$.

\begin{definition}\label{def:QES}
	Let $\rho \in \dop{=}(\CS \CP Q Q')$ be a state where $\CS$ and $\CP$ are classical with alphabets $\alphCS$ and $\alphCP$ respectively. A \term{quantum estimation score-system (QES) on $\CS \CP$} is simply a function $f:\alphCS \times \alphCP \to \mathbb{R}$; equivalently, we may denote it as a real-valued tuple $\mbf{f}
	\in \mathbb{R}^{|\alphCS \times \alphCP|}$ where each term in the tuple specifies the value $f(\cS \cP)$. Given a QES $f$ and a value $\alpha\in(0,1)\cup (1,\infty)$, the \term{QES-entropy of order $\alpha$} for $\rho$ is defined as
	\begin{align}\label{eq:QESdefn}
		H^{f}_\alpha(Q\CS|\CP Q')_{\rho} &\defvar \frac{1}{1-\alpha} \log \left( \sum_{\cS \cP} \rho(\cS \cP)^\alpha \rho(\cP)^{1-\alpha} \, 2^{(1-\alpha) \left(-f(\cS \cP) - D_\alpha\left(\rho_{QQ'|\cS\cP} \middle\Vert \id_{Q} \otimes \rho_{Q'|\cP} \right)\right) } \right) \nonumber\\
		&=\frac{1}{1-\alpha} \log \left( \sum_{\cS \cP} \rho(\cS \cP) 2^{(1-\alpha) \left(-f(\cS \cP) - D_\alpha\left(\rho_{QQ' \land \cS\cP} \middle\Vert \id_{Q} \otimes \rho_{Q' \land \cP} \right)\right) } \right) \nonumber\\
		&= \frac{1}{1-\alpha} \log \left( \sum_{\cS \cP}  
		2^{-(1-\alpha)f(\cS \cP)} 
		\Tr \left[\left(\left(
		\rho_{Q' \land \cP}\right)^{\frac{1-\alpha}{2\alpha}}\rho_{QQ' \land \cS\cP}\left(
		\rho_{Q' \land \cP}\right)^{\frac{1-\alpha}{2\alpha}}\right)^\alpha\right]
		\right)
		,
	\end{align}
	where the sum is over all $\cS\cP$ values such that $\rho(\cS\cP)>0$, and we leave some tensor factors of identity implicit in the last expression.
\end{definition}

Qualitatively, the QES values $f(\cS \cP)$ in the above definition are ``scores'' that serve to estimate the entropy of the state; however, the concept is fairly elaborate and we defer to~\cite{arx_AHT24} for detailed discussions and intuitive explanations.
We now state the following theorem which is a modification of~\cite[Theorem~4.1]{arx_AHT24} in order to incorporate leakage structure.
\begin{theorem}[QES-entropy accumulation with leakage]\label{thrm:GREATleakage}
	Let $\rho$ be a state generated by a sequence of GEATTL channels $\{\EATchann_j\}_{j=1}^n$ (Definition~\ref{def:GEATTL}), so they have the form $\EATchann_j=\mathcal{N}_j\circ\LEAKchann_j$. For each j, suppose that for every value $\cS_1^{j-1}\cP_1^{j-1}$, we have a QES $f_{|\cS_1^{j-1} \cP_1^{j-1}}$ on registers $\CS_j\CP_j$. Define the following QES on $\CS_1^n\CP_1^n$:
	\begin{align}\label{fullQES}
		f_\mathrm{full}(\cS_1^n \cP_1^n) \defvar \sum_{j=1}^n f_{|\cS_1^{j-1} \cP_1^{j-1}}(\cS_j \cP_j).
	\end{align}
	Take any $\alpha \in (1,2)$
	and let $\widehat{\alpha}=1/(2-\alpha)$.
	Then we have
	\begin{align}\label{eq:chainQES}
		H^{f_\mathrm{full}}_{2-\frac{1}{\alpha}}(S_1^n \CS_1^n | \CP_1^n E_n)_\rho &\geq \sum_j \left(\min_{\cS_1^{j-1} \cP_1^{j-1}} \kappa_{\cS_1^{j-1} \cP_1^{j-1}}-\xi_j\right), \notag\\ 
		\text{where} \quad \kappa_{\cS_1^{j-1} \cP_1^{j-1}} &\defvar \inf_{\nu\in\Sigma^\mathcal{N}_j} H^{f_{|\cS_1^{j-1} \cP_1^{j-1}}}_{\widehat{\alpha}}(S_j \CS_j | \CP_j E_j \widetilde{E})_{\nu},\\
		\xi_j &\defvar\sup_{\nu\in\Sigma^\LEAKchann_j} \IDA_\alpha(\stabR ;L_j)_\nu,\notag
	\end{align}
	where $\Sigma_j^\mathcal{N}$ denotes the set of all states of the form $\mathcal{N}_j\left[\omega_{\widehat{R}_{j-1} L_j E_{j-1} \widetilde{E}}\right]$ for some initial state $\omega \in \dop{=}(\widehat{R}_{j-1} L_j E_{j-1} \widetilde{E})$, with $\widetilde{E}$ being a register of large enough dimension to serve as a purifying register for any of the $\widehat{R}_{j-1} L_j E_{j-1} $ registers, and $\Sigma_j^\LEAKchann$ denotes the set of all states of the form $\LEAKchann_j\left[\omega''_{R_{j-1}\stabR}\right]$ for some initial state $\omega'' \in \dop{=}(R_{j-1}\stabR )$, with $\stabR$ being a register of large enough dimension to serve as a purifying register for any of the $\widehat{R}_{j-1}$ registers.
	
	Consequently, if we instead define the following ``normalized'' QES on $\CS_1^n \CP_1^n$:
	\begin{align}\label{eq:fullQESnorm}
		\hat{f}_\mathrm{full}(\cS_1^n \cP_1^n) \defvar \sum_{j=1}^n \hat{f}_{|\cS_1^{j-1} \cP_1^{j-1}}(\cS_j \cP_j), \quad\text{where } \hat{f}_{|\cS_1^{j-1} \cP_1^{j-1}}(\cS_j \cP_j) \defvar f_{|\cS_1^{j-1} \cP_1^{j-1}}(\cS_j \cP_j) + \kappa_{\cS_1^{j-1} \cP_1^{j-1}} - \xi_j,
	\end{align}
	then
	\begin{align}\label{eq:chainQESnorm}
		H^{\hat{f}_\mathrm{full}}_{2-\frac{1}{\alpha}}(S_1^n \CS_1^n | \CP_1^n E_n)_\rho \geq 0.
	\end{align}
	\begin{proof}
		Let $M>0$ be any value such that $M - f_{|\cS_1^{j-1} \cP_1^{j-1}}(\cS_j \cP_j) > \frac{M}{2} > 0$ for all the $f_{|\cS_1^{j-1} \cP_1^{j-1}}(\cS_j \cP_j)$ in the theorem statement (for all $j$). 
		Define a read-and-prepare channel $\mathcal{D}_j:\CS_1^j \CP_1^j \to \CS_1^j \CP_1^j D_j$ for each $j$, the same way as the proof of~\cite[Theorem~4.1]{arx_AHT24}.
		Let $\mathcal{E}_j: \widehat{R}_{j-1} L_j E_{j-1}  \CS_1^{j-1} \CP_1^{j-1} \to D_j \copyCS_j S_j R_j E_j \CS_1^j \CP_1^j$ denote a channel that does the following:
		\begin{enumerate}
			\item Apply $\mathcal{N}_j \otimes \mathcal{P}_j$, where $\mathcal{P}_j$ is a pinching channel on $\CS_1^{j-1} \CP_1^{j-1}$ (in its classical basis). 
			\item Copy the classical register $\CS_j$ onto another classical register $\copyCS_j$.
			\item Generate a $D_j$ register by applying the above read-and-prepare channel $\mathcal{D}_j$ on $\CS_1^j \CP_1^j$.
		\end{enumerate}
		
		As argued in the~\cite[Theorem~4.1]{arx_AHT24} proof, these channels $\mathcal{E}_j$ inherit the NS condition of the $\mathcal{N}_j$ channels, in that they are non-signalling from $\widehat{R}_{j-1}\CS_1^{j-1}$ to $E_j\widehat{C}_j$. Also, the state $\mathcal{E}_n \circ \LEAKchann_n \circ \dots \mathcal{E}_1 \circ \LEAKchann_1 [\omega^0]$ is a valid extension of the state $\rho$ in the theorem statement, so we can use $\rho$ to refer to this state as well without ambiguity. By now applying  Theorem~\ref{thrm:GEATleakage} to this state (since the $\mathcal{E}_j$ channels satisfy the appropriate NS condition), we obtain:
		\begin{align}\label{eq:GEATbound_notest}
			H_\alpha^\uparrow(D_1^nS_1^n\copyCS_1^n|\CP_1^nE_n)_\rho \geq \sum_j \left(\inf_{\nu'\in\Sigma_j^\mathcal{E}}H_{\widehat{\alpha}}(D_jS_j\copyCS_j|\CP_1^jE_j\widetilde{E})_{\nu'}-\sup_{\nu'\in\Sigma_j^\LEAKchann}\IDA_\alpha(\stabR ;L_j)_{\nu'}\right)
		\end{align}
		where $\Sigma_j^\mathcal{E}$ is the set of all states that could be produced by $\mathcal{E}_j$ acting on some initial state $\omega'_{\widehat{R}_{j-1}L_jE_{j-1}\CS_1^{j-1}\CP_1^{j-1}\widetilde{E}}$, and $\Sigma_j^\LEAKchann$ is the set of all states that could be produced by $\LEAKchann_j$ for some initial state $\omega''_{R_{j-1}\stabR}$.
		Furthermore, by the bound $H_{2-\frac{1}{\alpha}} \geq H_{\alpha}^\uparrow$~\cite[Corollary~4]{TBH14}, this immediately also gives\footnote{We perform this conversion because the~\cite[Theorem~4.1]{arx_AHT24} proof works mainly with the $H_{\alpha}$ entropies rather than the $H_{\alpha}^\uparrow$ entropies, due to some technicalities in the QES-entropy definition; see~\cite[Remark~4.1]{arx_AHT24}. It may potentially be avoidable by defining a suitable $H_{\alpha}^\uparrow$ version of QES-entropies as discussed in that remark (equivalently, a slight generalization of the concept of \term{$f$-weighted {\Renyi} entropies}), but we leave this for future work.}
		\begin{align}
		H_{2-\frac{1}{\alpha}}(D_1^nS_1^n\copyCS_1^n|\CP_1^nE_n)_\rho \geq \sum_j \left(\inf_{\nu'\in\Sigma_j^\mathcal{E}}H_{\widehat{\alpha}}(D_jS_j\copyCS_j|\CP_1^jE_j\widetilde{E})_{\nu'}-\sup_{\nu'\in\Sigma_j^\LEAKchann}\IDA_\alpha(\stabR ;L_j)_{\nu'}\right).
		\end{align}
		The claim then follows by applying the remaining arguments in the~\cite[Theorem~4.1]{arx_AHT24} proof to relate $			H_{2-\frac{1}{\alpha}}(D_1^nS_1^n\copyCS_1^n|\CP_1^nE_n)_\rho$ and $H_{\widehat{\alpha}}(D_jS_j\copyCS_j|\CP_1^jE_j\widetilde{E})_{\nu'}$ to $H^{f_\mathrm{full}}_{2-\frac{1}{\alpha}}(S_1^n \CS_1^n | \CP_1^n E_n)_\rho$ and  $H^{f_{|\cS_1^{j-1} \cP_1^{j-1}}}_{\widehat{\alpha}}(S_j \CS_j | \CP_j E_j \widetilde{E})_{\nu}$ respectively.
	\end{proof}
\end{theorem}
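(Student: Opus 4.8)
The plan is to reduce the statement to the ``fully {\Renyi}'' GEAT with testing of~\cite[Theorem~4.1]{arx_AHT24}, combined with the leakage chain rule of Theorem~\ref{thrm:GEATleakage}. First I would absorb the testing machinery into the non-signalling channels exactly as in the~\cite{arx_AHT24} proof: fix a constant $M>0$ dominating all the QES values $f_{|\cS_1^{j-1}\cP_1^{j-1}}(\cS_j\cP_j)$ (with $M-f>\tfrac{M}{2}>0$), introduce read-and-prepare channels $\mathcal{D}_j:\CS_1^j\CP_1^j\to\CS_1^j\CP_1^jD_j$ that encode the shifted scores into an auxiliary register $D_j$, and define $\mathcal{E}_j$ as the composite of $\mathcal{N}_j$ (tensored with a pinching on $\CS_1^{j-1}\CP_1^{j-1}$), a copy of $\CS_j$ onto $\copyCS_j$, and $\mathcal{D}_j$. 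The key structural check is that each $\mathcal{E}_j$ inherits the NS condition of $\mathcal{N}_j$, now in the form of being non-signalling from $\widehat{R}_{j-1}\CS_1^{j-1}$ to $E_j\widehat{C}_j$: this holds because the added steps (pinching, copying, read-and-prepare) act only on classical registers that sit on the side-information side. Consequently $\{\mathcal{E}_j\circ\LEAKchann_j\}_{j}$ is a sequence of the type handled by Theorem~\ref{thrm:GEATleakage}, and the composite state $\mathcal{E}_n\circ\LEAKchann_n\circ\cdots\circ\mathcal{E}_1\circ\LEAKchann_1[\omega^0]$ is a valid extension of $\rho$.

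Next I would apply Theorem~\ref{thrm:GEATleakage} round by round to this new sequence (this is the iteration underlying the GEAT, cf.~Fact~\ref{fact:channel_chain} together with the mutual-information chain rules of Sec.~\ref{sec:HandIbounds}). Each round then contributes a single-round entropy term $\inf_{\nu'\in\Sigma_j^{\mathcal{E}}}H_{\widehat{\alpha}}(D_jS_j\copyCS_j|\CP_1^jE_j\widetilde{E})_{\nu'}$ coming from the NS channel, minus a leakage correction $\sup_{\nu'\in\Sigma_j^{\LEAKchann}}\IDA_\alpha(\stabR;L_j)_{\nu'}$ coming from $\LEAKchann_j$ via the $\IDA_\alpha$-based form~\eqref{eq:GEATleakage} of that theorem; summing gives a lower bound on $H_\alpha^\uparrow(D_1^nS_1^n\copyCS_1^n|\CP_1^nE_n)_\rho$. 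I would then pass to the order compatible with the QES-entropy definition using $H_{2-1/\alpha}\geq H_\alpha^\uparrow$ from~\cite[Corollary~4]{TBH14}. At this point the leakage terms are already exactly the $\xi_j=\sup_{\nu\in\Sigma_j^{\LEAKchann}}\IDA_\alpha(\stabR;L_j)_\nu$ of the statement, so no further work is needed on the leakage side.

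Finally I would invoke the remaining steps of the~\cite[Theorem~4.1]{arx_AHT24} proof essentially verbatim to turn the conditional-entropy bound into the QES-entropy bound~\eqref{eq:chainQES}: these are the arguments that relate $H_{2-1/\alpha}(D_1^nS_1^n\copyCS_1^n|\CP_1^nE_n)_\rho$ to $H_{2-1/\alpha}^{f_\mathrm{full}}(S_1^n\CS_1^n|\CP_1^nE_n)_\rho$ through the structure of the $\mathcal{D}_j$ channels, and that identify $\inf_{\nu'\in\Sigma_j^{\mathcal{E}}}H_{\widehat{\alpha}}(D_jS_j\copyCS_j|\CP_1^jE_j\widetilde{E})_{\nu'}$ with $\min_{\cS_1^{j-1}\cP_1^{j-1}}\kappa_{\cS_1^{j-1}\cP_1^{j-1}}$. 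The normalized statement~\eqref{eq:chainQESnorm} is then immediate: $\hat f_\mathrm{full}$ differs from $f_\mathrm{full}$ by subtracting, in each round $j$, exactly the quantity $\kappa_{\cS_1^{j-1}\cP_1^{j-1}}-\xi_j$ that appears on the right-hand side of~\eqref{eq:chainQES}, and the QES-entropy $H^{f}_{2-1/\alpha}$ shifts additively under a shift of $f$.

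I expect the main obstacle to be bookkeeping rather than anything conceptual: one must verify carefully that inserting the leakage channels $\LEAKchann_j$ (which act \emph{before} $\mathcal{N}_j$ and only on the memory register $R_{j-1}$) does not interfere with the testing and NS structure that makes the~\cite{arx_AHT24} machinery applicable, and that the purifying registers $\widetilde{E}$ and $\stabR$ can be chosen uniformly large enough to serve all rounds simultaneously. A secondary subtlety is tracking that the dependence of each round's QES $f_{|\cS_1^{j-1}\cP_1^{j-1}}$ on earlier testing outcomes is handled by the pinching channels built into the $\mathcal{E}_j$, so that the $\min$ over $\cS_1^{j-1}\cP_1^{j-1}$ in~\eqref{eq:chainQES} arises correctly from the infimum over inputs to $\mathcal{E}_j$.
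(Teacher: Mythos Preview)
Your proposal is correct and follows essentially the same route as the paper's proof: construct the augmented channels $\mathcal{E}_j$ via the $M$-shifted read-and-prepare channels $\mathcal{D}_j$, pinching, and the $\copyCS_j$ copy exactly as in~\cite[Theorem~4.1]{arx_AHT24}; verify they inherit the NS condition; apply Theorem~\ref{thrm:GEATleakage} iteratively to get the $H_\alpha^\uparrow$ bound with the $\xi_j$ leakage corrections; convert to $H_{2-1/\alpha}$ via~\cite[Corollary~4]{TBH14}; and then defer to the remaining~\cite{arx_AHT24} arguments to pass to QES-entropies. One small slip: in your last paragraph, $\hat f_\mathrm{full}$ \emph{adds} $\kappa_{\cS_1^{j-1}\cP_1^{j-1}}-\xi_j$ to $f_\mathrm{full}$ in each round (not subtracts), and the normalized bound~\eqref{eq:chainQESnorm} follows by reapplying~\eqref{eq:chainQES} with $\hat f$ in place of $f$, noting that the resulting $\hat\kappa_{\cS_1^{j-1}\cP_1^{j-1}}$ equals $\xi_j$.
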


We note that since the term corresponding to the leakage is independent of any choice of QES in that theorem, all the subsequent simplifications and variations of the above theorem that were discussed in~\cite{arx_AHT24} are still valid if we use GEATTL channels instead, after subtracting $\sum_j \xi_j$ to compensate for the leakage. In particular, we highlight the following core result that one would obtain, as a modification of~\cite[Theorem~5.1]{arx_AHT24}. (In the following bound, a small technicality is that in the $h_{\widehat{\alpha}}$ term we take the minimum\footnote{For readers already familiar with~\cite{arx_AHT24}, we remark that for ease of presentation here we have simply expressed the $h_{\widehat{\alpha}}$ term with a minimization over $j$ rather than introducing the concept of rate-bounding channels. As discussed below~\cite[Definition~5.1]{arx_AHT24}, these perspectives are essentially equivalent, since \emph{one} of the channels in the sequence always yields a rate-bounding channel.} over $j$ (after which it is multiplied by $n$), whereas in contrast, the $\xi_j$ terms are simply summed over. This is because our preceding analysis has already ``extracted'' the $\sum_j \xi_j$ correction to account for leakage, and this is then independent of all subsequent aspects of the analysis, in particular the need to minimize over $j$ when computing the $h_{\widehat{\alpha}}$ term in the~\cite{arx_AHT24} analysis.)

\begin{theorem}\label{th:GREAT}
Let $\rho$ be a state generated by a sequence of GEATTL channels $\{\EATchann_j\}_{j=1}^n$ (Definition~\ref{def:GEATTL}), so they have the form $\EATchann_j=\mathcal{N}_j\circ\LEAKchann_j$. Suppose all the registers $\CS_j$ (resp.~$\CP_j$) are isomorphic to a single register $\CS$ (resp.~$\CP$) with alphabet $\alphCS$ (resp.~$\alphCP$).
Take any $\alpha \in (1,2)$
and let $\widehat{\alpha}=1/(2-\alpha)$.
Suppose furthermore that $\rho = p_\Omega \rho_{|\Omega} + (1-p_\Omega) \rho_{|\overline{\Omega}}$ for some $p_\Omega \in (0,1]$ and normalized states $\rho_{|\Omega},\rho_{|\overline{\Omega}}$. 
Let $S_\Omega$ be a convex set of probability distributions on the alphabet $\alphCS \times \alphCP$, such that for all $\cS_1^n \cP_1^n$ with nonzero probability in $\rho_{|\Omega}$, the frequency distribution $\freq_{\cS_1^n \cP_1^n}$ lies in $S_\Omega$.
Then letting $\bsym{\sigma}_{\CS\CP}$ denote the distribution on $\CS\CP$ induced by any state $\sigma_{\CS\CP}$, we have 
\begin{align}\label{eq:GREAT}
\begin{gathered}
H^\uparrow_\alpha(S_1^n \CS_1^n | \CP_1^n E_n)_{\rho_{|\Omega}} \geq n h_{\widehat{\alpha}} - \left(\sum_j \xi_j\right)
- \frac{\alpha}{\alpha-1} \log\frac{1}{p_\Omega},\\
\text{where}\quad h_{\widehat{\alpha}} = \min_j \inf_{\mbf{q} \in S_\Omega} \inf_{\nu\in\Sigma_j^\mathcal{N}} \left( \frac{1}{\widehat{\alpha}-1}D\left(\mbf{q} \middle\Vert \bsym{\nu}_{\CS\CP}\right)-\sum_{\cS\cP\in \supp(\bsym{\nu}_{\CS\CP})}q(\cS\cP)D_{\widehat{\alpha}}\left(\nu_{S_j E_j \widetilde{E} \land \cS\cP} \middle\Vert \id_{S_j} \otimes\nu_{E_j\widetilde{E} \land \cP} \right) \right),
\end{gathered}
\end{align}
where $\xi_j$ is as defined in~\eqref{eq:chainQES}, and $\Sigma_j^\mathcal{N}$ denotes the set of all states of the form $\mathcal{N}_j\left[\omega_{\widehat{R}_{j-1} L_j E_{j-1} \widetilde{E}}\right]$ for some initial state $\omega \in \dop{=}(\widehat{R}_{j-1} L_j E_{j-1} \widetilde{E})$, with $\widetilde{E}$ being a register of large enough dimension to serve as a purifying register for any of the $\widehat{R}_{j-1} L_j E_{j-1} $ registers.
\end{theorem}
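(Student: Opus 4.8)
The plan is to obtain Theorem~\ref{th:GREAT} as a corollary of the QES-entropy accumulation bound in Theorem~\ref{thrm:GREATleakage}, by specializing the choice of QES and then importing, essentially verbatim, the argument used to pass from~\cite[Theorem~4.1]{arx_AHT24} to~\cite[Theorem~5.1]{arx_AHT24}. The structural observation that makes this straightforward is that in the bound~\eqref{eq:chainQES} the leakage correction $\sum_j \xi_j$ does not depend on the QES at all: it is fixed once the channels $\LEAKchann_j$ are given. Hence the $\xi_j$ terms can simply be ``set aside'', and all subsequent manipulations --- which only touch the $\kappa$ and QES-related terms --- can be performed exactly as in the leakage-free setting, treating the $\mathcal{N}_j$ as the GEAT channels and the $\CS_j, \CP_j$ as the testing registers, with $-\sum_j \xi_j$ reattached at the very end.

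Concretely, I would first choose, for every round $j$ and every conditioning string $\cS_1^{j-1}\cP_1^{j-1}$, the same QES $f$, taken to be affine in the single-round distribution on $\CS_j\CP_j$ with slope given by the optimal dual vector of the convex program over $\mbf{q}\in S_\Omega$ that defines $\min_j h_{\widehat{\alpha},j}$; this is precisely the ``rate-bounding channel'' construction of~\cite[Sec.~5]{arx_AHT24}, where the channel realizing the minimum over $j$ supplies the dual vector. Plugging this affine $f$ into the QES-entropy definition~\eqref{eq:QESdefn} and optimizing over $\nu\in\Sigma_j^{\mathcal{N}}$, the single-round terms $\kappa_{\cS_1^{j-1}\cP_1^{j-1}}$ in~\eqref{eq:chainQES} reduce to the affine lower bounds on the per-round rate, and the quantity $h_{\widehat{\alpha},j}$ written in the theorem is exactly what this linearization yields after also taking the infimum over $\mbf{q}\in S_\Omega$. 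I would then apply Theorem~\ref{thrm:GREATleakage} with the associated normalized QES $\hat{f}_\mathrm{full}$ of~\eqref{eq:fullQESnorm}, which by~\eqref{eq:chainQESnorm} gives $H^{\hat{f}_\mathrm{full}}_{2-1/\alpha}(S_1^n\CS_1^n|\CP_1^n E_n)_\rho\ge 0$.

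The final step is to invoke the event-conditioning part of the~\cite[Theorem~5.1]{arx_AHT24} proof. Since $\rho = p_\Omega \rho_{|\Omega} + (1-p_\Omega)\rho_{|\overline{\Omega}}$ and every $\cS_1^n\cP_1^n$ with nonzero probability in $\rho_{|\Omega}$ has $\freq_{\cS_1^n\cP_1^n}\in S_\Omega$, the affine form of $f$ together with the convexity of $S_\Omega$ bounds the QES value on those strings in terms of $n\big(\min_j h_{\widehat{\alpha},j}\big)$; unwinding the QES-entropy inequality~\eqref{eq:chainQESnorm} restricted to the sub-event $\Omega$ then converts it into a lower bound on the ordinary $H^\uparrow_\alpha(S_1^n\CS_1^n|\CP_1^n E_n)_{\rho_{|\Omega}}$, producing the rate term $n\big(\min_j h_{\widehat{\alpha},j}\big)$, the penalty $-\frac{\alpha}{\alpha-1}\log\frac{1}{p_\Omega}$ for conditioning on an event of probability $p_\Omega$, and the leakage term $-\sum_j\xi_j$ carried over unchanged --- that is, exactly~\eqref{eq:GREAT}.

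I expect the main obstacle to be purely organizational: one must check that the interface between our Theorem~\ref{thrm:GREATleakage} and the~\cite{arx_AHT24} machinery is clean --- in particular that the read-and-prepare channels $\mathcal{E}_j$, pinching steps, and $\copyCS_j$-copying used inside the proof of Theorem~\ref{thrm:GREATleakage} are compatible with the rate-bounding-channel reformulation, and, the one genuinely delicate point the theorem statement flags, that the asymmetry between the $\min_j$ over the $h_{\widehat{\alpha},j}$ terms and the $\sum_j$ over the $\xi_j$ terms is correct. The resolution of the latter is that the $\xi_j$ were already extracted and summed at the level of~\eqref{eq:chainQES}, before and independently of the QES linearization, whereas the $h_{\widehat{\alpha},j}$ contributions only acquire their $n\min_j$ form once a single common affine QES is fixed for all rounds. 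With these compatibility points verified, no estimates beyond those already in~\cite{arx_AHT24} are required.
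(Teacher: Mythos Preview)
Your overall strategy --- set aside the leakage correction $\sum_j\xi_j$ (since it is QES-independent), then run the passage from~\cite[Theorem~4.1]{arx_AHT24} to~\cite[Theorem~5.1]{arx_AHT24} on the $\mathcal{N}_j$ channels alone --- is exactly the paper's approach. However, there is one genuine gap in your proposed execution, and the paper flags it explicitly.

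You propose to start from the \emph{final} statement of Theorem~\ref{thrm:GREATleakage}, i.e.\ from the bound $H^{\hat f_\mathrm{full}}_{2-1/\alpha}(S_1^n\CS_1^n|\CP_1^n E_n)_\rho\ge 0$, and then ``unwind'' to a lower bound on $H^\uparrow_\alpha(S_1^n\CS_1^n|\CP_1^n E_n)_{\rho_{|\Omega}}$. But the QES-entropy in~\eqref{eq:chainQESnorm} carries {\Renyi} parameter $2-1/\alpha$, not $\alpha$: this parameter shift was introduced in the proof of Theorem~\ref{thrm:GREATleakage} precisely via the step $H_{2-1/\alpha}\ge H_\alpha^\uparrow$ (from~\cite[Corollary~4]{TBH14}), because the QES machinery of~\cite[Theorem~4.1]{arx_AHT24} needs the non-$\uparrow$ version. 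Unwinding~\eqref{eq:chainQESnorm} therefore yields a lower bound on an entropy with parameter $2-1/\alpha$, and since $2-1/\alpha<\alpha$ for $\alpha\in(1,2)$ the monotonicity goes the wrong way to recover $H_\alpha^\uparrow$.

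The fix, which the paper's proof spells out, is not to start from Theorem~\ref{thrm:GREATleakage} itself but from the intermediate bound~\eqref{eq:GEATbound_notest} established inside its proof, which still has $H_\alpha^\uparrow(D_1^nS_1^n\copyCS_1^n|\CP_1^nE_n)_\rho$ on the left-hand side before the parameter conversion. The argument of~\cite[Theorem~5.1]{arx_AHT24} can work directly with the $H_\alpha^\uparrow$ version (unlike that of~\cite[Theorem~4.1]{arx_AHT24}), so running it from~\eqref{eq:GEATbound_notest} gives the claimed bound on $H_\alpha^\uparrow$ with no parameter loss. With this one change, the rest of your plan goes through.
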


\begin{proof}
As mentioned above, the proof of this result follows by simply applying the same chain of arguments used to prove~\cite[Theorem~5.1]{arx_AHT24}, as none of them further relied on the NS conditions. A small technical difference is that rather than starting from the final Theorem~\ref{thrm:GREATleakage} statement here, in this case we would instead start from the intermediate bound~\eqref{eq:GEATbound_notest} in the above proof, to avoid the change of {\Renyi} parameter (the proof of~\cite[Theorem~5.1]{arx_AHT24} can work directly with $H_\alpha^\uparrow(D_1^nS_1^n\copyCS_1^n|\CP_1^nE_n)_\rho$ rather than $H_\alpha(D_1^nS_1^n\copyCS_1^n|\CP_1^nE_n)_\rho$, unlike the proof of~\cite[Theorem~4.1]{arx_AHT24}).
\end{proof}

The various lower bounds on $h_{\widehat{\alpha}}$ described in~\cite[Lemmas~5.1--5.2]{arx_AHT24} would still be valid as well, along with the convexity properties of the optimization in the $h_{\widehat{\alpha}}$ formula after introducing ``purifying functions''.

\printbibliography

\end{document}